\documentclass[12pt]{amsart}
\usepackage[top=1.25in,bottom=1in,left=1in,right=1in]{geometry}
\usepackage[utf8]{inputenc} 
\usepackage{amsmath}
\usepackage{amssymb}
\usepackage{amsthm}
\usepackage{bbm}
\usepackage{xcolor}
\usepackage{graphicx}
\usepackage{float}
\usepackage{enumerate}
\usepackage{natbib}
\setlength{\bibsep}{0.0pt}
\usepackage{setspace,lipsum}
\usepackage{bm}

\usepackage{mathrsfs}
\usepackage{comment}
\usepackage{apptools}
\usepackage{url}
\newcommand{\RN}[1]{  \textup{\uppercase\expandafter{\romannumeral#1}}}

\usepackage{booktabs}

\usepackage[subrefformat=parens]{subcaption}
\usepackage{mathtools}

\newtheorem{claim}{Claim}

\newtheorem{proposition}{Proposition}
\newtheorem{theorem}{Theorem}
\newtheorem{corollary}{Corollary}
\newtheorem{lemma}{Lemma}

\newtheorem{observation}{Observation}
\theoremstyle{remark}
\newtheorem{example}{Example}
\theoremstyle{remark}

\newtheorem{definition}{Definition}

\usepackage{etoolbox}
\makeatletter
\patchcmd{\@maketitle}
  {\ifx\@empty\@dedicatory}
  {\ifx\@empty\@date \else {\vskip3ex \centering\footnotesize\@date\par\vskip1ex}\fi
   \ifx\@empty\@dedicatory}
  {}{}
\patchcmd{\@adminfootnotes}
  {\ifx\@empty\@date\else \@footnotetext{\@setdate}\fi}
  {}{}{}
\makeatother

\makeatother

\AtAppendix{
\numberwithin{equation}{section}
\numberwithin{definition}{section}
\numberwithin{theorem}{section}
\numberwithin{proposition}{section}
\numberwithin{lemma}{section}
\numberwithin{corollary}{section}}

\DeclareMathOperator*{\argmin}{arg\,min}
\newcommand{\indep}{\perp \!\!\! \perp}

\linespread{1.4}
\allowdisplaybreaks

\title[Algorithm Design: A Fairness-Accuracy Frontier]{Algorithm Design:\\ A Fairness-Accuracy Frontier}
\thanks{We thank Nageeb Ali, Eric Auerbach, Simon Board, Krishna Dasaratha, Will Dobbie, Alex Frankel, Ben Golub, Sergiu Hart, Peter Hull, Navin Kartik, Yair Livne, Sendhil Mullainathan, Derek Neal, Jose Montiel Olea, Larry Samuelson, and Max Tabord-Meehan for helpful comments, and National Science Foundation Grant SES-1851629 for financial support. We also thank Andrei Iakovlev and Aristotle Magganas for valuable research assistance on this project.}
\author[Annie Liang]{Annie Liang$^\dag$}
\thanks{$^\dag$Northwestern University}
\author[Jay Lu]{Jay Lu$^\S$}
\thanks{$^\S$UCLA}
\author[Xiaosheng Mu]{Xiaosheng Mu$^\ddag$}
\thanks{$^\ddag$Princeton University}
\author[Kyohei Okumura]{Kyohei Okumura$^\dag$}
\date{\today} 
\begin{document}

\maketitle

\begin{abstract}
\singlespacing

Algorithm designers increasingly optimize not only for accuracy, but also for the fairness of the algorithm across pre-defined groups. We study the tradeoff between fairness and accuracy for any given set of inputs to the algorithm. We propose and characterize a fairness-accuracy frontier, which consists of the optimal points across a broad range of preferences over fairness and accuracy. Our results identify a simple property of the inputs, \emph{group-balance}, which determines the shape of the frontier. We further study an information design problem where the designer flexibly regulates the inputs (e.g., by coarsening an input or banning its use) but the algorithm is chosen by another agent. Whether it is optimal to ban an input generally depends on the designer's preferences. But when inputs are group-balanced, then excluding group identity is strictly suboptimal for all designers, and when the designer has access to group identity, then it is strictly suboptimal to exclude any informative input.

\end{abstract}

\section{Introduction}

Suppose a hospital uses a  machine learning algorithm to aid in the diagnosis of a  medical condition, where the algorithm makes the correct diagnosis 90\% of the time for Red patients but only 50\% of the time for Blue patients. Such an outcome---where the consequences of a policy differ systematically across two groups---is known as \emph{disparate impact}. Across a wide range of applications, algorithms have been shown to have disparate impact \citep*{ArnoldDobbieHull,Pinkham}. For example, patients assigned the same risk score by a healthcare algorithm were shown to have substantially different actual health risks depending on their race \citep*{ObermeyerMullainathan}; the false-positive rate of an algorithm used to predict criminal reoffense was shown to be twice as high for Black defendants as for White defendants \citep{ProPublica}; and the accuracies of facial-recognition technologies  vary substantially across demographic groups \citep{Klareetal}. These findings have led  designers to impose group-based fairness constraints on algorithms in settings ranging from healthcare to bail evaluation to lending \citep{EthicalAlg,HardtPriceSrebro}.

Policymakers prefer algorithms that have lower disparate impact but also prefer algorithms that are more accurate. In an ideal world, both goals could be achieved together; in practice, there may be an intrinsic tradeoff between accuracy (the overall error rate of the algorithm) and fairness (how similar the error rates are across pre-defined groups).\footnote{Equity-efficiency tradeoffs such as this have been studied in  settings as diverse as taxation \citep*{SaezStantcheva16,DworczakKominersAkbarpour21}, policing \citep*{Persico,VohraLeePaiRoth}, and college admissions \citep{ChanEyster,EllisonPathak}.} What this tradeoff looks like depends on the inputs to the algorithm, which can be observed, manipulated, and regulated---raising the following  questions: How does the tradeoff between fairness and accuracy depend on the information available for prediction? Which informational environments create a tension between fairness and accuracy, and which ameliorate it? While the tradeoff between fairness and accuracy has been empirically computed in specific applications \citep{WeiNiethammer,Goel,LittleWylandtAllen}, we know substantially less about how the available information shapes the tension between these two objectives in general.

In this paper, we address these questions  by defining and studying a \emph{fairness-accuracy frontier}. This frontier consists of all outcomes that are optimal for some objective function in a  broad class that spans different views on how to trade off fairness and accuracy. We prove two types of results about the frontier. First, we identify simple properties of the algorithmic inputs that determine the shape of this frontier. Second, we take an information design perspective on understanding how constraints on information can induce certain desired outcomes. Specifically, we consider an interaction between a policymaker flexibly constraining the inputs and an agent setting the algorithm. We characterize what part of the fairness-accuracy frontier the designer can achieve through appropriate design of the inputs and examine whether it might be optimal for the designer to exclude an input altogether (e.g., excluding group identity in the context of medical predictions).

In our model, a designer chooses an algorithm that takes observed covariates as inputs (e.g., medical scans, lab tests, records of prior hospital visits) and outputs a decision (e.g., whether to recommend a medical procedure). The algorithm's consequences for any given individual are evaluated using a loss function, which can be interpreted as the inaccuracy or harm of the decision. We aggregate losses within two groups, group $r$ (Red) and group $b$ (Blue). Each group's \emph{error} is the expected loss for individuals of that group. An algorithm is understood to be more accurate if it implies lower errors for both groups, and more fair if it implies a smaller absolute difference between the two groups' errors.

To understand the tradeoff between fairness and accuracy, we define the class of \emph{fairness-accuracy (FA) preferences} to be all preferences over group errors that are consistent with the following order: one pair of group errors \emph{FA-dominates} another if the former involves smaller errors for both groups (greater accuracy) and also a smaller difference between group errors (greater fairness).\footnote{ We do not take a stance on the normative desirability of these preferences, instead interpreting our class as encompassing the broad range of  designer preferences that could be relevant in practice.} This strict order is consistent with a broad range of designer preferences, including Utilitarian designers (who minimize the aggregate error in the population), Rawlsian designers (who minimize the greater of the two group errors), and Egalitarian designers (who minimize the absolute difference between group errors) among others. Some of these preferences also correspond directly to optimization problems that have been proposed for use in practice.\footnote{For example, optimizing a Rawlsian preference is equivalent to implementing group distributionally robust optimization \citep{GroupDRO}, and optimizing an Egalitarian preference is equivalent (on a restricted domain) to maximizing accuracy subject to equality of error rates (as considered in \citet{HardtPriceSrebro} among others).}  We define the \emph{fairness-accuracy   frontier} to be the set of all group error pairs that are feasible (i.e., can be implemented by some algorithm given the observed covariates), and moreover FA-undominated within the feasible set. That is, these error pairs  cannot be  improved upon simultaneously in  accuracy and fairness.

A simple property of the algorithm's inputs turns out to be critical for determining the shape of the fairness-accuracy frontier. Say that a covariate vector is \emph{group-balanced} if a group's optimal algorithm (i.e., the one that gives that group the smallest error over all feasible algorithms) yields a lower error for that group than for the other group. Otherwise,  say that the covariate vector is \emph{group-skewed}. While it may be difficult to anticipate in advance of a comprehensive empirical analysis whether group-balance or group-skew is more typical in practice, one scenario in which the latter may arise is if covariates have the same implications for both groups but are measured more accurately for one group than the other (e.g., medical data is recorded more accurately for high-income patients than low-income patients).

\begin{figure}[h]
\begin{center}
\includegraphics[scale=0.6]{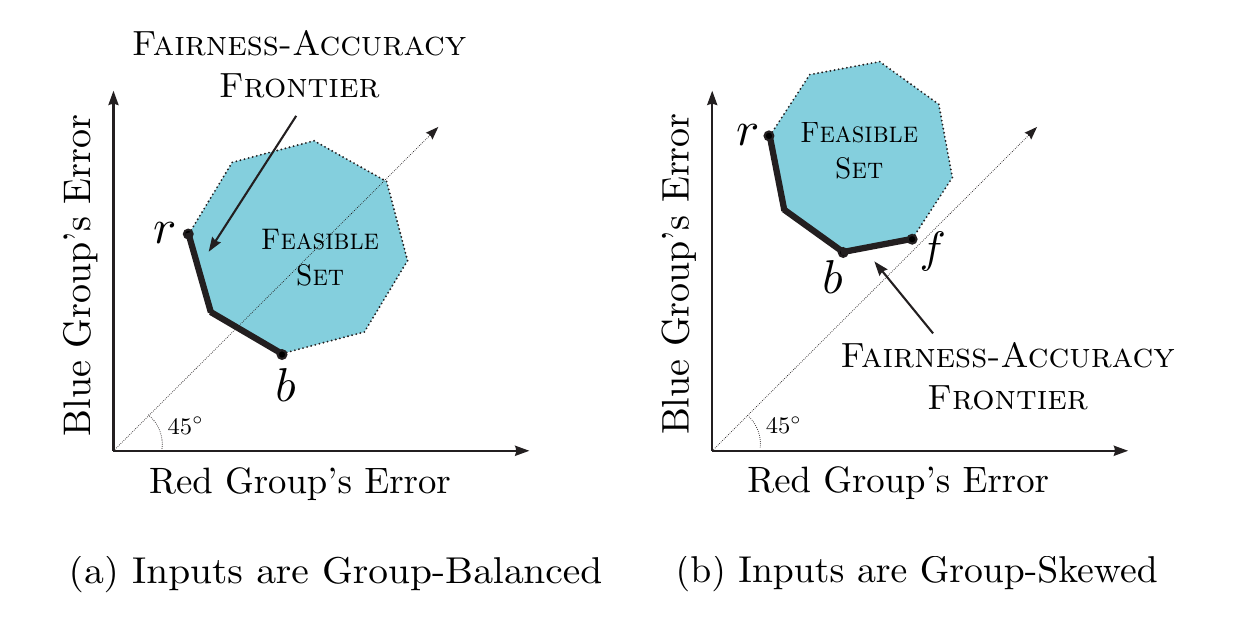}
\end{center}
\caption{The Fairness-Accuracy   Frontier.}
\label{fig:general_intro}
\end{figure}

We provide three complementary characterizations of the fairness-accuracy frontier. Our first and main characterization says that the fairness-accuracy frontier takes either of two possible forms depending on whether the covariate vector is group-balanced or group-skewed (see Figure  \ref{fig:general_intro} above). When inputs are group-balanced, the fairness-accuracy frontier is exactly the standard Pareto frontier, i.e. the set of all feasible error pairs that cannot be simultaneously reduced in both coordinates. This is the part of the lower boundary that begins at the point that is best for group Red (labeled $r$) and ends at the point that is best for group Blue (labeled $b$). Here, the tradeoff is between the accuracy for one group and the accuracy for the other group. On the other hand, when inputs are group-skewed, the fairness-accuracy frontier includes not only the standard Pareto frontier but additionally a positively-sloped part  (the segment from $b$ to the fairness-maximizing point $f$ in Figure \ref{fig:general_intro}) along which both groups' errors increase but the gap between them decreases. Here, the tradeoff is between the accuracy of \emph{both} groups and fairness, signifying a strong conflict between the two objectives. We can conclude from this characterization that a policy proposal that increases errors for both groups, but decreases their gap, can never be justified by fairness considerations if the covariate vector is group-balanced; in contrast, such a policy could potentially be justified were the covariate vector group-skewed.

Our second characterization shows that the fairness-accuracy frontier can be completely recovered as the optimal points for a class of ``simple'' preferences that linearly trade off fairness with accuracy for each group. By varying the weights on fairness versus accuracy, we can trace out the entire fairness-accuracy frontier, and we exploit this parametrization in subsequent comparative static exercises.

Our third and final characterization describes the algorithms that implement the error pairs along the fairness-accuracy frontier. These algorithms turn out to possess a simple threshold structure. Loosely speaking, we can order covariate vectors by how much the decision at each covariate vector impacts group $r$'s error relative to that of group $b$. The algorithms that implement the fairness-accuracy frontier have the property of assigning all early covariate vectors in this sequence to the worst decision for group $g$ (for some group $g$ that we identify), and all the later covariate vectors to the best decision for group $g$. Moving along the fairness-accuracy frontier corresponds to successively swapping the next covariate vector in this sequence from the best decision to the worst decision for group $g$. Our characterization shows that these algorithms (and mixtures between any neighboring algorithms) exactly describe the fairness-accuracy frontier.

In the second part of the paper, we investigate what happens if the designer does not choose the algorithm, but instead regulates the inputs of the algorithm. This question is motivated by settings where a designer has fairness concerns, but another agent setting the algorithm does not. For example, a healthcare provider (agent) determining treatment may seek to maximize the number of correct diagnoses, while a policymaker (designer) may additionally prefer that the accuracy of the provider's treatments be equitable across certain social groups.  In these cases, the policymaker can impose regulation that restricts the inputs available to the algorithm, for example by banning the use of a specific input.

We model this as an information design problem \citep{KamenicaGentzkow} where the designer chooses a garbling of the available inputs, and  an agent chooses an algorithm (based on the garbling) to maximize accuracy. Under weak conditions, it turns out to be without loss for the designer to only control the algorithm's inputs. That is,  any error pair that a designer would choose to implement given full control of the algorithm can also be achieved by appropriately garbling the inputs.

Might the optimal garbling involve completely excluding a covariate from use in the algorithm? We demonstrate two results: First, excluding group identity as an algorithmic input is strictly welfare-reducing for all designers (with FA preferences) if and only if the permitted covariates are group-balanced. Second, when group identity is permitted as an input, then completely excluding any other covariate  makes every designer strictly worse off, so long as that covariate satisfies a mild condition that we call \emph{decision-relevance}. This result can be applied, for instance, to thinking through the relationship between banning group identity and excluding standardized test scores in university admissions decisions. It suggests that when group identity is a permissible input in admission decisions, then excluding test scores is welfare-reducing for \emph{all} designers with the power to flexibly garble covariates. On the other hand, when group identity is not  permitted as an input in college admissions decisions (as is now the case in the United States following the Supreme Court decision in \emph{Students for Fair Admissions, Inc. v. President and Fellows of Harvard College}), then the optimal garbling of covariates for some designer preference may indeed involve completely excluding test scores, and we provide a simple example to illustrate this effect (Section \ref{sec:SimpleExample}). 

Finally, we conclude by illustrating our key definitions and results on two popular healthcare datasets from the algorithmic fairness literature. We conduct two analyses: First, we evaluate whether the covariates in these datasets are group-balanced or group-skewed (finding that one is group-skewed and the other is an interesting case of group-balance in which the group-optimal points lie on the 45 degree line). Second, we depict the fairness-accuracy frontier. This allows us to study the limits of input design for each of the datasets (it turns out to be without loss for the designer in one dataset and with loss in the other) and also to study how the frontier changes when group-specific algorithms are permitted. These empirical illustrations of our framework speak to its potential relevance for assessing fairness-accuracy tradeoffs in practical algorithm design problems.

\subsection{Related Literature} \label{sec:RelatedLit}

Our paper is motivated by recent problems regarding algorithmic bias (Section \ref{sec:AlgFairness}), but adopts a novel perspective on these questions based on approaches from two literatures in economic theory: the  literature on information design (Section \ref{sec:InfoDesign}) and the literature on social preferences and inequality (Section \ref{sec:socialpref}). Building on the former, we model the interaction between a designer flexibly regulating inputs and an agent setting the algorithm. Building on the latter, we focus on understanding  equity-efficiency tradeoffs, and  consider a wide class of preferences that reflects heterogeneity in social preferences.

\subsubsection{Information Design} \label{sec:InfoDesign}
One contribution of our paper is the modeling of the design of algorithmic inputs as an information design problem (see \citet{Kamenica} and \citet{BergemannMorris2019} for recent surveys). This approach complements previous frameworks for modeling the regulation of algorithms, in which policymakers communicate information via cheap talk \citep{CowgillStevenson} or impose restrictions directly on the algorithm \citep*{YangDobbie,RambachanEtAl,BlattnerSpiess}. Our focus on the strategic interaction between an information designer and an agent choosing the algorithm also complements \citet{DovalSmolin}'s characterization of the feasible set across different information policies.\footnote{For example, \citet{DovalSmolin} show that excluding inputs is suboptimal in the sense that more information necessarily increases the feasible set of payoffs. In contrast, in our model it may be strictly optimal for the designer to exclude an input, since a different agent chooses which payoff vector is implemented from among our feasible set.} We view the garbling of inputs as a potentially effective policy tool, which can be implemented through a variety of technological or legal commitments,\footnote{For example, organizations such as the US Census Bureau, Google, Apple, and Microsoft are committed to differential privacy initiatives \citep{DworkRoth}, which take various forms of adding noise to user inputs. \citet{YangDobbie} summarizes the existing law on algorithmic regulation and proposes new legal policies for mitigating algorithmic bias.} and which deserves further attention within the context of algorithmic fairness.  

Conversely, problems regarding algorithmic fairness motivate analyses that depart from typical information design problems in a few interesting ways. For example, the Sender in our framework cannot choose a completely flexible information structure, but is instead constrained to garblings of a primitive covariate vector. Additionally, motivated by heterogeneous attitudes toward fairness (Section \ref{sec:socialpref}), we focus on a frontier of solutions with respect to a wide class of Sender preferences. Our results in Section \ref{sec:AddCovariate} describe how the frontier of solutions responds to changes in the underlying information.  We focus on special cases of this comparative static  that are of interest given our motivation (e.g., adding or removing a covariate), but a more general solution (analogous to \citet{CurelloSinander}'s work on  comparative statics with respect to the Sender's utility function) would be an interesting avenue for future work.
 
 Finally, at the broader intersection of information design and algorithms, \citet{Ichihashi2023} considers optimal information acquisition for crime deterrence, and \citet*{CaplinMartin} draws a connection between different machine learning objectives and costly information design.

\subsubsection{Social Preferences and Inequality}\label{sec:socialpref}

The  literature on social preferences documents substantial heterogeneity in how individuals assess efficiency-equity tradeoffs \citep*{AndreoniMiller,FehrSchmidt,FismanKarivMarkovits,Sullivan}, which is reflected in our broad class of FA-preferences. In this literature, social preferences are  preferences over individual payoffs rather than preferences over group errors, but most have analogues in our setting. For example, the ``social welfare approach" aggregates individual payoffs using differential weights   \citep*{CharnessRabin02,SaezStantcheva16,DworczakKominersAkbarpour21}, and is nested in our class of FA preferences (if we interpret individual payoffs as group errors). We additionally allow for a direct penalty for unequal outcomes, as in the models of ``difference aversion"  or ``inequity aversion" \citep{Loewensteinetal,FehrSchmidt,BoltonOckenfels}.\footnote{Another part of this literature is concerned with intentions and reciprocity \citep{Rabin93,CharnessRabin02} and is outside of our model.}

There is a separate literature studying the equity-efficiency tradeoffs of affirmative action programs. Specifically, \citet{Lundberg} and \citet{ChanEyster} model affirmative action as a ban on the use of group identity in admissions decisions, and show that this can lead organizations to condition on proxies in a way that reduces both efficiency and equity.\footnote{Another set of papers shows that access to group identity must weakly improve the designer's payoffs when the designer has control of the algorithm (see for example \citet{MenonWilliamson}, \citet{Agarwaletal}, \citet{LiptonChouldechovaMcAuley}, \citet{Manski}, and \citet{RambachanEtAl}), as adding group identity is a Blackwell improvement in information.  This is no longer generally the case when the designer cannot choose the algorithm, as in our model in Section \ref{sec:DesignInputs}.} (A similar point is made in \citet{AganStarr} regarding the use of prior criminal history in hiring decisions in ``ban-the-box" policies.) 
\citet{EllisonPathak} empirically quantify the equity and efficiency losses of  race-neutral affirmative action (based on geographic proxies for race) as compared to plans that explicitly consider race.  These papers are related to our study of the impact of excluding group identity, but focus on how a designer's optimal algorithm given group identity compares to the optimal algorithm without. We instead examine how the frontier of feasible outcomes changes when the designer can design a group-dependent garbling versus when the designer must choose a group-independent garbling. These analyses are not nested; see Section \ref{sec:GNew} for more detail.

\subsubsection{Algorithmic Bias} \label{sec:AlgFairness}

The recent literature on algorithmic bias has emerged around the concern that algorithms have error rates that differ substantially across social and demographic groups  (see \citet{aerp&p} and \citet{CowgillTucker} for overviews). In this literature and in the accompanying policy discussion (e.g, \citet{ProPublica}), algorithms are often considered to be ``less fair" if  the harms of the algorithm are more unequally borne across groups, with this comparison formalized as a larger disparity in  error rates across groups \citep{HardtPriceSrebro,KMR,chouldechova}.\footnote{A notable exception is the concept of  individual fairness proposed in \citet{DworkHardtPitassiReingoldZemel}.} 
A growing body of empirical work documents and quantifies these disparate impacts \citep{ObermeyerMullainathan, ArnoldDobbieHull, Pinkham}. 

The tradeoff between accuracy (error rates of the algorithm for each group) and fairness (discrepancy between error rates across groups) is a special kind of equity-efficiency tradeoff. A common approach for resolving this tradeoff is to posit a particular objective criterion \citep{HardtPriceSrebro,DianaDickElzaynKearnsRothSchutzmanSharifiZiani}.  Other papers identify improvements with respect to both objectives simultaneously \citep{Rose,FeigenbergMiller}. Our paper is closest to a smaller part of this literature, which engages with the tension between fairness and accuracy by quantifying fairness-accuracy tradeoffs for specific loss functions \citep{MenonWilliamson} or for specific empirical applications \citep{WeiNiethammer,Goel,LittleWylandtAllen}. We are interested in how this fairness-accuracy tradeoff is moderated by the inputs to the algorithm in general, and provide simple conditions on the inputs that qualitatively govern this tradeoff independently of other details of the loss function or informational environment.

\section{Framework} \label{sec:Framework}

\subsection{Setup and Notation} \label{sec:model}

There is a population of individuals, where each individual is described by a \emph{covariate vector} $X$ taking values in a finite set $\mathcal{X}$, a  \emph{type} $Y$ taking values in a finite set $\mathcal{Y}$,\footnote{We make the  assumption of finiteness to simplify various notations in the exposition. Most of our results generalize to infinite covariate values and/or infinite types.} and a \emph{group identity} $G $ taking values $r$ or $b$.\footnote{Throughout, we assume the definition of the relevant groups to be a primitive of the setting, determined by sociopolitical precedent and outside the scope of our model.}
Throughout we think of $G, X, Y$ as random variables with joint distribution $\mathbb{P}$, and use $p_g := \mathbb{P}(G=g)>0$ to denote the fraction of the population that belongs to group $g \in \{r,b\}$. We impose no additional assumptions on the joint distribution,\footnote{We view $\mathbb{P}$ as the population distribution on which the algorithm is both trained and tested. An interesting direction for future work would be to consider training data that differs in distribution from the data on which the algorithm's errors are evaluated. For example, one could study the optimal sampling of data on which to train the algorithm, or to study feedback loops when the algorithm is trained on data determined by previous algorithms (as in \citet{VohraLeePaiRoth} and \cite{CheKimZhong}).} permitting for example each of the following:

\begin{example}[$X$ reveals or closely proxies for $G$] \label{ex:XRevealsG} The group identity may be an input in the covariate vector $X$, or predictable from inputs in the covariate vector $X$. For example, \citet{BertrandKamenica} show that  data on consumption patterns permits near perfect classification of gender  and a fairly accurate prediction of  other group identities such as income bracket, race, and political ideology.
\end{example}

\begin{example}[Biased Covariates] \label{ex:Bias} The value of an input in $X$ may be systematically biased depending on group identity. For example, if $G$ is income bracket, $Y$ is ability, and $X$ is a test score that can be improved through better access to test prep, the distribution $\mathbb{P}$ may have the property that at every ability level, the conditional distribution of test scores is shifted higher for students in the high-income bracket (i.e., the distribution of $X \mid Y=y,G=r$ first-order stochastically dominates $X \mid Y=y, G=b$ at every $y \in\mathcal{Y}$).  
\end{example}

\begin{example}[Asymmetrically Informative Covariates] \label{ex:AsymmetricInformative} The inputs in $X$ may be more informative about $Y$ for one group than the other. For example, in \citet{ObermeyerMullainathan}, a patient's health care costs are more predictive of their health care needs for White patients than for Black patients, and \citet{Rothstein} shows that SAT scores are more informative about future college grades for high-income students than low-income students.
\end{example}

An \emph{algorithm} is a mapping $a: \mathcal{X} \rightarrow \mathcal{D}$ from realizations of the covariate vector into decisions in $\mathcal{D} = \{0,1\}$. There is a primitive set of algorithms $\mathcal{A}$ (for example, the set of all linear algorithms) and the designer chooses from $\Delta\mathcal{A}$, the set of all randomizations over $\mathcal{A}$. We refer to these randomizations as ``algorithms" whenever there is no risk of confusion. Most of our subsequent results hold for arbitrary choices of $\mathcal{A}$; some of our results hold specifically for the leading special case where $\mathcal{A}$ is the set of all possible algorithms, denoted by $\overline{\mathcal{A}}$, and we are explicit when this is the case.\footnote{Although we refer to $X$ as a covariate vector and $a$ as an algorithm, one can equivalently interpret $X$ as a standard Blackwell signal and $a$ as a decision rule. In this case a restricted class of algorithms would correspond to restrictions on permissible decision rules.}

Some motivating examples of types, group identities, covariate vectors, and decisions are given below:

\smallskip

\emph{Healthcare.} $Y$ is need of treatment, $G$ is socioeconomic class, and the decision is whether the individual receives treatment. The covariate vector $X$ includes possible attributes such as image scans, number of past hospital visits, family history of illness, and blood tests.

\emph{Credit scoring.} $Y$ is creditworthiness, $G$ is gender, and the decision is whether the borrower's loan request is approved. The covariate vector $X$ includes possible attributes such as purchase histories, social network data, income level, and past defaults.

 \emph{Bail.} $Y$ is whether an individual is high-risk or low-risk of criminal reoffense, $G$ is race, and the decision is whether the individual is released on bail. The covariate vector $X$ includes possible attributes such as the individual's past criminal record, psychological evaluations, family criminal background, frequency of moves, or drug use as a child.

\emph{Job hiring.}  $Y$ is whether a job applicant is high or low quality,  $G$ is citizenship, and the decision is whether the applicant is hired. The covariate vector $X$ includes possible attributes such as past work history, resume, and references.

\medskip

The consequence of choosing decision $d$ for an individual whose true type is $y$ is evaluated using a loss function $\ell: \mathcal{D} \times \mathcal{Y} \rightarrow \mathbb{R}$, which we view as a measure of inaccuracy independent of fairness.\footnote{All the main results of the paper extend if we allow the loss function to also depend on group identity, i.e. $\ell: \mathcal{D}\times \mathcal{Y} \times \mathcal{G} \rightarrow \mathbb{R}$. This generalization accommodates additional fairness metrics from the literature, such as equalized odds---see Appendix \ref{app:EqualizedOdds} for  details. \label{fn:GroupDependence}} We further aggregate these losses across individuals within each group:

\begin{definition} For any algorithm $a \in \Delta\mathcal{A}$ and group $g \in \{r,b\}$, the \emph{group $g$ error} is  
\[e_g(a) := \mathbb{E}_{D \sim a(X)}\left[\ell(D,Y) \mid G=g\right].\]
\end{definition}

\noindent That is, group $g$'s error is the average loss for members of group $g$. For example, if the type $Y$ is binary and we consider the misclassification loss function $\ell(d,y)= \mathbbm{1}(d\neq y)$, then $e_g(a)$ is the total probability of a type I or type II error. Other loss functions may put different weights on different kinds of errors.
We view the choice of the right loss function as application-specific, and demonstrate results that hold for arbitrary $\ell$.

Each algorithm $a$ implies a pair of group errors $(e_r(a),e_b(a))$, and we denote this vector by $e(a)$. Throughout this paper,  we use an \emph{improvement in accuracy} to mean a reduction in both group errors, and an \emph{improvement in fairness} to mean a reduction in the absolute difference between the group errors.\footnote{This formulation is consistent with much of the literature on algorithmic fairness, but does not take into account all important fairness considerations.  For example, perfect prediction of criminal offense ($Y$) by the algorithm for both groups does not address historical inequities that have shaped differential base rates of $Y$ across groups. Moreover, as \citet{KasyAbebe} point out, an algorithm that is fair in the narrow context of one decision may perpetuate or exacerbate inequalities within a larger context. We leave to future work the interesting question of how these algorithmic design decisions might impact decisions in a larger dynamic game.} This approach nests many of the fairness criteria that have been proposed in the literature by adopting a particular choice of a loss function (see \citet{AlgFairSurvey} for a recent survey of these criteria). For example, if the type $Y$ is binary and $\ell(d,y)= \mathbbm{1}(d\neq y)$, then $e_r(g)=e_b(g)$ corresponds to equality of misclassification rates. See Appendix \ref{sec:FairCriteria} for further details and examples.\footnote{For instance, we can accommodate equality of false positive/negative rates  \citep{KMR,chouldechova} by letting the loss function depend on group identity.}

Section \ref{sec:Extensions} discusses several extensions of this framework, including generalizing to fairness criteria of the form $\vert \phi(e_r) - \phi(e_b) \vert$ (which includes the ratio of error rates as a special case), and allowing for fairness and accuracy to be defined using different loss functions. 

\subsection{Fairness-Accuracy Preferences}  
The designer has a preference ordering over group error pairs $e=(e_r,e_b) \in \mathbb{R}^2$. First, recall the usual Pareto dominance order.

\begin{definition} The \emph{Pareto dominance (PD)} relation $>_{PD}$ is the strict order on $\mathbb{R}^2$ where $e >_{PD} e'$  if  $e_r\leq e_r'$ and $e_b\leq e_b'$ with at least one inequality strict.
\end{definition}

We now modify this order to include fairness considerations.

\begin{definition} \label{def:Pareto} The \emph{fairness-accuracy (FA) dominance} relation $>_{FA}$ is the strict order on $\mathbb{R}^2$ where $e >_{FA} e'$  if $e_r \leq e_r'$, $e_b\leq e_b'$ and $\vert e_r - e_b \vert \leq \vert e_r'-e_b'\vert$ with at least one inequality strict.\footnote{ \citet{KleinbergMullainathan} define an admissions rule to be a strict improvement over another if it improves both efficiency (the average type of an admitted applicant) and equity (the fraction of admitted students who belong to the disadvantaged group), which is  similar to our FA dominance relation but non-nested, as it involves two loss functions. The  FA-dominance relation in Online Appendix \ref{app:DifferentLoss} generalizes both orders.}
\end{definition}

That is, if it is possible to simultaneously increase accuracy (reducing errors for both groups) and also increase fairness (reducing the gap between these errors), then all designers must prefer this. We call any preference consistent with this order a fairness-accuracy preference.

\begin{definition} \label{def:FApref}
A \emph{fairness-accuracy (FA) preference} $\succeq$ is any total order on $\mathbb{R}^2$ such that $e \succ e'$ whenever $e>_{FA}e'$.\footnote{It is straightforward to see that these orders are unchanged if our measure of fairness, $\vert e_r - e_b \vert$, is replaced with $\phi(\vert e_r - e_b \vert)$ where $\phi$ is a strictly increasing function. } 
\end{definition}

The class of FA preferences reflects a broad range of  views on how to trade off fairness and accuracy, including the following special cases that have been proposed in the literature.
 
\begin{example} [Utilitarian] \label{ex:SWA}  The designer evaluates errors $e=\left(e_{r},e_{b}\right)$
according to the weighted sum in the population. That is, let
$
w_u(e)=-p_{r}e_{r}-p_{b}e_{b}
$,
and let $\succeq_u$ be the ordering represented by $w_u$, so that $e \succeq_u e'$ if and only if $w_u(e)\geq w_u(e')$. (Note that the minority population, which has a lower weight by definition, will be naturally discounted as a group in this evaluation.) A designer with preferences $\succeq_u$ is called \emph{Utilitarian} \citep{Harsanyi1953,Harsanyi1955}. There is also a generalization of the Utilitarian rule which evaluates errors $e$ using $w_{\gamma}(e) = -\gamma_r e_r - \gamma_b e_b$ for arbitrary positive constants $\gamma_r,\gamma_b$ \citep*{CharnessRabin02,SaezStantcheva16,DworczakKominersAkbarpour21,RambachanEtAl}.

 \end{example}

\begin{example} [Rawlsian] The designer evaluates errors $e=\left(e_{r},e_{b}\right)$
according to the greater error. That is, let 
$
w_r(e) = -\max\left\{ e_{r},e_{b}\right\} 
$,
and consider a preference order represented by $w_r$. A designer with such preferences is called \emph{Rawlsian}  \citep{Rawls}.\footnote{This approach is also known as \emph{group distributionally robust optimization} \citep{GroupDRO,Hansen2022TheIO}.}
\end{example}

\begin{example}
[Egalitarian] The designer evaluates errors $e=\left(e_{r},e_{b}\right)$
according to their difference. That is, let
$
w_e(e) = -\left|e_{r}-e_{b}\right|
$,
and consider the lexicographic preference order that first evaluates errors according to $w_e$ and then breaks ties using the Utilitarian utility $w_u$. A designer with such preferences is called \emph{Egalitarian}.
\end{example}

\begin{example}[Constrained Optimization] 

The designer evaluates errors $e=\left(e_{r},e_{b}\right)$ by first determining if they satisfy the fairness constraint $\left|e_{r}-e_{b}\right| \leq c$. If so, then the designer uses the Utilitarian criterion. If instead $\left|e_{r}-e_{b}\right| > c$, then the designer uses the Egalitarian criterion.\footnote{This is a common approach in the algorithmic fairness literature \citep{Ferryetal,MenonWilliamson,CorbettDaviesetal,Agarwaletal,HardtPriceSrebro}.}

\end{example}

\begin{example} [Accuracy then Fairness]  The designer evaluates errors $e=\left(e_{r},e_{b}\right)$ by first evaluating accuracy and then fairness. That is, $e\succ e'$ if  $e>_{PD}e'$, and if not, they are then compared using $w_e$. This is the approach recently proposed by \citet{VivianoBradic}.
\end{example}

\noindent   Our consideration of this wide class of preferences is motivated in part by the experimental literature on social preferences, which documents substantial heterogeneity across individuals' equity-efficiency preferences. For example, when given the choice between different allocations of payoffs across individuals, some experimental subjects choose Pareto-dominated allocations that are more equal (corresponding in our setting to choice of $e=(e_r,e_b)$ over $e'=(e_r',e_b')$ where $e'>_{PD} e$ but $\vert e_r - e_b \vert < \vert e_r'-e_b'\vert$). These are minority preferences in the population \citep{AndreoniMiller,CharnessRabin02}, but constitute  31\% of subjects in an experiment in \citet{FismanKarivMarkovits}. We do not take a normative stance on which FA preferences are more appropriate, instead viewing the class of FA preferences as encompassing a broad range of designer preferences that may be relevant in practice.

\subsection{The Fairness-Accuracy Frontier}

Fixing any covariate vector $X$, we define the feasible set of group error pairs to be those pairs that can be implemented by some algorithm that takes $X$ as input.

\begin{definition}
The \emph{feasible set} given covariate vector $X$ is 
\[
\mathcal{E}_X := \{e(a) : a \in \Delta\mathcal{A}\}.
\]
\end{definition}
We first define the standard Pareto frontier with respect to accuracy for the two groups.

\begin{definition} \label{def:Pareto}
The \emph{Pareto frontier} given $X$ is 
\[
\mathcal{P}_X := \{e\in \mathcal{E}_X : \textrm{there is no } e'\in \mathcal{E}_X \textrm{ such that } e' >_{PD} e \}.
\]
\end{definition}

Our  fairness-accuracy frontier is the set of all group error pairs that are FA-undominated in the feasible set.

\begin{definition} \label{def:Frontier}
The \emph{fairness-accuracy (FA) frontier} given $X$ is 
\[
\mathcal{F}_X := \{e\in \mathcal{E}_X : \textrm{there is no } e'\in \mathcal{E}_X \textrm{ such that } e' >_{FA} e \}.
\]

\end{definition}

The FA frontier can also be interpreted as the set of group error pairs that are optimal for some FA preference.  Moreover, every point in the FA frontier is uniquely optimal for some FA preference, so we cannot exclude any points from the set without hurting some designer (see Appendix \ref{app:Welfare} for details).

\section{The Fairness-Accuracy Frontier} \label{sec:FAFrontier}

This section presents three complementary characterizations of the fairness-accuracy frontier. Section \ref{sec:DefineGroupBalance} defines the properties of \emph{group-balance} and \emph{group-skew} that will play a key role in many of our results. Section \ref{sec:FullDesign} presents our main characterization of the FA frontier, which says that whether the covariate vector is group-balanced or group-skewed determines what kinds of fairness-accuracy tradeoffs are relevant. Section \ref{sec:altchar} presents two additional characterizations of the FA frontier: The first shows that the fairness-accuracy frontier can be completely characterized by considering a class of ``simple'' utility functions, and the second describes a class of algorithms that implements the fairness-accuracy frontier.

\subsection{Key Property: Group-Balance vs Group-Skew} \label{sec:DefineGroupBalance}
For all covariate vectors $X$, the feasible set $\mathcal{E}_X$ is closed and convex (Lemma \ref{lemm:ConvexPolygon}). Two special feasible points are the following.

\begin{definition}[Group Optimal Points]
For any covariate vector $X$, define 
\[
\displaystyle r_X := \argmin_{e \in \mathcal{E}_X} e_r \textrm{\quad\quad\quad\quad}
\displaystyle b_X := \argmin_{e \in \mathcal{E}_X} e_b
\]
to be the feasible points that minimize group $r$'s error and  group $b$'s error respectively, where ties are broken by minimizing the other group's error. 
\end{definition}

Group optimal points characterize the Pareto frontier (Definition \ref{def:Pareto}).

\begin{observation} The Pareto frontier $\mathcal{P}_X$  is the lower boundary of $\mathcal{E}_X$ between $r_{X}$ and $b_{X}$.\footnote{We use \emph{lower boundary between two points} to mean the part of the boundary of the set that lies between the two points and below the line segment connecting the two.} 
\end{observation}

We similarly define the fairness optimal point.

\begin{definition}[Fairness Optimal Point] For any covariate vector $X$, define
\[
\displaystyle f_{X} := \argmin_{e \in \mathcal{E}_X} \vert e_r - e_b \vert
\]
to be the point that minimizes the absolute difference between group errors, where ties are broken by minimizing either group's error.\footnote{This point is the same regardless of which group is used to break ties.} 
\end{definition}

While $r_X$ and $b_X$ respectively denote the points that minimize group $r$ and $b$'s errors, the group whose error is minimized need not have the lower error. 
Whether this is the case turns out to be the key property determining the nature of the fairness-accuracy tradeoff given $X$.

\begin{definition} Covariate vector $X$ is:
\begin{itemize}
    \item \emph{$r$-skewed} if $e_r< e_b$ at $r_X$ and $e_r\leq e_b$ at $b_X$
    \item \emph{$b$-skewed} if $e_b< e_r$ at $b_X$ and $e_b\leq e_r$ at $r_X$
    \item \emph{group-balanced} otherwise
\end{itemize}
If $X$ is $g$-skewed for either group $g$, then we say it is \emph{group-skewed}.\footnote{Group-balance/group-skew additionally depends on the joint distribution $\left(X,Y,G\right)$ and the loss function. Since we are primarily interested in studying how tradeoffs are affected by changes in the covariate vector $X$ (e.g. adding/banning a covariate), we treat the other random variables (and loss function) as fixed primitives of our model.}

\end{definition}

In words, $X$ is $r$-skewed if group $r$'s error is smaller than group $b$'s error not only at the $r$-optimal point $r_X$, but also at the $b$-optimal point $b_X$. Geometrically, this means that $r_X$ and $b_X$ fall to the same side of the 45 degree line. (The feasible set may still intersect the 45 degree line, as in Figure \ref{fig:ParetoSetGX}.)   In contrast, the covariate vector $X$ is group-balanced if at each group's optimal point, its error is lower than that of the other group, implying that $r_X$ and $b_X$ fall to opposite sides of the 45 degree line. 

Loosely speaking, a covariate vector is group-balanced if it is possible to separate accurate predictions for one group from accurate predictions for another, and it is group-skewed if this is not feasible. We provide several examples below in which the covariate vector is either group-balanced or group-skewed (all supporting arguments are contained in Appendix \ref{app:SupportExamples}).

\begin{example}[Strong Independence] \label{ex:SI} Suppose $G \indep (X,Y)$. Then $X$ is group-balanced.
\end{example}

\begin{example}[Unequal Means]
\label{ex:UnequalMeans}
Suppose there are two functions $a_r, a_b \in \mathcal{A}$ such that  $Y=a_g(X) + \varepsilon$ for members of group $g$ where $\varepsilon$ is a mean-zero random variable independent of $X$. If the loss function is $l(d,y)=(d-y)^2$, then $X$ is group-balanced. 
\end{example}

\begin{example}[Asymmetric Predictive Power]
\label{ex:Asymmetric}
Suppose there is a function $a_0 \in \mathcal{A}$ such that  $Y=a_0(X) + \varepsilon_g$ for members of group $g$, where $\varepsilon_r$ and $\varepsilon_b$ are mean-zero random variables independent of $X$. If the loss function is $l(d,y)=(d-y)^2$ and $\varepsilon_b$ has larger variance than $\varepsilon_r$, then $X$ is $r$-skewed. 
\end{example}

\begin{example}[Conditional Independence] \label{ex:CI} Suppose $\mathcal{A} = \overline{\mathcal{A}}$ (all algorithms are feasible) and  $G\indep Y\mid X$,\footnote{Equivalently, $Y=a_0(X) + \varepsilon_X$ for both groups, where the noise term is possibly dependent on $X$.} i.e., once $X$ is observed there is no additional predictive value to knowing a subject's group identity.\footnote{This kind of conditional independence appears for example when the coefficient on group identity is zero in a regression of $Y$ on observables, e.g. \citet{LudwigMullainathan} find that race ($G$) is not predictive of a criminal's risk ($Y$) conditional on arrest $(X)$ in their data.} Then either $r_X=b_X=f_X$ or $X$ is group-skewed.
 
\end{example}

\subsection{Geometric Characterization of the FA Frontier} \label{sec:FullDesign}

We now provide our main characterization of the FA frontier. Theorem \ref{thm:FullDesignPareto} shows that 
whether the covariate vector $X$ is group-balanced or group-skewed determines the shape of the frontier.  

\smallskip

\begin{theorem} \label{thm:FullDesignPareto} \phantom{m}
\begin{itemize}
\item[(a)] If $X$ is group-balanced,  then $\mathcal{F}_X=\mathcal{P}_X$.
\item[(b)] If $X$ is $g$-skewed, then $\mathcal{F}_X$ is the boundary of $\mathcal{E}_X$ between $g_{X}$ and $f_{X}$ containing $\mathcal{P}_X$.
\end{itemize}
\end{theorem}

\begin{figure}[h]
\begin{center}
\includegraphics[scale=0.6]{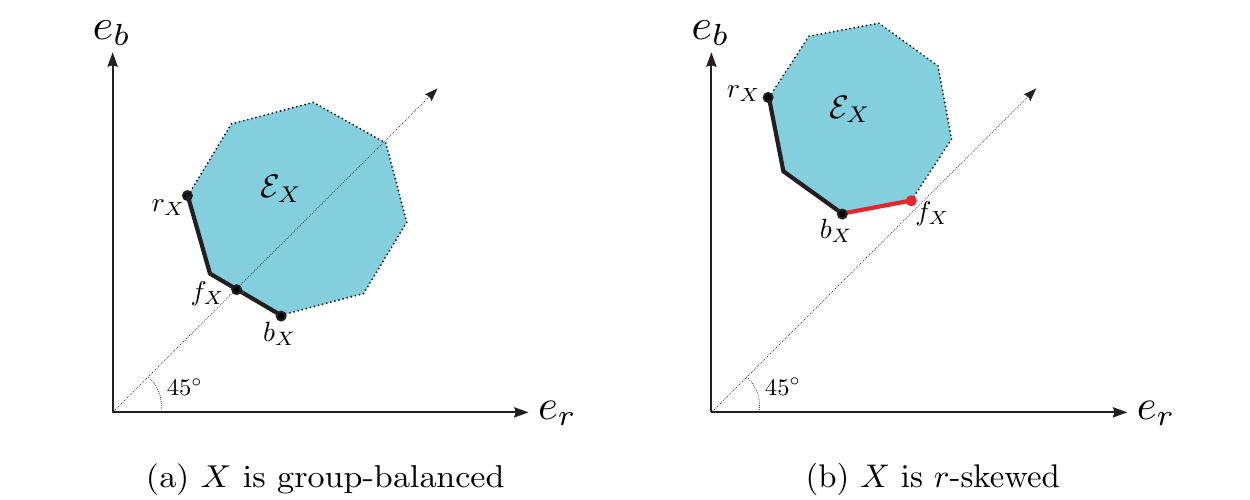}
\end{center}
\caption{\footnotesize{Example feasible set and fairness-accuracy frontier for (a) a group-balanced covariate vector $X$ and (b) an $r$-skewed covariate vector $X$.}}
\label{fig:general}
\end{figure}

These two cases are depicted in Figure \ref{fig:general}. When $X$ is group-balanced and $r_X$ and $b_X$ are distinct, the two points fall on opposite sides of the 45-degree line (Panel (a)), and the fairness-accuracy frontier is the standard Pareto frontier, i.e. the lower boundary of the feasible set connecting these two points. Here, the only tradeoffs are between the accuracy of one group and the accuracy of the other group. When $X$ is $r$-skewed (Panel (b)), then both $r_X$ and $b_X$ fall on the same side of the 45-degree line, and the fairness-accuracy frontier consists not only of the usual Pareto frontier connecting $r_X$ to $b_X$, but additionally a positively sloped line segment connecting the Pareto frontier to $f_{X}$ (depicted in Figure \ref{fig:general} in red). This positively sloped line segment is novel; here, tradeoffs are between fairness and the accuracy of \emph{both} groups. This is a case where there is a strong conflict between fairness and accuracy. We summarize these observations  in the following corollary.

\begin{corollary} \label{corr:Tradeoff} Suppose $f_{X}$ is distinct from $r_X$ and $b_X$. Then $X$ is group-skewed if and only if there are points  $e,e'\in\mathcal{F}_X$ such that $e>_{PD}  e'$ but $\vert e_r - e_b \vert > \vert e_r'-e_b'\vert$.
\end{corollary}

The corollary implies that if the covariate vector is group-skewed (and the fairness optimal point is distinct from the group optimal points), then the FA frontier must consist of a positively-sloped segment along which every pair of points can be Pareto-ranked. In practice, one way in which designers may end up at these Pareto-dominated outcomes is by choosing to ignore certain available covariates. For example, some medical practitioners recommend removing race-based covariates from healthcare prediction algorithms, even as they acknowledge the predictive value of those covariates for both groups \citep{Vyasetal,Cerdena2020,Delgado2021}. Similarly, some university admissions committees have elected to exclude consideration of test scores from admissions decisions.\footnote{Test scores are predictive of college grades for all of the relevant demographic groups (see Section A.5 of \citet{BerkeleySAT}), but are more predictive for applicants in some groups than others \citep{Rothstein}. In Section \ref{sec:GKnown} we return to this application, interpreting the exclusion of test scores slightly differently---not as a choice made by the agent setting the algorithm, but as an informational regulation imposed by a designer whose preferences are different from those of the agent.} Corollary \ref{corr:Tradeoff} says that if the covariate vectors in these settings are group-skewed, then disagreements over whether to implement Pareto-dominated errors can be explained by different fairness preferences. On the other hand, if inputs are group-balanced, then a policy proposal that implements Pareto-dominated errors cannot be rationalized by any fairness-accuracy preference, no matter how strong the designer's weight on fairness. 

\subsection{Alternate Characterizations of the FA Frontier} \label{sec:altchar}

We now provide two additional characterizations of the FA frontier. Proposition \ref{prop:MinimalClass} shows that the FA frontier can be fully recovered from the smaller class of parameterized utility functions that linearly trade off accuracy and fairness. Specifically consider the class of functions
\[
w\left(e\right)=-\gamma_{r}e_{r}-\gamma_{b}e_{b}-\gamma_{f}\left|e_{r}-e_{b}\right|
\]
where $\gamma_{r},\gamma_{b}>0$ and $\gamma_{f}\geq 0$ are (respectively) the designer's weights on accuracy for each group and on fairness.
A preference $\succeq$ is \emph{simple} if it can be represented by a utility of this form, i.e. $e\succeq e^{\prime}$ if and only if $w(e)\geq w(e^{\prime})$. All simple preferences are FA preferences, but not all FA preferences are simple. For example, both Utilitarian and Rawlsian preferences are simple but Egalitarian preferences are not.\footnote{To see this for the Utilitarian designer, set $\gamma_r=p_r$, $\gamma_b=p_b$ and $\gamma_f=0$. To see this for the Rawlsian designer, set $\gamma_r=\gamma_b=\gamma_f=1$. Egalitarian preferences are not simple as they are not continuous.} Given any preference $\succeq$ on error pairs, let
\[
\mathcal{C}_{X}(\succeq):=\left\{ e\in\mathcal{E}_X\text{ : }e\succeq e^{\prime}\text{ for all }e^{\prime}\in\mathcal{E}_X\right\} 
\]
denote the set of $\succeq$-optimal points.
 
\begin{proposition} \label{prop:MinimalClass}
$\mathcal{F}_X$ is the union of $\mathcal{C}_{X}(\succeq)$ over all simple preferences
$\succeq$.
\end{proposition}

This result is in the spirit of \citet{WaldWolfowitz1951}'s complete class theorem, and provides a class of utility functions which is sufficient for generating the FA frontier. Recalling that the usual Pareto frontier can be characterized as the set of optimal points for designers with utility functions of the form $\gamma_r e_r + \gamma_b e_b$, our characterization differs in additionally including the term $\gamma_f \vert e_r-e_b \vert$, which captures a concern for inequity across groups. As it turns out, the addition of this term is sufficient to characterize the entire FA frontier, so the frontier can be traced out simply by shifting the parameter weights $\gamma_r$, $\gamma_b$ and $\gamma_f$.

Our third and final characterization tells us how the design of the optimal algorithm varies along the frontier when the set of algorithms is unconstrained. For each realization $x$ of the covariate vector, let
\begin{align*}
\Delta_g^x & := P(X=x \mid G=g) \cdot  \mathbb{E}[\ell(1,Y) - \ell(0,Y)\mid X=x,G=g]
\end{align*}
denote the change in group $g$'s expected error when the decision at $x$ is moved from $a(x)=0$ to $a(x)=1$. The \emph{$g$-optimal decision} at $x$ is $d=1$ if $\Delta_g^x \geq 0$, and otherwise it is $d=0$.\footnote{It is immaterial how we break the tie here.} Define
\[h(x) = \frac{\Delta^x_b}{\Delta^x_r}\]
to be the ratio of the relative impact of   the decision at $x$ on group $r$'s error to group $b$'s error. Then order the realizations $x_1, \dots, x_n$ of $X$ so that $h(x_1)\leq h(x_2) \leq \dots \leq h(x_n)$, with ties broken arbitrarily. Let $a^*_r$ denote the $r$-optimal algorithm, which assigns the optimal decision for group $r$ to each $x$, and consider the following class of algorithms
\[a_i(x_j):=\left\{\begin{array}{cc}
a^*_r(x_j) & \textrm{if } j \geq i \\
1-a^*_{r}(x_j) & \textrm{if } j < i
\end{array}\right.\]
In words, $a_0$ is the algorithm that (deterministically) assigns to every covariate vector the decision that is best for group $r$; that is $a_0=a_r^*$, the $r$-optimal algorithm. Moving along the sequence $a_1,a_2, \dots$ entails successively switching the assignment at the next realization in the order $x_1,x_2,\dots$.
The following proposition says that the FA frontier is exactly the set of outcomes generated by algorithms of this form.

\begin{proposition} \label{prop:AlgorithmsGB} Suppose $\mathcal{A}=\overline{\mathcal{A}}$, and let $n_c:=\max_i \{h(x_i) < c\}$ for $c\in \{0,1\}$. Then $\mathcal{F}_X$ is the set of errors generated by the algorithms
$\beta a_{i-1} + (1-\beta) a_{i}$ 
for all $(\beta,i)\in \Psi$ where
\begin{itemize}
\item[(a)] if $X$ is group-balanced, then $\Psi = [0,1] \times \{1, \dots, n_0\}$, and 
\item[(b)] if $X$ is $r$-skewed, then  $\Psi = \left([0,1]\times \{1, \dots, n^*-1\}\right) \cup \left([0,\overline{\beta}]\times\{n^*\}\right)$ for some $\overline{\beta} \in [0,1]$ and $n^* \in \{n_0,\dots  n_1\}$. Moreover, if $e_r < e_b$ at $f_X$, then $(\overline{\beta},n^*)=(1,n_1)$. 
\end{itemize}

\end{proposition}

\begin{figure}
    \centering
      \includegraphics{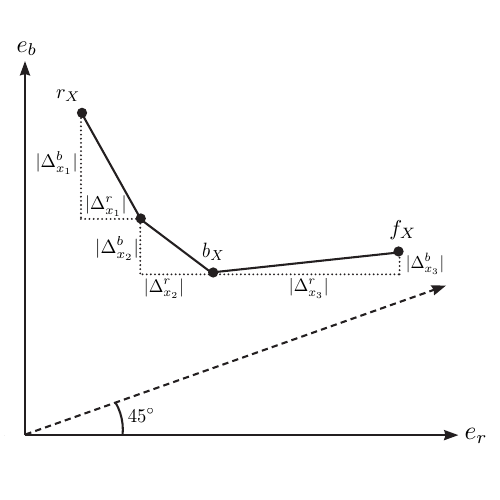}
    \caption{\footnotesize{In this example, the point $r_X$ is implemented by the algorithm $a_0=a_r^*$. The next extreme point is implemented by $a_1$, which satisfies $a_1(x_1)=1-a_r^*(x_1)$, and $a_1(x_i)=a_r^*(x_i)$ for all $i>1$. The point $b_X$ is implemented by the algorithm $a_2$, which satisfies $a_2(x_i)=1-a_r^*(x_i)$ for $i=1,2$, and $a_2(x_i)=a_r^*(x_i)$ for all $i>2$. Finally, $f_X$ is implemented by the algorithm $a_3$, which satisfies $a_3(x_i)=1-a_r^*(x_i)$ for $i=1,2,3$, and $a_3(x_i)=a_r^*(x_i)$ for all $i>3$. In this example, $n_0 = 2$ (i.e., groups $r$ and $b$ have different optimal decisions for covariate vectors $x_1,x_2$), while $n^*=n_1 = 3$.}}
    \label{fig:Prop2}
\end{figure}

The proposition is illustrated in Figure \ref{fig:Prop2}. The two groups disagree over the optimal decision at the realizations $x_1, \dots, x_{n_0}$, since one group prefers $d=1$ while the other group prefers $d=0$. (In the figure, $n_0=2$.) Here, the tradeoff is between the accuracy of one group and that of the other, and these realizations are ordered so that the relative error impact on group $b$ compared to group $r$ is maximized at $x_1$, and decreases as we progress along the sequence. When the assignments for all of these disagreement covariate vectors have been swapped from the $r$-optimal assignment to the $b$-optimal assignment, we obtain an algorithm that implements the $b$-optimal point $b_X$. Part (a) of Proposition \ref{prop:AlgorithmsGB} says that when $X$ is group-balanced, the algorithms $(a_i)_{0 \leq i\leq n_0}$ implement the extreme points of the FA frontier (which is also the Pareto frontier by Theorem \ref{thm:FullDesignPareto}), and mixing between neighboring algorithms of this form yields the entire FA frontier.

When $X$ is group-skewed, the FA frontier continues further by swapping the assignment at covariate vectors $x_{n_0+1}, x_{n_0+2}, \dots$. The two groups agree on the preferred decision at these covariate vectors, so swapping the assignment hurts both groups. Here, the tradeoff is between the accuracy of both groups and fairness; the realizations are ordered so that the relative payoff impact on group $r$  to group $b$ is largest at $x_{n_0+1}$, and decreases as we progress along the sequence. These swaps continue until we reach the fairness-optimal point $f_X$, which is either the extreme point implemented by the algorithm $a_{n^*}$ as in Figure \ref{fig:Prop2} (in which case $\overline{\beta}=1$), or an interior point on the line segment ending on this extreme point (in which case $\overline{\beta}<1$). 

\subsection{Special Cases} \label{sec:SpecialCases}
In some applications, the designer may be able to explicitly  condition the algorithm on group identity.\footnote{Use of such algorithms is illegal in certain contexts. For example, the Equal Opportunity Act prohibits (among other things) lending decisions that condition explicitly on race or gender.} That is, the designer's choice set may be all pairs of algorithms $(a_r,a_b)\in \mathcal{A} = \mathcal{A}_r \times \mathcal{A}_b$, where the algorithm $a_g \in \mathcal{A}_g$ is applied for members of group $g$. (As before, we permit randomizations over pairs of this form.) When there are no constraints on the algorithm, i.e. $\mathcal{A}=\overline{\mathcal{A}}$, then it is equivalent to assume that group identity is included as a covariate in $X$. In this case, the feasible set and fairness-accuracy frontier simplify as follows.

\begin{figure}[h]
        \centering
\includegraphics[scale=0.75]{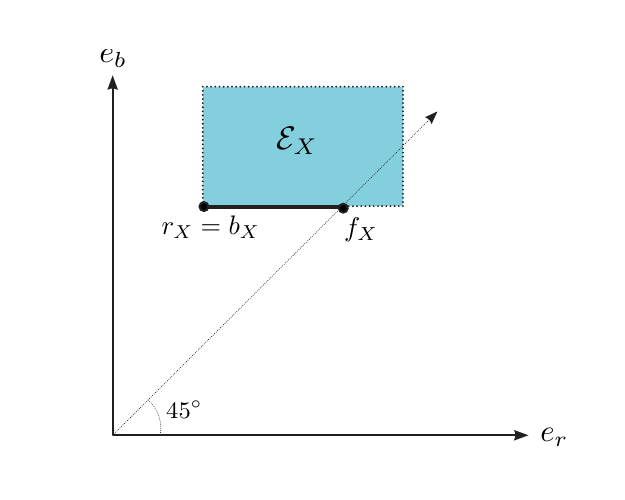}
        \caption{Group-Specific Algorithms}
        \label{fig:ParetoSetGX}
\end{figure}

\begin{proposition}\label{prop:XRevealsG}
Suppose the algorithm can be conditioned on group identity. Then $\mathcal{E}_X$ is a rectangle whose sides are parallel to the axes, and $\mathcal{F}_X$ is the line from $r_X=b_X$ to $f_{X}$.
\end{proposition}

Figure \ref{fig:ParetoSetGX} depicts such a feasible set and fairness-accuracy frontier. 
One endpoint, the Utilitarian-optimal point labeled $r_X = b_X$, gives both groups their minimal feasible error. The other endpoint, the Egalitarian-optimal point $f_{X}$, maximizes fairness. Everywhere along the fairness-accuracy frontier $\mathcal{F}_X$, the worse-off (higher error) group receives its minimal feasible error. Thus when group-specific algorithms are permitted, the fairness-accuracy tradeoff simply becomes a tradeoff between fairness and the welfare of the advantaged group (i.e., the group that the covariate vector is skewed towards). It is also straightforward to see from this result that the error for the disadvantaged group under $X$ must decrease regardless of which FA preference the designer holds.

When the joint distribution $\mathbb{P}$ relating $(X,Y,G)$ satisfies certain independence properties, the FA frontier again simplifies. Specifically, under the assumptions of our previous Examples \ref{ex:SI} and \ref{ex:CI} we obtain the following result.

\begin{proposition}\label{prop:SpecialCase}
\begin{itemize}
    \item[(a)] Suppose $G\indep(X,Y)$. Then $\mathcal{E}_X$ is a line segment on the 45-degree line, and $\mathcal{F}_X$ is a single point.
    \item[(b)] Suppose $G\indep Y \mid X$ and $\mathcal{A}=\overline{\mathcal{A}}$. Then
$\mathcal{F}_X$ is that part of the lower boundary of the feasible set $\mathcal{E}_X$ from the point $b_{X}=r_{X}$ to the point $f_X$.
\end{itemize}
\end{proposition}

These cases are depicted in Figure \ref{fig:SpecialCases}. In Panel (a), the FA frontier is a singleton, and fairness-accuracy preferences
are irrelevant: All designers who agree on the basic
FA-dominance principle outlined in Definition \ref{def:Pareto}
prefer the same algorithm. 
In Panel (b), the left point is the
(shared) group optimal point $r_{X}=b_{X}$, and the right endpoint
is the fairness optimal point $f_{X}$. From $r_{X}=b_{X}$ to $f_{X}$,
the fairness-accuracy frontier consists entirely of positively sloped line segments.
Thus, everywhere along the frontier, the two groups' errors move in
the same direction, implying that the only way to improve fairness
is to decrease accuracy uniformly across groups. The only relevant
difference across designers, then, is how they choose to resolve
strong fairness-accuracy conflicts of this form. As we will see in Section \ref{sec:Empirical}, although these conditions are strong and unlikely to be exactly satisfied, the fairness-accuracy tradeoffs they describe are relevant to real data.

\begin{figure}[h]
    \centering
    \begin{subfigure}[t]{0.45\textwidth}
        \centering
        \includegraphics[width=\textwidth]{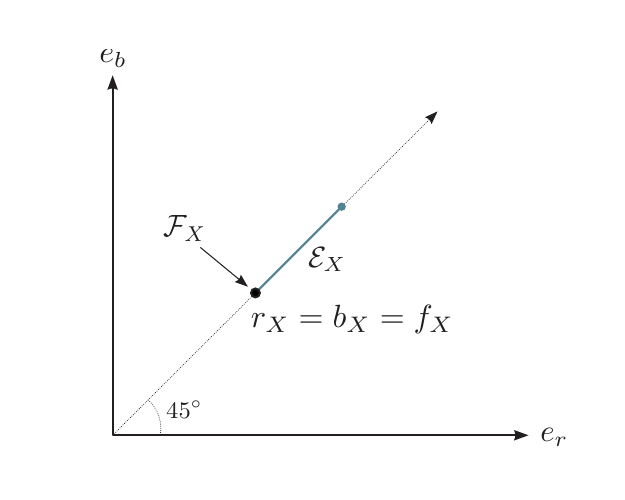}
        \caption{Strong Independence: $G \indep (X,Y)$}
        \label{fig:1a}
    \end{subfigure}
    \hfill
    \begin{subfigure}[t]{0.45\textwidth}
        \centering
        \includegraphics[width=\textwidth]{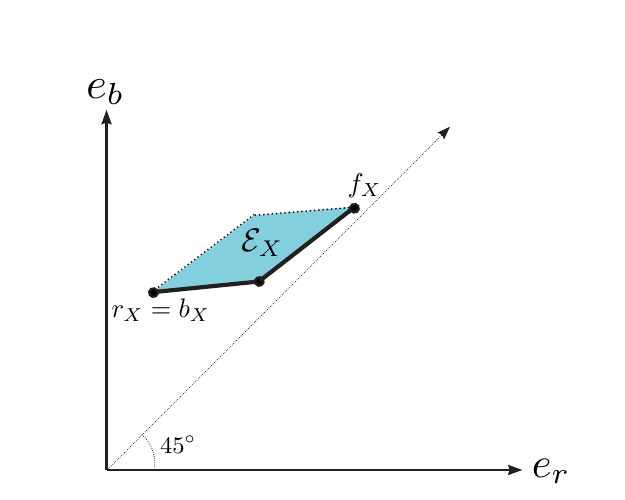}
        \caption{Conditional Independence: $G \indep Y\mid X$}
        \label{fig:1b}
    \end{subfigure}

    \caption{Special Cases}
    \label{fig:SpecialCases}
\end{figure}

\section{Input Design} \label{sec:DesignInputs}

We have so far assumed that the designer directly chooses the best algorithm to maximize a preference that (weakly) responds to both fairness and accuracy. This is a good description of some settings; for example, a company may internalize both fairness and accuracy concerns in its hiring algorithm. But often the algorithm is set by an agent who does not care about fairness across groups, while the inputs used by the algorithm are constrained by a designer who does. For example, a healthcare provider (agent) determining treatment may  seek to maximize the number of correct diagnoses, while a policymaker (designer) may additionally prefer that the accuracy of the provider's treatments be similar across  social groups.  Or, a bank (agent) may seek to maximize profit from loan issuance, while a regulator (designer) may require that the rate at which individuals are incorrectly denied loans does not differ too sharply across groups. In these settings, the designer can often influence the algorithm indirectly by implementing policies that constrain the algorithm's inputs. For example, \citet{ChanEyster} report that as part of an effort to influence Berkeley law school's admissions policy in 1997, UC Berkeley administrators coarsened candidates' LSAT scores into intervals and reported this coarsened variable to the law school admissions committee.

In Section \ref{sec:BayesDesign}, we model this interaction as an information design problem in which the designer constrains the inputs of the algorithm, while the algorithm is chosen by an accuracy-minded agent. In Section \ref{sec:CompareInputDesign}, we provide conditions under which the designer is nevertheless able to implement his favorite point on the fairness-accuracy frontier. In Section \ref{sec:AddCovariate}, we study   whether the designer's optimal regulation of inputs may involve completely banning a covariate such as group identity or a test score.

\subsection{Input Design Model} \label{sec:BayesDesign}

A designer chooses a \emph{garbling} of the covariate vector $X$, which is represented as a mapping $T: \mathcal{X}\rightarrow \Delta\mathcal{T}$ taking realizations of $X$ into distributions over the possible realizations of $T$ (assumed without loss to be finite).\footnote{An interesting direction for future research would be to impose ``procedural fairness'' requirements on the garbling (such as requiring the garbling to be deterministic, or to be monotone in certain covariates), and ask what is achievable under those constraints. In the present paper we maintain throughout full flexibility to garble covariates.} Examples of garblings include the following.

\begin{example}[Banning an Input] \label{ex:DropInput} Suppose $X=(X_1,X_2,X_3)$ and the designer wants to ban the last coordinate $X_3$. The corresponding garbling is $T(x_1,x_2,x_3)=(x_1,x_2)$ with probability 1.
\end{example}

\begin{example}[Coarsening the Input] \label{ex:Coarsen} The set of realizations $\mathcal{X}=\{1,2,3,4\}$ is partitioned into $\{\{1,2\},\{3,4\}\}$, and $T(x)$ reports (with probability 1) the partition element to which $x$ belongs. \end{example}

\begin{example}[Adding Noise] \label{ex:Noise} $T(x)=x+\varepsilon$ where the noise term $\varepsilon$ takes value $+1$ or $-1$ with equal probability.
\end{example}

We view these garblings as information policies that the designer can possibly commit to by law. Real examples of garblings are abundant: The ``ban-the-box" campaign \citep{AganStarr} restricted employers from using  criminal history as an input into hiring decisions (similar to Example \ref{ex:DropInput}); the College Board coarsens a test-taker's answers into an integer-valued score between 400 and 1600 (similar to Example \ref{ex:Coarsen}); and organizations such as the US Census Bureau, Apple, and Google add noise to users' inputs under differential privacy initiatives (similar to Example \ref{ex:Noise}).\footnote{See \citet{GarfinkelAbowdPowazek} for an example reference.}

The agent  chooses
an algorithm $a:\mathcal{T}\rightarrow\mathcal{D}$ that takes as input the garbled variable chosen by the designer. For simplicity in this section, we assume that the full set of algorithms $\overline{\mathcal{A}}$ is available; as before, we allow for randomizations over algorithms. The agent evaluates errors according to
\begin{equation} \label{eq:Utility}
w(e)=-\alpha_{r} e_{r}\left(a\right)-\alpha_{b} e_{b}\left(a\right)
\end{equation}
for some constants $\alpha_{r},\alpha_{b} > 0$ that are known to the designer.\footnote{In Appendix \ref{app:ExtendBD}, we prove additional results for the case when a coefficient $\alpha_g$
is negative so the agent is adversarial or biased against group $g$ and prefers to \emph{increase} error for that group. This could reflect taste-based discrimination by the agent. Note that this falls outside of our class of FA preferences.}$^,$\footnote{We view the typical setting as one in which the policymaker has fairness concerns that the agent does not share, but the reverse case (in which the agent has fairness concerns that the policymaker does not share) is also interesting. See Section \ref{sec:Extensions} for a brief discussion of some technical complications that arise in this case.} When $\alpha_{g}=p_{g}$, the agent is Utilitarian and exclusively cares about aggregate accuracy; otherwise, the agent's preference falls in the broader class of generalized Utilitarian preferences mentioned in Example \ref{ex:SWA}.\footnote{The agent's utility may involve weights different from Utilitarian weights if errors for the two groups are differentially costly for the agent. For example, suppose the agent is a bank manager and group $b$ is wealthier than group $r$. In this case, loans for group $b$ may be of higher value, so that incorrectly classifying creditworthy individuals in group $b$ is more costly. This corresponds to scaling the loss $\ell$ for group $b$ by $\alpha_{b}/p_b>1$.} We can rewrite (\ref{eq:Utility}) as
\begin{alignat*}{1}
w(e)= & -\sum_{g}\alpha_{g}\mathbb{E}\left[\ell\left(a\left(T\right),Y\right)\mid G=g\right]\\
= & -\sum_{t\in\mathcal{T}}p_{t}\sum_{y,g}\frac{\alpha_{g}}{p_{g}} \cdot \mathbb{P}\left(Y=y, G=g \mid T=t\right) \cdot \ell\left(a(t), y \right),
\end{alignat*}
where $p_{t}$ is the probability of $T=t$. Thus the agent's problem
of minimizing ex-ante error is equivalent to  the following
ex-post problem\footnote{When the agent's utility is non-linear in group errors, the ex-ante and ex-post problems are not equivalent in general.}
\begin{equation}
a\left(t\right)\in\argmin_{d\in\mathcal{D}}\sum_{y,g}\frac{\alpha_{g}}{p_{g}} \cdot \mathbb{P}\left(Y=y, G=g\mid T=t\right) \cdot \ell\left(d, y \right).\label{eq:choiceDM}
\end{equation}

\begin{definition} An error pair $e=(e_r,e_b)$ is \emph{implemented by $T$} if  there exists an algorithm $a_{T}$ satisfying (\ref{eq:choiceDM}) such that $e = e(a_T)$.
\end{definition}

Fixing any covariate vector $X$, we define the input-design feasible set to be all error pairs that can be implemented by some garbling $T$, and the input-design fairness-accuracy frontier to be the set of  group error pairs that are FA-undominated in the input-design feasible set.

\begin{definition}
The \emph{input-design feasible set} given covariate vector $X$ is 
\[
\mathcal{E}^{*}_X := \{e(a_T) : T \textrm{ is a garbling of } X\}.
\]
\end{definition}

\begin{definition} 
The \emph{input-design FA frontier} given $X$ is 
\[
\mathcal{F}^{*}_X := \{e\in \mathcal{E}^{*}_{X} : \textrm{there exists no } e'\in \mathcal{E}^{*}_{X} \textrm{ such that } e' >_{FA} e \}.
\]

\end{definition}

\subsection{Comparing Input Design and Algorithm Design} \label{sec:CompareInputDesign}

The following proposition says that under relatively weak conditions, it is without loss to have control only of the algorithm's inputs: Any error pair that a designer would choose to implement in the unconstrained problem  (i.e., given control of the algorithm) can also be achieved under input design. To state the result, we define
\begin{align}
    \label{eq:no_info_utilitarian}
    e_0 := \min_{d \in \mathcal{D}} \left(\alpha_r \cdot \mathbb{E}[\ell(d,Y) \mid G=r ] + \alpha_b \cdot \mathbb{E}[\ell(d,Y) \mid G=b ] \right)
\end{align}
to be the best payoff that the agent can achieve given no information, and 
\begin{equation} \label{eq:H}
H := \{(e_r,e_b) \, : \, \alpha_r e_{r}+ \alpha_b e_{b} \leq e_0 \}
\end{equation}
to be the halfspace including all error pairs that improve the agent's payoff relative to no information. 

\begin{proposition}[When Input Design is Without Loss] \label{thm:FullVersusBayes} The following hold:
\begin{itemize}
\item[(a)] Suppose $X$ is group-balanced. Then, $\mathcal{F}^{*}_{X} = \mathcal{F}_X$  if and only if  $r_X,b_X \in H$.

\item[(b)] Suppose $X$ is $g$-skewed. Then, $\mathcal{F}^{*}_{X} = \mathcal{F}_X$  if and only if $g_{X},f_{X} \in H$.
\end{itemize}
\end{proposition}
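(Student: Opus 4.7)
The plan is to reduce this proposition to a single structural lemma: $\mathcal{E}^{*}_X = \mathcal{E}_X \cap H$. Once this identity is in hand, both parts follow from short geometric arguments placing $\mathcal{F}_X$ inside $H$.

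For the lemma, $\mathcal{E}^{*}_X \subseteq \mathcal{E}_X \cap H$ is routine: any $a_T$ composed with $T$ is an algorithm with $X$ as input, and since the agent can always ignore the signal and attain $e_0$, any best response satisfies $V(a_T) := \alpha_r e_r(a_T) + \alpha_b e_b(a_T) \leq e_0$. The substantive step is $\mathcal{E}_X \cap H \subseteq \mathcal{E}^{*}_X$: given any algorithm $a \colon \mathcal{X} \to \Delta(\mathcal{D})$ with $V(a) \leq e_0$, I would implement its error pair via the direct-recommendation garbling $T(x) \sim a(x)$. Writing $\delta_g(x) := \mathbb{P}(X = x \mid G = g)\bigl(\mathbb{E}[\ell(1,Y) \mid X = x, G = g] - \mathbb{E}[\ell(0,Y) \mid X = x, G = g]\bigr)$ and $w(x) := \alpha_r \delta_r(x) + \alpha_b \delta_b(x)$, the agent's two obedience inequalities (prefer $d = t$ at $T = t$ for $t \in \{0,1\}$) both reduce to $\sum_x a(x) w(x) \leq \min\bigl(0, \sum_x w(x)\bigr)$. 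Using the identities $\sum_x a(x) w(x) = V(a) - V(a^{d=0})$ and $\sum_x w(x) = V(a^{d=1}) - V(a^{d=0})$, this rearranges to $V(a) \leq \min(V(a^{d=0}), V(a^{d=1})) = e_0$, which holds by hypothesis.

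Necessity is immediate: if $\mathcal{F}^{*}_X = \mathcal{F}_X$, the endpoints identified by Theorem \ref{thm:FullDesignPareto} ($r_X, b_X$ in (a) and $g_X, f_X$ in (b)) lie in $\mathcal{F}^{*}_X \subseteq \mathcal{E}^{*}_X \subseteq H$. For sufficiency, I first establish $\mathcal{F}_X \subseteq H$. In (a), for each $e$ on the Pareto frontier from $r_X$ to $b_X$, the chord point $\tilde e \in [r_X, b_X] \subseteq H$ with $\tilde e_r = e_r$ satisfies $\tilde e_b \geq e_b$, so $\alpha_r e_r + \alpha_b e_b \leq \alpha_r \tilde e_r + \alpha_b \tilde e_b \leq e_0$. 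In (b), on the Pareto piece from $g_X$ to $b_X$ the linear function $\alpha_r e_r + \alpha_b e_b$ attains its maximum at an endpoint, while on the positively-sloped piece from $b_X$ to $f_X$ both coordinates strictly increase so the function rises monotonically; hence its frontier-wide maximum is attained at $g_X$ or $f_X$, both in $H$ by hypothesis. Combining $\mathcal{F}_X \subseteq H$ with the lemma yields $\mathcal{F}_X \subseteq \mathcal{E}_X \cap H = \mathcal{E}^{*}_X$; since any $e \in \mathcal{F}_X$ is FA-undominated in the superset $\mathcal{E}_X$ it is also undominated in the subset $\mathcal{E}^{*}_X$, giving $\mathcal{F}_X \subseteq \mathcal{F}^{*}_X$. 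Conversely, any $e \in \mathcal{F}^{*}_X \subseteq H$ is FA-undominated in $\mathcal{E}_X$: any dominator $e' \in \mathcal{E}_X$ would satisfy $V(e') \leq V(e) \leq e_0$, placing $e' \in \mathcal{E}_X \cap H = \mathcal{E}^{*}_X$ and contradicting $e \in \mathcal{F}^{*}_X$.

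The main obstacle is the obedience computation in the lemma. While the agent-IR constraint $V(a) \leq e_0$ is an obvious necessary feasibility condition, establishing that it is also sufficient — that both obedience inequalities collapse to the same single threshold — hinges on the zonotope-style decomposition $e(a) = e^{d=0} + \sum_x a(x)\delta(x)$ together with the precise identity $\min\bigl(0, V(a^{d=1}) - V(a^{d=0})\bigr) = e_0 - V(a^{d=0})$, which bundles the two per-signal obedience thresholds into the minimum over the two constant algorithms.
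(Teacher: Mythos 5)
Your proposal is correct and follows essentially the same route as the paper: it first establishes $\mathcal{E}^{*}_X = \mathcal{E}_X \cap H$ via a direct-recommendation garbling whose two obedience constraints collapse to the agent-IR condition $V(a)\leq e_0$ (exactly the paper's Lemma \ref{lemm:BayesDesign}), and then shows $\mathcal{F}_X \subseteq H$ using the same geometric facts about the frontier (negatively sloped Pareto piece, positively sloped piece ending at $f_X$). Your chord argument in part (a) and the convexity-of-the-lower-boundary argument in part (b) are minor cosmetic variants of the paper's triangle argument with $m_X$ and its edge-slope argument, so there is nothing substantive to flag.
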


This result follows from the subsequent lemma, which says that the input-design feasible set is equal to the intersection of the unconstrained feasible set and $H$, with an analogous statement relating the fairness-accuracy frontiers. Related results appear in \cite{AlonsoCamara} and \citet{Ichihashi}, although we provide an independent argument in Appendix \ref{lemm:BayesDesign} for completeness.

\begin{lemma} \label{lemm:BayesDesign} For every covariate vector $X$, the input-design feasible set is $\mathcal{E}^{*}_{X} = \mathcal{E}_X \cap H$ and the input-design FA frontier is $\mathcal{F}^{*}_{X} = \mathcal{F}_X \cap H$.
\end{lemma}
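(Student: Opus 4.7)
I will establish the feasible-set equality $\mathcal{E}^*_X = \mathcal{E}_X \cap H$ first, and then deduce the frontier equality as a short order-theoretic corollary. For the easy inclusion $\mathcal{E}^*_X\subseteq\mathcal{E}_X\cap H$: given any garbling $T:\mathcal{X}\to\Delta(\mathcal{T})$ with best-response $a_T$ satisfying (\ref{eq:choiceDM}), the composed algorithm $\tilde a(d\mid x)=\sum_t T(t\mid x)\,a_T(d\mid t)$ lies in $\mathcal{A}_X$ and induces the identical joint law of $(X,Y,G,d)$, so $e(a_T)=e(\tilde a)\in\mathcal{E}_X$. For $H$: since the agent could always ignore $T$ and play any constant $d_0$, pointwise optimality of $a_T$ in (\ref{eq:choiceDM}), integrated over $T$, yields $\alpha_r e_r(a_T)+\alpha_b e_b(a_T)\leq \alpha_r\mathbb{E}[\ell(d_0,Y)\mid G=r]+\alpha_b\mathbb{E}[\ell(d_0,Y)\mid G=b]$; minimizing the right side over $d_0$ gives $e(a_T)\in H$.

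\textbf{Hard inclusion.} Fix $(e_r,e_b)\in\mathcal{E}_X\cap H$ induced by some $a\in\mathcal{A}_X$. I will construct a garbling in ``direct-recommendation'' form: take $\mathcal{T}=\mathcal{D}$ and interpret signal $T=d$ as ``play $d$'', writing $q(d\mid x)$ for the conditional signal probabilities. Matching the target error pair amounts to choosing the joint $(X,T)$ so that the induced $(X,d)$-marginal reproduces the loss integrals of $(X,a(X))$. Obedience at each $T=d$ translates into a family of linear inequalities in $q$, namely that $d$ minimizes $\sum_{y,g}(\alpha_g/p_g)\mathbb{P}(Y=y,G=g\mid T=d)\ell(\cdot,y)$. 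The key claim is that the single scalar IR inequality $\alpha_r e_r+\alpha_b e_b\leq e_0$ guaranteed by $H$ is exactly the budget needed to satisfy the entire obedience family simultaneously. One direction is immediate: summing the obedience inequalities over $d$, weighted by $\mathbb{P}(T=d)$, collapses the per-signal deviation conditions to the scalar ex-ante IR inequality once one minimizes the deviation action. The reverse is the subtle step: starting from any algorithm achieving $(e_r,e_b)$ with ex-ante IR, I iteratively split each non-obedient signal into a convex combination of locally best recommendations over the same posterior, preserving the $(X,d)$-marginal and therefore the error pair. The slack in IR ensures that such a splitting is always feasible; this is the concavification/splitting argument of Alonso-Camara and Ichihashi, adapted here to the Blackwell-constrained garbling setting.

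\textbf{Frontier equality and main obstacle.} Granting the feasible-set identity, the frontier identity is a quick order-theoretic consequence. For $\supseteq$: any $e\in\mathcal{F}_X\cap H$ lies in $\mathcal{E}^*_X=\mathcal{E}_X\cap H$ and has no FA-dominator in the larger set $\mathcal{E}_X\supseteq\mathcal{E}^*_X$, so $e\in\mathcal{F}^*_X$. For $\subseteq$: if $e\in\mathcal{F}^*_X$ and $e'\in\mathcal{E}_X$ FA-dominates $e$, then $e'_r\leq e_r$ and $e'_b\leq e_b$, so (using $\alpha_r,\alpha_b\geq 0$) $\alpha_r e'_r+\alpha_b e'_b\leq\alpha_r e_r+\alpha_b e_b\leq e_0$, placing $e'\in H\cap\mathcal{E}_X=\mathcal{E}^*_X$ and contradicting $e\in\mathcal{F}^*_X$. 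The main obstacle throughout is the hard inclusion: one must realize a prescribed pair of group-specific errors via a garbling while meeting an entire family of per-signal obedience inequalities from only a single scalar slack. What makes this possible is the convex structure of the set of feasible posteriors on $(Y,G)$ given garblings of $X$, which is precisely what the concavification step exploits.
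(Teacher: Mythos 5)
Your easy inclusion and your derivation of the frontier identity from the feasible-set identity are both correct and match the paper's argument. The gap is in the hard inclusion, at exactly the step you flag as subtle: showing that the single ex-ante inequality $\alpha_r e_r + \alpha_b e_b \le e_0$ implies \emph{every} per-signal obedience constraint of the direct recommendation mechanism. Your proposed fix --- iteratively splitting each non-obedient signal into ``a convex combination of locally best recommendations over the same posterior, preserving the $(X,d)$-marginal'' --- does not work as described: if you relabel a non-obedient recommendation to the locally best action, you change the decision actually taken and hence the error pair; while if you split a signal into sub-signals carrying the \emph{same} recommendation, the sub-posteriors average to the original posterior and, by linearity of the obedience inequality in the posterior, at least one sub-signal remains non-obedient. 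Nothing in the sketch explains how the single scalar of slack in $H$ gets distributed across the whole obedience family, and the appeal to a concavification argument is not a substitute for this.

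The missing idea --- and the reason the paper's proof works --- is that with $\mathcal{D}=\{0,1\}$ each obedience constraint is \emph{individually equivalent} to a constant-deviation comparison, not merely implied by such comparisons in aggregate. The strategy ``always play $0$'' agrees with the obedient strategy on the event $T=0$ and differs from it only on $T=1$; hence the ex-ante payoff gap between obeying and always playing $0$ equals $\mathbb{P}(T=1)$ times the conditional gap at $T=1$, which is precisely the obedience constraint for the recommendation $d=1$ (and symmetrically for $d=0$). So the two obedience constraints are jointly equivalent to ``obeying beats both constant strategies,'' i.e.\ to $e\in H$, and the direct mechanism built from \emph{any} algorithm achieving $(e_r,e_b)$ is automatically obedient; no splitting is needed. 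This equivalence is where binarity of $\mathcal{D}$ enters essentially --- the paper notes in Section 5 that the lemma relies on the binary-decision assumption --- and your argument, which never invokes binarity, cannot be complete as stated.
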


Clearly the designer cannot hold the agent to a payoff lower than what the agent can guarantee with no information, so $\mathcal{E}^{*}_{X} \subseteq  \mathcal{E}_X \cap H$. In the other direction, we need to show that every point in $\mathcal{E}_X \cap H$ can be implemented by a garbling of $X$. The proof is by construction: If the designer garbles $X$ into recommendations of the decision, then the obedience constraints reduce precisely to the condition that the agent's payoff is improved relative to no information, i.e., the error pair belongs to $H$.  This yields the lemma, and Figure \ref{fig:feasibleSetBayes} illustrates how Proposition \ref{thm:FullVersusBayes} is implied by Lemma \ref{lemm:BayesDesign}.

These results tell us that input design is always sufficient to recover part of the original fairness-accuracy frontier. Moreover, so long as certain points ($r_X$ and $b_X$ in the case of a group-balanced $X$, $r_X$ and $f_{X}$ in the case of an $r$-skewed $X$, or $b_X$ and $f_{X}$ in the case of a $b$-skewed $X$) improve the agent's payoffs relative to no information, then the designer can induce the agent to choose the designer's most preferred outcome even without explicit control of the algorithm. Conversely, when these conditions do not hold, then input design is limiting for some designers.

\begin{figure}[h]
\begin{center}
\includegraphics[scale=0.6]{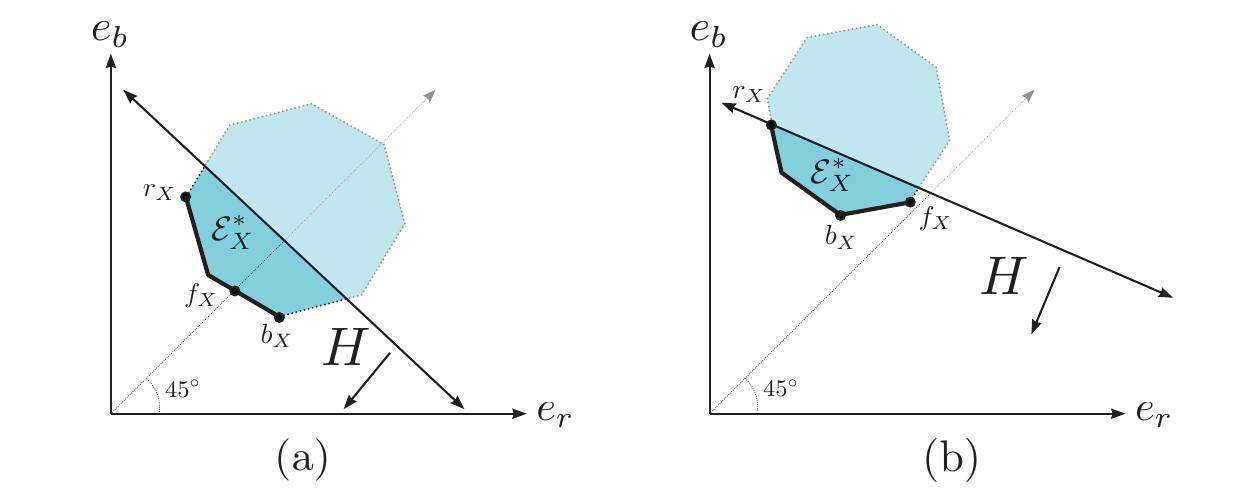}
\end{center}
\caption{\footnotesize{Depiction of an example input-design fairness-accuracy frontier for (a) a group-balanced covariate vector $X$ and (b) an $r$-skewed covariate vector $X$. In Panel (a), it is sufficient to check $r_X,b_X \in H$ to determine whether the entire unconstrained fairness-accuracy frontier belongs to $H$. This condition is satisfied in the figure, so every designer can implement his favorite unconstrained outcome using input design. In Panel (b), it is sufficient to check whether $r_X,f_{X} \in H$. This condition is failed in the figure, so some designer cannot implement his favorite unconstrained outcome using input design.}}
\label{fig:feasibleSetBayes}
\end{figure}

\subsection{Excluding a Covariate}\label{sec:AddCovariate}

We next turn to the question of whether the optimal garbling may involve a complete ban on the use of a specific covariate. For example, healthcare providers disagree over whether race should be a permitted input into clinical prediction algorithms \citep{Vyasetal,Manski2022,Manski}, and universities differ in whether they choose to  exclude consideration of standardized test scores.\footnote{See \url{https://www.nytimes.com/2021/05/15/us/SAT-scores-uc-university-of-california.html}.}

Since the designer and agent have (potentially) misaligned preferences, banning an input can be optimal, and we demonstrate this in a simple example in Section \ref{sec:SimpleExample}. But as we show in Sections \ref{sec:GNew} and \ref{sec:GKnown}, there are two important classes of inputs for which bans are strictly worse for all designers. We formalize ``strictly worse for all designers'' by comparing the fairness-accuracy frontier with and without the input under consideration.

 \begin{definition} For any pair of non-empty sets $S, S' \subset \mathbb{R}^2$,  write $S >_{FA} S'$ if every $e' \in S'$ is FA-dominated by some $e \in S$.
 \end{definition}
 
 \noindent When $\mathcal{F}^*_{X,X'} >_{FA} \mathcal{F}^*_X$, then a complete ban on the covariates in $X'$ is not optimal for any designer with a FA-preference. That is, every designer can achieve a strictly higher payoff by garbling $(X,X')$ rather than by garbling $X$ alone. This comparison is distinct from earlier work that considers fully disclosing $(X,X')$ versus fully disclosing $X$.\footnote{Our property of FA dominance does not in general rank the information policy of fully revealing $X$ versus fully revealing $(X,X')$. That is, it may be that $\mathcal{F}^*_{X,X'}$ FA-dominates $\mathcal{F}^*_X$, but the designer's payoff is higher from  revealing $X$ than from  revealing $(X,X')$. 
} However, in some cases it will be optimal to disclose $(X,X')$ fully; Propositions \ref{prop:XGSimple} and \ref{prop:SimpleXXG} characterize when this is the case for designers with simple preferences.

\subsubsection{Example} \label{sec:SimpleExample}

We start with an example to demonstrate that banning an input can be optimal for the designer. Suppose $\mathcal{Y} = \{0,1\}$ and  $Y$ and $G$  are independently and uniformly distributed, i.e., $\mathbb{P}(Y = y, G = g) = 1/4$ for any $y \in \{0,1\}$ and $g \in \{r, b\}$. Let $X$ be a null signal; that is, $X = x_0$ with probability one. Further let $X'$ be a binary signal with the following conditional probabilities $\mathbb{P}(X' \mid Y, G)$:\footnote{In this example, neither $X$ nor $X'$ reveals group identity. Thus, this example falls outside of the settings considered in the previous two subsections.} 
\[\begin{array}{ccc}
 & X'=1 & X'=0 \\
 Y=1 & 1 & 0 \\
 Y=0 & 0 & 1 
\end{array} \quad \quad \quad \begin{array}{ccc}
    & X'=1 & X'=0 \\
 Y=1 & 0.6 & 0.4 \\
 Y=0 & 0.4 & 0.6 
\end{array}\]
\[G=r \hspace{50mm} G=b\]
That is, $X'$ is perfectly informative about the individuals in group $r$, and imperfectly informative about those in group $b$. Suppose  the loss function is $\ell(d,y) = \mathbbm{1}(d\neq y)$, the agent is Utilitarian ($\alpha_r=p_r=1/2$ and $\alpha_b=p_b=1/2$), and the designer is Egalitarian.

\begin{figure}[h]
    \centering
    \includegraphics[scale=0.6]{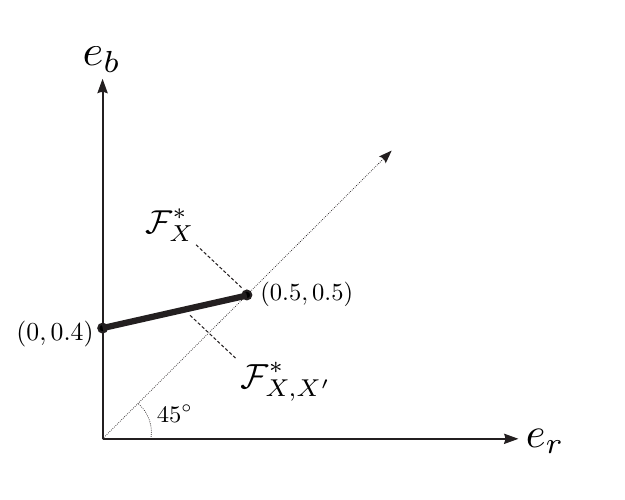}
    \caption{\footnotesize{The fairness-accuracy frontier given $(X,X')$ is the line segment connecting $(0, 0.4)$ with $(0.5, 0.5)$. Every nontrivial garbling of $(X,X')$ leads to a point on this frontier that is different from $(0.5,0.5)$, and hence yields a strictly negative payoff for the designer.}}\label{fig:example}
\end{figure}

The input-design feasible set given $X$ only is the singleton $\{(0.5,0.5)\}$, and the Egalitarian designer's payoff at this point is zero. But if the designer chooses any garbling of $(X,X')$ that provides information about $X'$, his payoff will be strictly negative. Intuitively, the Utilitarian agent uses what he learns about $X'$ to maximize aggregate accuracy. Since this information is necessarily more informative about group $r$ than about group $b$, decisions based on this information increase the gap between the two group errors, reducing the designer's payoff.\footnote{While we assume an Egalitarian designer here for simplicity, a similar construction is possible for any designer who places sufficient weight on fairness considerations.} Thus, it is strictly optimal for the designer to exclude all information about $X'$. 
    
\subsubsection{Excluding Group Identity} \label{sec:GNew}

We next consider the special case in which $X'=G$. The property of group balance (suitably strengthened) turns out to imply that banning group identity is never optimal. 

\begin{definition} Say that $X$ is \emph{strictly group-balanced} if $e_r<e_b$ at $r_X$ and $e_b<e_r$ at $b_X$. \label{def:StrictGroupBalance}
\end{definition}

\noindent Relative to group-balance, strict group-balance rules out covariate vectors $X$ for which $r_X = b_X = f_{X}$. 

\begin{proposition} \label{prop:ExcludeG}
Suppose $r_X,b_X \in H$. Then $\mathcal{F}^*_{X,G} >_{FA} \mathcal{F}^*_X$ if and only if $X$ is strictly group-balanced.\footnote{
The assumption $r_X, b_X \in H$ makes the above result easier to state as an if-and-only-if condition. But it follows from our proof of Proposition \ref{prop:ExcludeG} that even when this assumption fails, strict group-balance is a sufficient condition for the frontier to uniformly worsen when excluding $G$.}
\end{proposition}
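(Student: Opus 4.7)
The plan is to combine Proposition \ref{prop:XRevealsG} (since $(X,G)$ reveals $G$) with Lemma \ref{lemm:BayesDesign} to describe both frontiers concretely. Writing $r^{*}=(r_X)_r$ and $b^{*}=(b_X)_b$, the rectangle $\mathcal{E}_{X,G}$ has lower-left corner $r_{X,G}=b_{X,G}=(r^{*},b^{*})$ (since conditioning on $G$ cannot further reduce the minimum error of a single group), so $\mathcal{F}_{X,G}$ is the line segment from $(r^{*},b^{*})$ to $f_{X,G}$: along the edge $\{e_b=b^{*}\}$ if $r^{*}<b^{*}$, along $\{e_r=r^{*}\}$ if $r^{*}>b^{*}$, and degenerate if $r^{*}=b^{*}$. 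The hypothesis $r_X,b_X\in H$ together with convexity of $\mathcal{E}_X$ puts the Pareto portion of $\mathcal{F}_X$ (from $r_X$ to $b_X$) inside $H$: at each $e_r$ in the relevant range, this portion lies at or below the chord $r_X b_X$ in the $e_b$-coordinate, and the chord lies in $H$ by convexity of the half-space.

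For the direction that strict group-balance implies uniform worsening, Theorem \ref{thm:FullDesignPareto}(a) identifies $\mathcal{F}_X$ with the Pareto portion, hence $\mathcal{F}^{*}_X=\mathcal{F}_X$. By the $r\leftrightarrow b$ symmetry I may assume $r^{*}\leq b^{*}$. In the non-degenerate case $r^{*}<b^{*}$, for each $e\in\mathcal{F}_X$ I set $e'=(\min(e_r,b^{*}),b^{*})$; this sits on $\mathcal{F}_{X,G}$ (using $e_r\leq \max_{e''\in\mathcal{E}_{X,G}}e''_r$ to ensure $e'_r$ does not overshoot $f_{X,G}$) and componentwise at or below $e$, hence is in $H$ and thus in $\mathcal{F}^{*}_{X,G}$. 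A short case check splitting on $e_r\leq b^{*}$ vs.\ $e_r>b^{*}$ (and using $b^{*}\leq e_b$) shows $\vert e'_r-b^{*}\vert\leq\vert e_r-e_b\vert$, so $e'>_{FA}e$; strictness holds because $e_b>b^{*}$ on $\mathcal{F}_X\setminus\{b_X\}$, while at $e=b_X$ strict group-balance forces $(b_X)_r>b^{*}$ and therefore $e'_r<e_r$. In the degenerate case $r^{*}=b^{*}$, strict group-balance rules out $r_X=b_X$, and the single point $(r^{*},r^{*})$ strictly FA-dominates every $e\in\mathcal{F}_X$.

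For the converse, suppose strict group-balance fails. By symmetry assume $(r_X)_r\geq(r_X)_b$, so $r^{*}\geq(r_X)_b\geq b^{*}$ and $\mathcal{F}_{X,G}\subseteq\{(r^{*},t):t\in[b^{*},r^{*}]\}$. The point $r_X=(r^{*},(r_X)_b)$ already lies on this segment, and I argue no $e'=(r^{*},t)\in\mathcal{F}^{*}_{X,G}$ strictly FA-dominates $r_X$: the accuracy condition $e'_b\leq(r_X)_b$ forces $t\leq(r_X)_b$, while the fairness condition $\vert r^{*}-t\vert\leq r^{*}-(r_X)_b$ forces $t\geq(r_X)_b$; hence $t=(r_X)_b$ and $e'=r_X$, which is not a strict improvement. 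Finally, $r_X\in\mathcal{F}_X$ in every configuration compatible with this sub-case --- as an endpoint in the group-balanced regime and as the transition point on the lower boundary from $b_X$ to $f_X$ in the $b$-skewed regime (both via Theorem \ref{thm:FullDesignPareto}) --- and $r_X\in H$ by hypothesis, so $r_X\in\mathcal{F}^{*}_X$ witnesses the failure of uniform worsening. The symmetric failure $(b_X)_b\geq(b_X)_r$ is handled by $b_X$ in place of $r_X$.

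The main obstacle is the fairness inequality in the first direction, which requires handling the geometry on both sides of the 45-line as $\mathcal{F}_X$ crosses it; a secondary concern is verifying that the constructed dominator $e'$ actually sits on the (possibly truncated) segment $\mathcal{F}_{X,G}$ rather than past $f_{X,G}$, which reduces to the containment $\mathcal{E}_X\subseteq\mathcal{E}_{X,G}$ of feasible sets.
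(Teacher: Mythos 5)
Your proof is correct, and it rests on the same geometric facts as the paper's: the minimal feasible error of each group is unchanged when $G$ is added, so $\mathcal{E}_{X,G}$ is the bounding rectangle of $\mathcal{E}_X$ and $\mathcal{F}_{X,G}$ is an axis-parallel segment from the corner $(r^{*},b^{*})$ to $f_{X,G}$, after which Theorem \ref{thm:FullDesignPareto}, Propositions \ref{prop:XRevealsG} and \ref{thm:FullVersusBayes} pin down both input-design frontiers. Where you diverge is in the logical scaffolding. The paper first proves Lemma \ref{lemm:UniformWorsen} (uniform worsening is equivalent to the two frontiers being disjoint), then in the forward direction shows disjointness --- the only point of the lower boundary from $r_X$ to $b_X$ with $e_b=b^{*}$ is $b_X$, which strict group-balance places off the horizontal segment --- and lets compactness supply the dominating points. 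You bypass that lemma entirely: in the forward direction you exhibit the dominator explicitly as the projection $e'=(\min(e_r,b^{*}),b^{*})$ and verify the three FA inequalities (including strictness at $b_X$ and in the degenerate case $r^{*}=b^{*}$) by a direct case check; in the converse you exhibit $r_X$ (the paper uses $b_X$, a difference of orientation only) as a common point of $\mathcal{F}^{*}_{X}$ and $\mathcal{F}^{*}_{X,G}$ and verify by hand that no point of the vertical segment strictly $>_{FA}$-dominates it, rather than invoking the general fact that a frontier point cannot be dominated within its own frontier. The constructive route is more self-contained and makes the dominating garbling concrete, at the cost of the case analysis for the fairness inequality; the paper's route is shorter once Lemma \ref{lemm:UniformWorsen} is in place and reuses that lemma for Proposition \ref{prop:ExcludeX'overG}. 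All the steps you flag as delicate (that $e'$ does not overshoot $f_{X,G}$, that the Pareto portion of $\mathcal{F}_X$ lies in $H$, and that $r_X\in\mathcal{F}_X$ in the $b$-skewed regime as the transition vertex of the lower boundary) check out.
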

 
That is, if (and only if) $X$ is strictly group-balanced, every error pair on the fairness-accuracy frontier given $X$  is FA-dominated by an error pair on the fairness-accuracy frontier given $(X,G)$. This result builds on previous findings that \emph{disparate treatment} (using different rules for individuals in different groups) may be necessary to preclude \emph{disparate impact} (effecting disparate harms across groups).\footnote{This tension between disparate treatment and disparate impact is noted in explicitly in works such as \citet{chouldechova} and \citet{RambachanEtAl}, and is implied by results in \citet{ChanEyster}.} Specifically, Proposition \ref{prop:ExcludeG}  implies that to reduce disparate impact, it may be necessary to impose information policies that are asymmetric across groups. Interestingly, this may not involve fully revealing $G$, so the algorithm may be formally group-blind (thus not exhibiting disparate treatment).\footnote{The algorithm exhibits disparate treatment if, holding all other covariates equal, it yields different outputs depending on the individual's group identity. See  \url{https://www.justice.gov/crt/book/file/1364106/download} for definitions of disparate treatment and impact.}  Nevertheless, if we consider the total procedure---taking into account both information design and algorithm design---then two individuals who are otherwise identical but belong to different groups may receive different distributions of outcomes. This distinction brings up an interesting question regarding how disparate treatment should be conceptualized in settings where both information design and algorithm design are present.

For restricted classes of preferences, we can further characterize how the variable $G$ is used in the designer's optimal garbling. For example, take the class of simple utility functions introduced in Section \ref{sec:altchar}, which are of the form
\[-\gamma_r e_r - \gamma_b e_b - \gamma_f \vert e_r - e_b \vert\]
where $\gamma_r,\gamma_b>0$ and $\gamma_f \geq 0$. We will show that every designer with a utility function of this form can achieve their optimal payoff using either of two garblings of $(X,G)$ that we  now define.

First, for any $(X,G)$, let \emph{the fully revealing garbling} $T_{X,G}$ be the garbling that directly reveals  $(X,G)$, i.e., every $(x,g)$ is mapped to itself with probability 1. 

Second, recall that $a_g^*$ denotes the $g$-optimal algorithm (which assigns to each covariate vector $x$ the decision that minimizes group $g$'s error). Let $\overline{a}_r$ denote the algorithm that instead assigns to each covariate vector $x$ the action $1-a_r^*(x)$ (i.e., the action which is not optimal for group $r$), and let $\overline{e}_r$ denote the group $r$ error under $\overline{a}_r$. Further let  $(e^R_r,e^R_b)$ denote the pair of group errors at the $r$-optimal point $r_X$. The \emph{$r$-shaded garbling}, denoted $T^r_{X,G}$, maps each $(x,b)$ to the message $a_b^*(x)$ with probability 1, and maps each $(x,r)$ to the message $a^*_r(x)$ with probability
\[
\beta = \max\left\{\frac{\overline{e}_r - e_b^R}{\overline{e}_r - e_r^R}, ~~0 \right\} 
\]
and to the message $1-a^*_r(x)$ otherwise. Intuitively, the $r$-shaded garbling preserves all the information in $X$ for members of group $b$, but adds noise to this information for members of group $r$. Fixing any value of $\overline{e}_r$, the amount of weight on the message $a_r^*(x)$ (i.e., the ``right'' action) is increasing in $e_r^R$ and decreasing in $e_b^R$. That is, the better off group $r$ is (and the worse off group $b$ is) at group $r$'s optimal point, the more noise the $r$-shaded garbling adds to group $r$'s covariates. 

\begin{proposition} \label{prop:XGSimple} Suppose the designer has a simple preference with parameters $(\gamma_r,\gamma_b,\gamma_f)$. Fix any $X$ for which $f_{X,G} \in H$. Suppose  (without loss) that $(X,G)$ is either group-balanced or $r$-skewed. Then:
\begin{itemize}
    \item[(a)] The fully revealing garbling $T_{X,G}$ is optimal if $\gamma_r \geq \gamma_f$.
    \item[(b)] The $r$-shaded garbling $T^r_{X,G}$ is optimal if $\gamma_r \leq \gamma_f$.
\end{itemize}
\end{proposition}

That is, if the designer's weight on the inequity term $\vert e_r - e_b \vert$ is sufficiently strong, he will prefer the $r$-shaded garbling. Otherwise, he achieves his optimal point by revealing all of the information in $(X,G)$ un-garbled. In the knife-edge case $\gamma_r=\gamma_f$, the designer is indifferent between these garblings (both are optimal).

This result complements papers that compare choice between decision rules based on $(X,G)$ to choice between decision rules based on $X$ alone (e.g., \citet{ChanEyster} and \citet{RambachanEtAl}). We find that under certain conditions, disclosing $(X,G)$ is not only superior to disclosing $X$, but is in fact optimal over the class of all possible garblings of $(X,G)$.

\subsubsection{Excluding a Covariate When Group Identity is Known} \label{sec:GKnown}

Next compare the frontier implemented by garblings of $(X,G)$ with the frontier implemented by garblings of $(X,G,X')$, where $X$ and $X'$ are arbitrary covariate vectors.

\begin{definition} Say that $X'$ is \emph{decision-relevant over $X$} if for each group $g$,   
there are realizations $(x,x')$ and $(x,\tilde{x}')$ of $(X,X')$ such that
\[\{1\} = \argmin_{d \in \mathcal{D}} \mathbb{E}[\ell(d,Y) \mid (X,X',G) = (x,x',g)]\]
while
\[\{0\} = \argmin_{d \in \mathcal{D}} \mathbb{E}[\ell(d,Y) \mid (X,X',G)=(x,\tilde{x}',g)]\]
where each of $(x,x',g)$ and $(x,\tilde{x}',g)$ has strictly positive probability.
\end{definition}

This weak condition requires only that there is some individual in each group $g$ for whom the decision that maximizes (expected) accuracy is different given $X$ and given $(X,X')$. For example, if $X'$ is a test score and $X$ is high school GPA, then $X'$ is decision-relevant for group $g$ if taking the test score into consideration reverses the admission decision for at least one individual in group $g$ relative to the decision based on GPA alone. 

\begin{proposition} \label{prop:ExcludeX'overG}
Suppose $r_X,b_X \in H$.  For any covariate vector $X$ and any covariate vector $X'$ that is decision-relevant over $X$, $\mathcal{F}^*_{X,X',G} >_{FA} \mathcal{F}^*_{X,G}$. 
\end{proposition}

This result says that so long as the designer has access to group identity, then bans cannot be justified for any minimally informative covariate. If we restrict to the class of simple preferences, we can again characterize how that covariate should be used in the designer's optimal garbling. 

\begin{proposition} \label{prop:SimpleXXG} Suppose the designer has a simple preference with parameters $(\gamma_r,\gamma_b,\gamma_f)$. Fix any $(X,X')$, and suppose without loss that $(X,X',G)$ is either group-balanced or $r$-skewed. Then:
\begin{itemize}
    \item[(a)] The fully revealing garbling $T_{X,X',G}$ is optimal when $\gamma_r \geq \gamma_f$.
    \item[(b)] The $r$-shaded garbling $T^r_{X,X',G}$ is optimal when $\gamma_r \leq \gamma_f$.
\end{itemize}
\end{proposition}

We can apply these results to comment on the question of whether to ban test scores in college admissions decisions. College entrance exams are decision-relevant for admissions, even given the rest of the application \citep{BerkeleySAT}.\footnote{Specifically, Section A of \citet{BerkeleySAT} finds that test scores are predictive of college success, predictive above other covariates (such as GPA), and and predictive for all demographic groups that they consider (with individuals disaggregated by factors such as parental education, family income, and racial/ethnic identity).} Thus Proposition \ref{prop:ExcludeX'overG} implies that so long as group identities are permissible inputs for college admission decisions, then excluding test scores is welfare-reducing for all designers with the ability to garble available covariates. On the other hand, if group identity is not  permitted as an input into college admissions decisions, then   a sufficiently fairness-minded designer may find it optimal to  completely exclude test scores. With regards to the recent Supreme Court case \emph{Students for Fair Admissions, Inc. v. President and Fellows of Harvard College}, our result suggests that banning affirmative action  nationwide may give universities with certain FA preferences reason to  ban the use of test scores in admissions decisions.\footnote{\citet{DesseinFrankelKartik} demonstrate a similar finding in a model in which universities experience costs when making decisions that differ from the preferences of a broader society.}  Proposition \ref{prop:SimpleXXG} further says that among designers with simple preferences, those who sufficiently value accuracy will prefer to  reveal test scores for all students, while those who sufficiently value fairness will prefer to  use the full informational content of test scores for students in the disadvantaged group, but to add noise to this information for students in the advantaged group. This latter garbling is consistent with \citet{BerkeleySAT}'s report that one use of test scores at UC Berkeley (prior to the university's move to test-blind admissions in 2021) was to identify otherwise ineligible applicants from relatively disadvantaged backgrounds.

While our framework abstracts away from many important features of the college admissions process---including capacity constraints (see Subsection \ref{sec:cc}), access to testing \citep{GargLiMonachou} and test-optional admissions \citep{DesseinFrankelKartik})---the link between the availability of group identity and the value of additional information, such as test scores, is one that we believe holds more generally. The crucial point is that when group identity is available, the designer can tailor how the additional information is used for each group separately.  In this sense, access to group identity has a positive spillover effect for the value of other covariates, guaranteeing that there is some (possibly group-dependent) garbling of the other information that aligns the agent's and designer's incentives.

We conclude by returning to our example in Section \ref{sec:SimpleExample}, and illustrating how our previous observations change when group identity is available as a covariate.

\begin{figure}[h]
    \centering
    \includegraphics[scale=0.6]{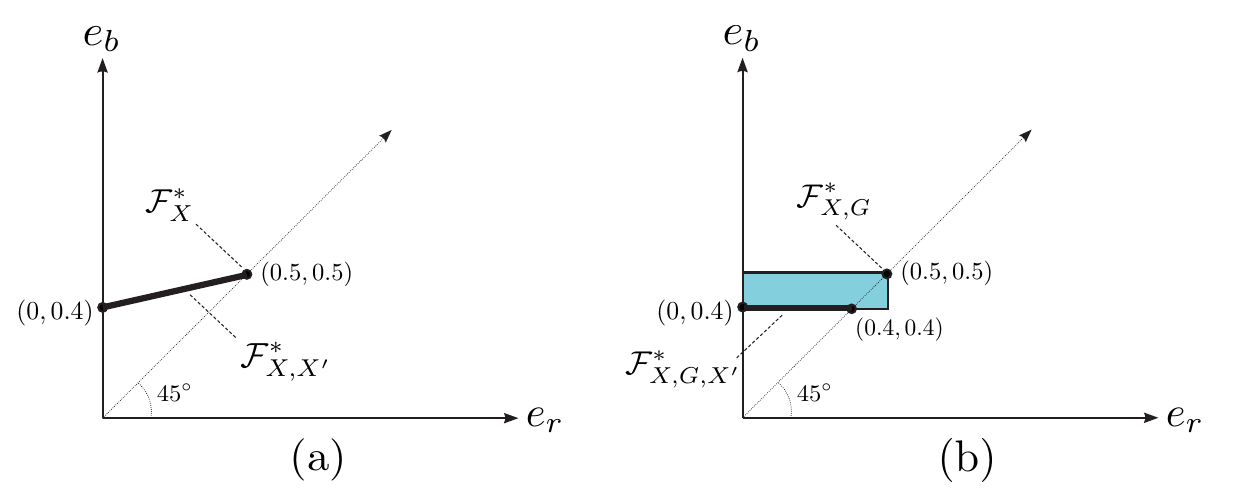}
    \caption{(a) A comparison of the input-design fairness-accuracy frontiers given $X$ versus given $(X,X')$; (b) A comparison of the input-design fairness-accuracy frontiers given $(X,G)$ versus given $(X,G,X')$}.\label{fig:example2}
\end{figure}

Panel (a) reproduces the previous Figure \ref{fig:example} (which compares the frontiers $\mathcal{F}^*_{X}$ and $\mathcal{F}^*_{X,X'}$), while  Panel (b) compares the fairness-accuracy frontiers $\mathcal{F}^{*}_{X,G}$ and $\mathcal{F}^{*}_{X,G,X'}$. Without access to group identity, we argued that the Egalitarian designer preferred to ban the covariate $X'$. With access to group identity, we see that the Egalitarian designer is able to achieve the superior outcome $(0.4,0.4)$ by garbling $(X,G,X')$. Thus while making information about $X'$ available to the agent is strictly harmful for the designer when group identity is not available, this ceases to be true once the designer can garble $X'$ in different ways depending on $G$.

\section{Empirical Application} \label{sec:Empirical}

We have so far focused on general conceptual findings that hold across settings, but our framework can also be used to better understand the fairness-accuracy tradeoffs in specific datasets. In this final section, we empirically illustrate some of our key definitions on two popular healthcare datasets from the algorithmic fairness literature. This analysis demonstrates that our framework is amenable to computation, and speaks to the potential relevance of our framework for empirical evaluations of algorithms. 

Section \ref{sec:Data} describes the two datasets. Section \ref{sec:BalanceSkew} evaluates group-balance and group-skew, finding that the first dataset is group-skewed, while the second is an interesting case of group-balance in which the group-optimal points approximately lie on the 45 degree line. Section \ref{sec:EstimateFrontier} depicts the fairness-accuracy frontier for both datasets, setting $\mathcal{A}$ to be the set of linear algorithms.  We find that in one dataset, the fairness-accuracy frontier consists primarily of strong fairness-accuracy conflicts (where the designer can only increase fairness by decreasing accuracy for both groups), while in the other there is nearly no conflict between fairness and accuracy.

Having estimated these frontiers, we apply Proposition \ref{thm:FullVersusBayes} to study the effectiveness of input design for each of the datasets. We find that for one of the datasets, input design is sufficient to recover the entire frontier, while in the other there are designers who cannot implement their favorite point using input design. Finally, we apply Proposition \ref{prop:XRevealsG} to study how the fairness-accuracy frontier changes when the designer is permitted group-specific algorithms. We find that for both datasets, neither Utilitarian nor Egalitarian designers have substantially improved payoffs, but designers with moderate fairness and accuracy preferences do benefit.

\subsection{Data} \label{sec:Data}  

Our first healthcare dataset consists of the 48,784 patient observations reported in \citet{ObermeyerMullainathan}. The covariate vector $X$ includes 8 demographic variables, 34 indicators of specific chronic illnesses, 13 healthcare cost variables, and 94 biomarker and medication variables.  We take as the two group identities whether the patient self-reported as Black or White, denoted $g \in \{b, w\}$. 

As described in \citet{ObermeyerMullainathan}, a large academic hospital used the covariates in $X$ to identify high-risk patients to enroll in an intensive health care program,  automatically enrolling those top 3\% ``highest risk'' patients into this program. In the data, ``true health needs'' are reported as each patient's total number of active chronic illnesses in the subsequent year, where the 97\% percentile is $6$ health conditions.  We thus define the patient's type $Y$ to be an indicator for whether  their true health needs are strictly larger than the 97\% percentile, and  consider algorithms $a: \mathcal{X} \rightarrow \{0,1\}$ that predict $Y$.
We use $\ell(d,y) = \mathbbm{1}(d\neq y)$ as our loss function, implying that algorithms are more accurate if they have a lower misclassification rate for each group, and more fair if they have a smaller disparity between the misclassification rates for the two groups.

Our second dataset is from \citet{strack2014impact} and contains 101,766 clinical care observations for patients with diabetes diagnoses. The covariate vector $X$ includes 25 variables, including demographic data (e.g., age) and medical information (e.g., diabetic medications and number of inpatient days).  We take gender (woman or man) to be the group identities, denoted $g \in \{w, m\}$. The patient's type $Y$ is whether the patient was readmitted to the hospital after release. This variable is reported in the data.  We consider algorithms $a: \mathcal{X} \rightarrow \{0,1\}$ that predict whether patients will be readmitted upon release, and again use $\ell(d,y) = \mathbbm{1}(d\neq y)$ as our loss function.

For both datasets, we suppose that the designer chooses a single algorithm for both groups and does not have access to group identity. (This is consistent with the analysis of \citet{ObermeyerMullainathan}.) We subsequently consider how the fairness-accuracy frontier changes when group-specific algorithms are permitted.

\subsection{Estimating Group-Balance versus Group-Skew} \label{sec:BalanceSkew}
For each group $g$, let $(e_r^g,e_b^g)$ denote the group-$g$ optimal point. We will first provide point estimates of these group-optimal points, given which we will form conjectures for whether the data are group-balanced or group-skewed. Subsequently we will formulate suitable statistical tests of those conjectures.

Figure \ref{fig:GBGS} reports five-fold cross-validated estimates of the group-optimal points in each dataset.\footnote{We randomly split the data into five equally sized subsets. In each step of the procedure, we designate one of these subsets to be the test set. We then split the remaining data (the training set) into group-$r$ and group-$b$ observations, and train a random forest algorithm on each part separately, thus obtaining a classifier for each group. We apply group $g$'s classifier to the test set and evaluate each group's error under this classifier,  obtaining an estimate of the group-$g$ optimal point $g_X=(e_r^G,e_b^G)$. Finally, we average these estimates across the five choices of which subset to use as the training data.} In the \citet{ObermeyerMullainathan} data, the classification error for Black patients is higher than for White patients at both groups' optimal points, suggesting that the covariate vector $X$ is $w$-skewed. In contrast, for the \citet{strack2014impact} data, the two group-optimal points are nearly on the 45 degree line---that is, the consequences for the two groups are nearly the same regardless of whether the algorithm designer minimizes error for female patients or male patients. This suggests that that the covariate vector satisfies a special case of group-balance.\footnote{The substantially larger errors for the latter dataset are due to a more difficult prediction problem: In the \citet{strack2014impact} data, the outcome variable is roughly equally likely to be 1 or 0 (respectively, 0.46 versus 0.54), whereas in \citet{ObermeyerMullainathan} data, the outcome variable takes the value 0 for almost all patients. Thus, for example, the naive rule that classifies all patients as 0 in the first data achieves a misclassification rate of 0.06 for Black patients and 0.02 for White patients, but these misclassification rates are not feasible for the second dataset.}

\begin{figure}
\begin{center}
\includegraphics[scale=0.4]{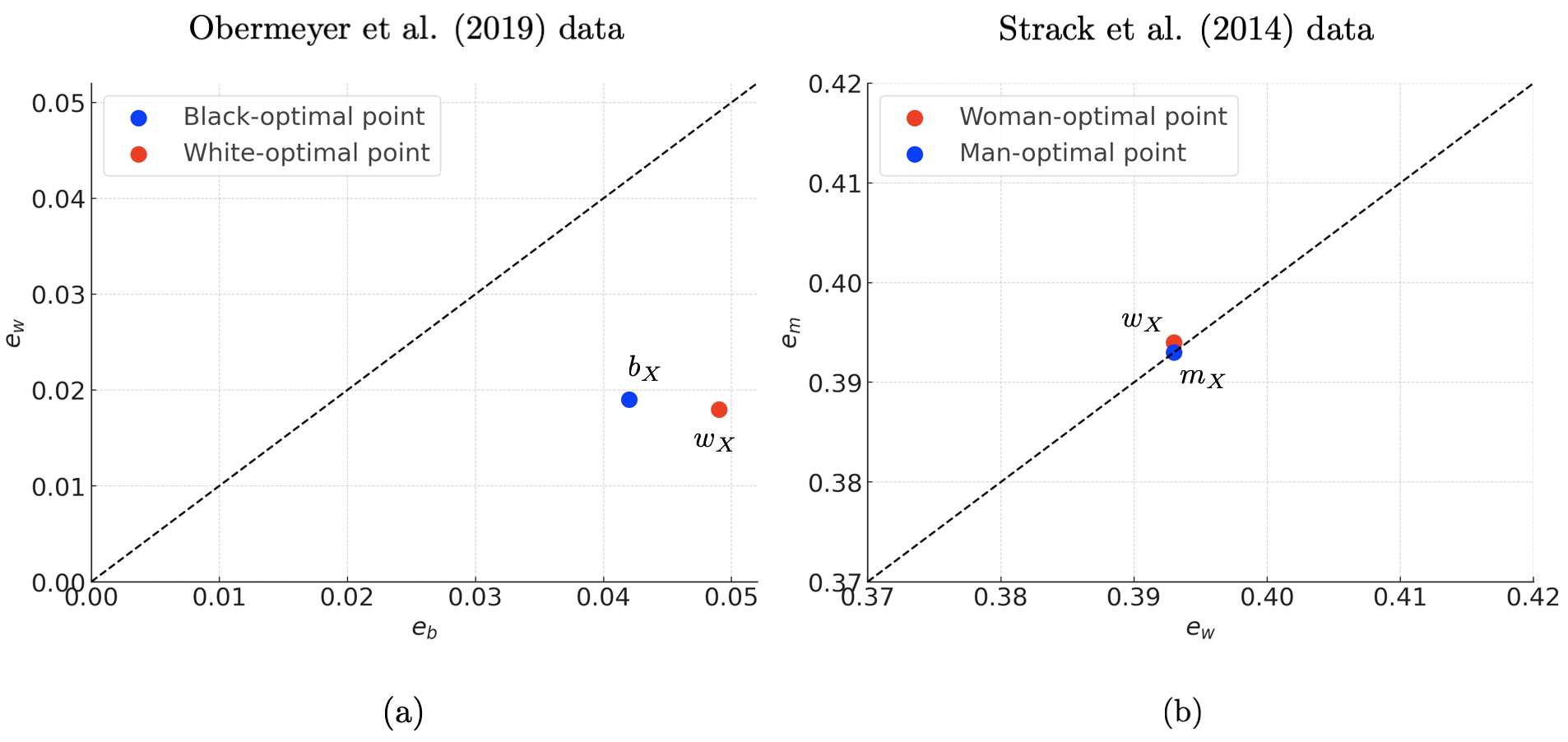}
\end{center}
\caption{\footnotesize{The group-optimal points for the \citet{ObermeyerMullainathan} data are estimated to be
$b_X=(\hat e_b^b, \hat e_w^b) = (0.042, 0.019)$ and $w_X=(\hat e_b^w, \hat e_w^w) = (0.049, 0.018)$, suggesting group-skew. The group-optimal points for the \citet{strack2014impact} data are estimated to be
$
w_X=(\hat e_w^w, \hat e_m^w) = (0.393, 0.394)$ and $m_X = (\hat e_w^m, \hat e_m^m) = (0.393, 0.393)$, suggesting group-balance.}} \label{fig:GBGS}
\end{figure}

We now evaluate these conjectures with suitable statistical tests. To assess whether the \citet{ObermeyerMullainathan} dataset is $w$-skewed, we test the null hypothesis
\begin{equation} \label{eq:null1}
H_0:
e_w^b \geq e_b^b
\end{equation}
against the alternative $H_1:
e_w^b < e_b^b$. (Since $e_w^w \leq e_w^b$ and $e_b^b \leq e_b^w$, the inequality $e_w^b < e_b^b$ implies $e_w^w \leq e_b^w$; thus, we only need to verify this part of the definition of group-skew.) We will conclude that the data is $w$-skewed if we can reject this null hypothesis at the desired significance level.

For the \citet{strack2014impact} data, because Figure \ref{fig:GBGS} suggests that the knife-edge condition $e_w^w = e_m^w = e_w^m = e_m^m$ is approximately met, we need to be more careful with our formulation of the null and alternative.\footnote{For example, a natural test for strict group-balance (Definition \ref{def:StrictGroupBalance}) would be to formulate the null $H_0: e_w^b \leq e_b^b \mbox{ OR } e_w^w \geq e_b^w$ against the alternative $H_1: e_w^b > e_b^b \mbox{ AND } e_w^w < e_b^b$, but Figure \ref{fig:GBGS} does not suggest the data is strictly group-balanced. Thus we do not expect to reject this null.} We test a relaxed notion of group-balance corresponding to whether $e_w^w$,  $e_m^w$, $e_w^m$, and $e_m^m$ are located within a small  neighborhood of one another. Specifically, we test
\begin{equation} \label{eq:null2}
H_0 \colon |e_w^w - e_m^w| \geq \delta \text{ OR } |e_w^m - e_m^m| \geq \delta
\end{equation}
against the alternative $H_1 \colon |e_f^w - e_m^w| < \delta \text{ AND } |e_f^m - e_m^m| < \delta$ for $\delta = 0.01$. Under the alternative, both group-optimal points have the property that each group's misclassification rate is within 1 percentage point of the other. (For a sense of whether 0.01 is ``small,'' recall that our estimate for the misclassification rates at the group-optimal points are about 0.40.) When the two group-optimal points are nearly on the 45-degree line, they must nearly coincide. We will thus conclude (approximate) group-balance if (\ref{eq:null2}) is rejected at the desired significance level.

Following \cite{ALTO}, we implement the following procedure to test these null hypotheses: In each of $K=5$ iterations, we randomly split the data into a training set (consisting of two-thirds of the observations) and a test set (consisting of the remaining observations). On the training subset, we  search for the algorithm that minimizes group $g$'s error on the group $g$ observations. Then we evaluate each group's error under this algorithm on the test data, and compute a $p$-value for the suitable null hypothesis via bootstrap (see Appendix~\ref{app:bootstrap} for details). A valid test is obtained by rejecting the null hypothesis whenever the median $p$-value across these iterations falls below half of the desired significance level \citep{ALTO}. For a 5\% significance level, this corresponds to rejecting whenever the median $p$-value falls below 0.025.

We report the outcome of this test for each dataset under two possible specifications of the set of algorithms $\mathcal{A}$: the set of unconstrained algorithms $\overline{\mathcal{A}}$ and the set of all linear algorithms. For the former specification, we employ a random forest algorithm to search for each group's optimal algorithm among the class of unconstrained algorithms. For the latter, we use
logistic regression to look for the linear classifier that minimizes each group's error on the training data. Table \ref{tab:p_val} reports the output of this analysis. We reject the null in (\ref{eq:null1}) at a $5\%$ significance level for both specifications of $\mathcal{A}$, suggesting that the covariates in the \citet{ObermeyerMullainathan} data are group-skewed. (Recall that we reject at a $5\%$ significance level if the median $p$-value is less than $0.025$.) We reject the null in (\ref{eq:null2}) at a $10\%$ significance level for the unconstrained set of algorithms and at a $5\%$ significance level for the set of linear algorithms, suggesting that the covariates in the \citet{strack2014impact} dataset are (approximately) group-balanced with group-optimal points on the 45 degree line.

\begin{table}[hbt]
  
    \footnotesize 
    \begin{tabular}{
        @{} 
        l 
        *{2}{c} 
        | 
        *{2}{c} 
        @{} 
    }
    \toprule
    & \multicolumn{2}{c}{\bfseries Dataset 1} & \multicolumn{2}{c}{\bfseries Dataset 2} \\
    \cmidrule(lr){2-3} \cmidrule(lr){4-5} 
         & {Group-Balance/Skew} & {median $p$-value} & {Group-Balance/Skew} & {median $p$-value}  \\
    \midrule
        Unconstrained & $w$-skewed & $<0.0001$
        & group-balanced & $0.0437$
        \\
        Linear& $w$-skewed & $<0.0001$
        & group-balanced & $<0.0001$
        \\
    \bottomrule
    \end{tabular}
    \vspace{8pt}
      \caption{ \footnotesize{We report whether each dataset is group-balanced or group-skewed for the two specifications of $\mathcal{A}$. Reported $p$-values are computed via bootstrap with size 10,000.}}\label{tab:p_val}

\end{table}

Thus, by Corollary \ref{corr:Tradeoff}, the first dataset involves a potentially strong conflict between fairness and accuracy, where designers who put sufficient weight on reducing disparities across Black and White patients may prefer to increase errors for both groups. In contrast, fairness considerations across male and female patients do not rationalize implementing Pareto-dominated error rates for the second dataset, regardless of how much weight the designer puts on fairness. 

\subsection{The Fairness-Accuracy Frontier} \label{sec:EstimateFrontier}

We next depict the feasible set $\mathcal{E}_X$ and FA frontier $\mathcal{F}_X$ for each of these datasets, setting $\mathcal{A}$ to be the set of linear algorithms, which we will subsequently denote by $\mathcal{A}_\ell$.
First observe that the extreme points of the feasible set  can be found by solving the optimization problem
$
    \min \left\{\alpha_r e_r(a) + \alpha_b e_b(a) \colon a \in \mathcal{A}_\ell \right\}$ 
for different choices of $(\alpha_r, \alpha_b) \in \mathbb{R}^2$. We consider the empirical analogue of this optimization problem, in which $e_g(a)$ is replaced by its sample analogue:
\begin{equation} \label{eq:Opt}
\min \left\{\alpha_r \hat e_r(a) + \alpha_b \hat e_b(a) \colon
\forall g, \ \hat e_g(a) = \frac{1}{n_g} \sum_{i \colon G_i = g} \ell\left(a(X_i), Y_i\right), \ 
a \in \mathcal{A}_\ell \right\},
\end{equation}
with $n_g$ denoting the number of group-$g$ observations in the dataset.

Any linear classifier $a \in \mathcal{A}_\ell$ can be represented as $a(x) = \mathbbm{1} \left\{x^\top \beta \geq 0 \right\}$ for some $\beta \in \mathbb{R}^{\dim(\mathcal{X})}$.\footnote{We assume that $X$ includes a constant term,  so it is without loss to set the threshold to zero.}
Thus,  we can recast the optimization problem in (\ref{eq:Opt}) as the following mixed-integer linear program (MILP):
\[
\mathrm{(MILP)} \left[
    \begin{matrix}
        \displaystyle{\min_{\beta, \hat e_r \geq 0 , \hat e_b \geq 0, \hat Y}} & \alpha_r \hat e_r + \alpha_b \hat e_b \\
        \mathrm{s.t.}
        & \displaystyle{\hat e_r = \frac{1}{n_r}} \sum_{i \colon G_i = r} \mathbbm{1}  \left\{ \hat Y_i \neq Y_i \right\} \\[8pt]
        & \displaystyle{\hat e_b = \frac{1}{n_b}} \sum_{i \colon G_i = b} \mathbbm{1} \left\{ \hat Y_i \neq Y_i \right\} \\[8pt]
        
        & \displaystyle{\frac{X_i^\top \beta}{C_i} < \hat Y_i \leq 1 + \frac{X_i^\top \beta}{C_i}} \mbox{ and } \hat{Y}_i \in \{0,1\}  & \text{for all $i$} \\
    \end{matrix}
    \right.
\]
where $C_i$ is chosen to satisfy $C_i > \sup_{\beta \in \mathcal B} |X_i^\top \beta|$, with $\mathcal B$ some compact set such that we restrict $\beta \in \mathcal B$.\footnote{The third constraint is equivalent to $\hat Y_i = \mathbbm{1} \left\{X_i^\top \beta \geq 0\right\}$ for all $i$ since if $X_i^\top \beta$ is weakly positive then the constraint $\hat Y_i \in \{0,1\}$ implies $\hat Y_i =1$, while if $X_i^\top \beta$ is strictly negative then the constraint $\hat Y_i \in \{0,1\}$ implies $\hat Y_i=0$.} In principle, we could use this MILP to estimate all the extreme points of the feasible set by varying the normal vector $(\alpha_r, \alpha_b)$ within $[0, 2\pi) \times [0, 2\pi)$. Since this program is computationally burdensome, we instead employ a standard relaxation of the MILP to a linear program (LP) by eliminating the integer constraints using a hinge surrogate loss function (see Appendix~\ref{app:LP} for  details).\footnote{The same reduction underlies the construction of support vector machines for linear classification.
See \cite{hastie2009elements} for a textbook reference.
}
We solve this LP for $(\alpha_r, \alpha_b) \in [0, \pi/72, 2\pi/72, \dots, 2 \pi)^2$, estimate $(\hat e_r, \hat e_b)$ via five-fold cross-validation for each $(\alpha_r, \alpha_b)$, and finally take the convex hull of the estimated error pairs.\footnote{
For each pair $(\alpha_r, \alpha_b)$, execute the following steps: Partition the dataset into five subsets of equal size. In each of the five iterations, use one subset for testing and the remaining four for training. Determine $\hat \beta^{(k)}$ by solving the LP with the training data, which establishes a linear classifier $a^{(k)}$. Compute the estimates of group errors $(\hat e_r^{(k)}, \hat e_b^{(k)})$ for classifier $a^{(k)}$ by sample analogues using test data. Finally, calculate the average group error $\hat e_g$ for each group $g$ across all five folds: $\hat e_g = \frac{1}{5} \sum_{k=1}^{5} \hat e_g^{(k)}$.
}

\begin{figure}[htbp]
Obermeyer et al.'s dataset \\
\begin{tabular}{cc}
  \begin{minipage}[t]{0.45\linewidth}
    \centering
    \includegraphics[keepaspectratio, scale=0.65]{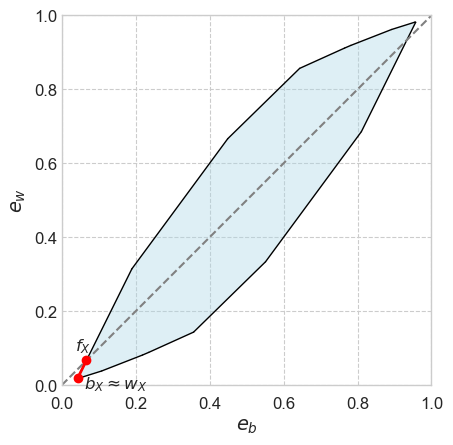}
    \label{fig:science_whole}
    \subcaption{}
  \end{minipage} &
  \begin{minipage}[t]{0.45\linewidth}
    \centering
    \includegraphics[keepaspectratio, scale=0.65]{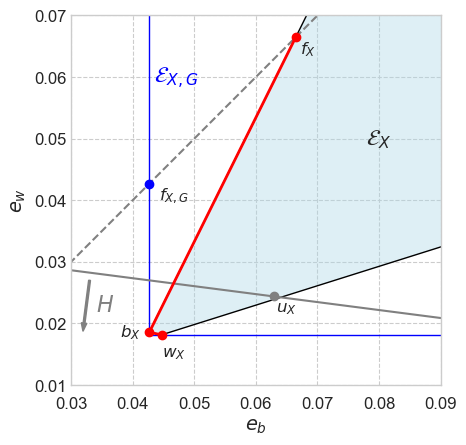}
    \label{fig:diabetes_gender_whole}
    \subcaption{}
  \end{minipage}
\end{tabular}

Strack et al.'s dataset

\begin{tabular}{cc}
  \begin{minipage}[t]{0.45\linewidth}
    \centering
    \includegraphics[keepaspectratio, scale=0.65]{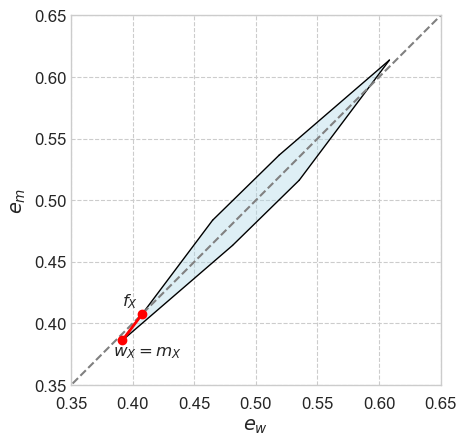}
    \label{fig:science_withG}
    \subcaption{}
  \end{minipage}
   &
  \begin{minipage}[t]{0.45\linewidth}
    \centering
    \includegraphics[keepaspectratio, scale=0.65]{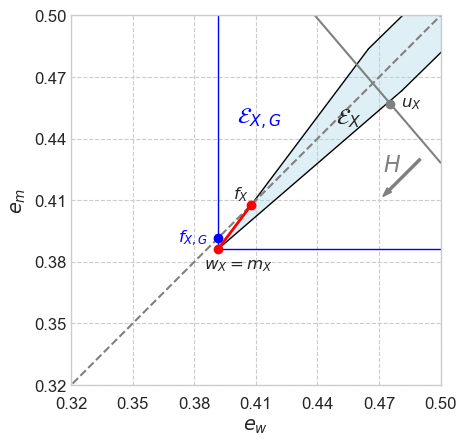}
    \label{fig:diabetes_gender_withG}
    \subcaption{}
  \end{minipage}
\end{tabular}

  \caption{Estimated feasible set for the class of linear algorithms}
  \label{fig:estimated_feasible_set}
  \vspace{12pt}
  
  \footnotesize
  Panels (A) and (C) display the estimated feasible sets for the datasets from Obermeyer et al. and Strack et al., respectively. In these panels, the black solid lines define the boundaries of the feasible sets $\mathcal{E}_X$, and the gray dotted lines serve as 45-degree reference lines. Panel (A) identifies $b_X$ and $w_X$ as the optimal points for Black and White patients, while Panel (C) marks $w_X$ and $m_X$ as the optimal points for women and men. Both panels highlight $f_X$ as the fairness-optimal point.
  Panels (B) and (D) examine the augmented feasible sets $\mathcal{E}_{X, G}$ for the same datasets with covariate $(X, G)$, represented by blue solid lines. In these panels, $f_{X, G}$ indicates the fairness-optimal point for each dataset with $(X, G)$. The point $u_X$ indicates the error pair chosen by the Utilitarian agent given no information, and the gray solid line in Panel (B) and (D) represents the Utilitarian's indifference curve through $u_X$.
  \end{figure}

Panels (A) and (B) of Figure~\ref{fig:estimated_feasible_set} depict the estimated feasible set and FA frontier for the \citet{ObermeyerMullainathan} data. Panel (A) shows the entire feasible set, while Panel (B) zooms in on the FA frontier. Consistent with our previous hypothesis test, the group-optimal points $b_X$ and $w_X$ lie on the same side of the 45 degree line. Additionally, we find that $b_X$ and $w_X$ are very similar---indeed, we cannot reject that $b_X$ and $w_X$ are different points at any reasonable significance level.\footnote{Specifically, we test the null hypothesis \[H_0: |e_b^b - e_b^w| \geq \delta \quad \mbox{OR} \quad |e_w^b - e_w^w| \geq \delta\]
against the alternative $H_1: |e_b^b - e_b^w| < \delta$ and $|e_r^b - e_r^w| < \delta$, with $\delta=0.01$. The resulting median $p$-value is less than $0.0001$, indicating insufficient evidence to assert a significant difference in the group-optimal points.
}
Thus the main tradeoff between fairness and accuracy is whether the designer is willing  to increase errors for both groups in return for a decrease in the disparity between group errors. The depicted FA frontier qualitatively resembles Panel (B) of Figure \ref{fig:SpecialCases} (Conditional Independence), and is consistent with a setting in which the optimal algorithm is the same for both groups, but the measured covariates are more predictive of the outcome for one group (White patients) than the other (Black patients).\footnote{We cannot directly test the assumptions of Example \ref{ex:CI} in our data, due to an insufficient number of observations per covariate vector.}

Panels (C) and (D) depict the analogous figures for the \citet{strack2014impact} data. Consistent with the result of our earlier hypothesis test, the estimated group-optimal points are very close to the 45-degree line (and consequently are also close to the estimated fairness-optimal point $f_X$). For this dataset, there is almost no tradeoff between fairness and accuracy. In fact, the depicted feasible set and FA frontier qualitatively resemble Panel (A) of Figure \ref{fig:SpecialCases} (Strong Independence), and are consistent with a setting in which the joint distributions of covariates and outcomes are nearly identical for the two groups. 

The two datasets also differ in their implications for the limits of input design. In Panels (B) and (D), we plot the error pair corresponding to the best payoff that a utilitarian agent can achieve with no information (labeled $u_X$), as well as the agent's indifference curve at this point.\footnote{
The payoff of the Utilitarian on the indifference curve is $e_0$ defined in equation~\eqref{eq:no_info_utilitarian}, setting $(\alpha_r, \alpha_b) \coloneqq (p_r, p_b)$. Point $u_X = (e_r^U, e_b^U)$ satisfies $p_r e_r^U + p_b e_b^U = e_0$.
} In Panel (B), the fairness-optimal point $f_X$ falls outside the halfspace $H$ of error pairs that the utilitarian agent prefers over $u_X$. Thus, by Proposition \ref{thm:FullVersusBayes}, input design is with loss for some designers. Specifically, sufficiently fairness-motivated designers cannot induce a utilitarian agent to implement their favorite error outcome, regardless of which garbling of the available covariates they choose. In contrast, in Panel (D) the points $w_X$, $m_X$, and $f_X$ all belong to the corresponding halfspace $H$, implying that every designer with a FA preference can achieve his most preferred outcome.

Finally, we can apply Proposition \ref{prop:XRevealsG} to quantify the possible fairness-accuracy improvements when different algorithms are used to make predictions for each group. Panels (B) and (D) of Figure~\ref{fig:estimated_feasible_set}  illustrate the change in the fairness-accuracy frontier when group-specific linear algorithms are permitted. In both datasets, neither the Utilitarian designer, nor the designer whose payoff is $-\vert e_r - e_b \vert$, benefit from using group-specific algorithms: The Utilitarian designer's payoff remains approximately the same, since the group-optimal points change very little when group-specific algorithms are permitted. The designer whose payoff is $-\vert e_r-e_b\vert$ does not benefit either, as both $f_X$ and $f_{X,G}$ lie on the 45 degree line (and thus yield an identical payoff of zero for this designer). 

This finding is especially interesting when juxtaposed with a recent debate over whether to use race-blind healthcare algorithms. Advocates for race-aware algorithms often adopt a Utilitarian perspective (e.g., \citet{Manski}), while proponents of race-blind healthcare algorithms typically argue from the perspective of minimizing healthcare inequalities across groups (e.g., \citet{Vyasetal}). But for these two datasets, it is intermediate designers who value both fairness and accuracy, rather than these two extremes, that benefit the most from use of group-specific algorithms.\footnote{If we consider the class of simple preferences for these two datasets, then for any fixed values of $\alpha_r$ and $\alpha_b$, the increase in the designer's payoff from use of group-specific algorithms is concave in $\alpha_f$ with an interior maximum.}  Whether this particular finding extends to  other datasets will depend on specific details of their FA frontiers, but our framework provides a general methodology that can be applied case-by-case to study the particular fairness-accuracy implications of different datasets and algorithmic constraints.

\section{Extensions} \label{sec:Extensions}

\subsection{Different loss functions for evaluating fairness and accuracy.} When defining the strict order $>_{FA}$, we used the same loss function to evaluate both accuracy and fairness. This is suitable, for example, for healthcare decisions where both the healthcare provider (designer) and patients agree that more accurate decisions are better, and so fairness can be reasonably evaluated as the disparity in accuracy across groups. In other cases where the subjects' utility function is different from the designer's, policymakers sometimes evaluate accuracy using one loss function and fairness using another. For example, if the algorithm guides hiring decisions, then fairness may be evaluated as the difference in hiring rates across groups, even while accuracy is evaluated based on whether suitable candidates are hired.  In Appendix \ref{app:DifferentLoss} we develop a more general version of our framework that allows for different loss functions, and extends Theorem \ref{thm:FullDesignPareto} and Corollary \ref{corr:Tradeoff} under a  generalization of group-balance. 

\subsection{Beyond absolute difference for evaluating fairness.} Our main analysis assumes that (un)fairness is evaluated according to the absolute difference of errors between the two groups, i.e. $\vert e_r - e_b \vert$. A natural extension is to consider $\vert \phi(e_r) - \phi(e_b) \vert$ where $\phi$ is some continuous  strictly increasing function. For instance, if $\phi$ is $\log$, then this corresponds to evaluating fairness using the ratio of errors rather than their difference. Theorem \ref{thm:FullDesignPareto} holds for any such $\phi$ with the fairness optimal point $f_{X}$ suitably defined.\footnote{To see why, first note that no interior point can be on the frontier. Otherwise, we can always find some $\epsilon_{1},\epsilon_{2}>0$ such
that $\left|\phi\left(e_{r}-\epsilon_{1}\right)-\phi\left(e_{b}-\epsilon_{2}\right)\right|\leq\left|\phi\left(e_{r}\right)-\phi\left(e_{b}\right)\right|$
so $\left(e_{r}-\epsilon_{1},e_{b}-\epsilon_{1}\right)>_{FA}\left(e_{r},e_{b}\right)$ yielding a contradiction. The rest of the proof follows as in Theorem \ref{thm:FullDesignPareto}.} We further demonstrate that the frontier becomes larger (smaller) whenever $\phi$ is concave (convex). Thus, for example, evaluating fairness using ratios instead of absolute difference results in a larger frontier, although the qualitative properties of this frontier are unchanged.

\subsection{Other agent preferences in the input design problem.} Section \ref{sec:DesignInputs} considers misaligned incentives between a designer controlling inputs and an agent setting the algorithm. There, we assume that the agent cares about accuracy and prefers for both group errors to be lower. In Appendix \ref{app:ExtendBD}, we consider what happens when this misalignment is more extreme and the agent is adversarial (i.e. negatively biased) towards one of the two groups, preferring that group's error to be higher. We generalize several results from Section \ref{sec:DesignInputs} and show  that even if the agent is negatively biased, it can still be optimal for the designer to provide information about group identity (so long as the bias is not too extreme). 

Two other potential generalizations would permit the agent and designer to have different loss functions, or permit the agent to  care about fairness. In both cases, the set of points that the agent prefers over the prior (what we defined to be $H$) is no longer a halfspace from the designer's perspective. Moreover,  non-linearities in the agent's objective function imply that the agent's ex-ante and ex-post problems may be different, and so it is relevant whether the agent commits to the algorithm or chooses the decision after the realization of the garbling. We consider these problems beyond the scope of the present paper, and leave them as open questions for future work.

\subsection{Capacity constraints.}\label{sec:cc} In our main model, we allow the designer unconstrained choice of any algorithm. In a few of the applications of interest, there may be an additional capacity constraint on the algorithm, e.g., if only a fixed number of students can be admitted in admissions decisions. One way to formulate a capacity constraint is a restriction on the ex-ante probability of assignment of decision $d=1$ (e.g., admit). In this case, the set of error pairs satisfying the constraint can be shown to be a convex set, so the feasible set is simply the intersection between the feasible set (as we have defined) and the convex set of error pairs that satisfy this capacity constraint. Our Theorem \ref{thm:FullDesignPareto} then applies for this new feasible set, although the fairness-accuracy frontier as characterized in Proposition \ref{prop:XRevealsG} may no longer be a horizontal line. 

\subsection{More than two groups or two decisions.}  We have assumed that there are two groups $\mathcal{G}=\{r,b\}$. Some of our results, such as Proposition \ref{thm:FullVersusBayes}, can be shown to directly extend for any finite $\mathcal{G}$. However, in order to extend our other results, we would first have to specify a definition of fairness for multiple groups. One possible generalization of the FA-dominance relationship is to say that a vector of group errors $(e_g)_{g \in \mathcal{G}}$ FA-dominates another vector $(e_g')_{g \in \mathcal{G}}$ if $e_g \leq e_g'$ for every group $g$, and also $\vert e_g - \frac{1}{\vert \mathcal{G} \vert} \sum_{g \in \mathcal{G}} e_g  \vert \leq \vert e_g' - \frac{1}{\vert \mathcal{G} \vert} \sum_{g \in \mathcal{G}} e_g'  \vert$ for every $g\in \mathcal{G}$, with at least one inequality holding strictly. That is, fairness is improved if each group's error is closer to the average group error. We expect our characterization in Theorem \ref{thm:FullDesignPareto} to extend qualitatively in this case. 

 We have also assumed that there are two decisions $\mathcal{D}=\{0,1\}$. All of our  results in Section \ref{sec:FAFrontier} about the unconstrained problem directly extend for any finite $\mathcal{D}$. However, Lemma \ref{lemm:BayesDesign} (the relationship between the input-design fairness-accuracy frontier and the unconstrained fairness-accuracy frontier) relies on the assumption of a binary decision. We leave a characterization of the input design frontier  for this more general case to future work. 

\section{Conclusion}

We conclude with possible directions for future work, and some mention of recent work in these directions.

First, our proposed fairness-accuracy frontier is defined with respect to the true underlying population distribution $\mathbb{P}$. In practice, analysts may instead have access to a sample of observations from this distribution. A natural question is whether, and how, the analyst can estimate the FA frontier (or deduce its properties) from such a sample.  \citet{liu2024inference} provides an approach for nonparametrically estimating our fairness-accuracy frontier. They characterize the asymptotic distribution of their estimator (showing that it converges to a tight Gaussian process as the sample size grows large). Thus, under certain conditions, it is possible to nonparametrically estimate the FA frontier (i.e., for the unconstrained set of algorithms $\overline{\mathcal{A}}$). \citet{ALTO} focus on the related question of how we can tell whether a given algorithm produces group errors that are on the FA frontier. This question is especially important in light of disparate impact claims, since algorithms that are shown to have disparate impact can sometimes be justified if that disparate impact is shown to be necessary to achieve other business objectives, such as accuracy. \citet{ALTO} propose a sample-splitting approach, which among other things can be used to discern from a sample of observations whether a given algorithm is simultaneously fairness- and accuracy-improvable.

Second, our paper takes the population distribution $\mathbb{P}$ as exogenously given. An interesting direction for follow-up work is the question of how to optimally acquire information with fairness-accuracy objectives in mind. In such a framework, $\mathbb{P}$ would be endogenous to the information acquisition choices of the designer. Our results shed some light on certain aspects of this design. For example, suppose it were possible to acquire new covariates that turned a group-skewed covariate vector into a group-balanced covariate vector. Corollary \ref{corr:Tradeoff} implies that such a change would change the nature of the fairness-accuracy conflict, eliminating the need to consider Pareto-dominated outcomes as a means to improve fairness. We leave to future work a more detailed exploration of  endogenously chosen covariates and their fairness-accuracy consequences.

Finally, in our input design problem we have assumed that the designer has knowledge of the agent's preferences. In practice, the designer may not know the agent's preferences precisely, or may face a problem of designing regulation for many agents simultaneously, where these agents hold different preferences. An interesting direction would thus be to consider optimal garblings given uncertainty over the agent's preferences, or garblings that optimize a worst-case criterion over a set of agent preferences. 

\bigskip

\appendix

\section{Proofs and Supporting Materials for Section \ref{sec:FullDesign}} 

\subsection{Supplementary Material to Section \ref{sec:DefineGroupBalance}} \label{app:SupportExamples}

Examples \ref{ex:SI} and \ref{ex:CI} 
respectively follow from Parts (a) and (b) of Proposition \ref{prop:SpecialCase}. Suppose the assumptions in Example \ref{ex:UnequalMeans} are satisfied. Given this loss function, the group-optimal point $g_X$ is implemented by the algorithm $a_g$. Moreover, we have
\begin{align*}
    \mathbb{E}[l(a_b(X),Y) \mid G=b] = Var(\varepsilon)  = \mathbb{E}[l(a_r(X),Y) \mid G=r] \leq \mathbb{E}[l(a_b(X),Y) \mid G=r].
\end{align*}
Symmetrically $\mathbb{E}[l(a_r(X),Y) \mid G=r] \leq \mathbb{E}[l(a_r(X),Y) \mid G=b]$. Note that if $a_r \not = a_g$ then both inequalities are strict and $X$ is strictly group-balanced. Otherwise, $r_X$ and $b_X$ are on the 45 degree line and $X$ is group-balanced.

Finally, suppose the assumptions in Example \ref{ex:Asymmetric} are satisfied. Given this loss function, the group-optimal points $g_X$ coincide, and they are implemented by the algorithm $a_0$. We have
\begin{align*}
    \mathbb{E}[l(a_0(X),Y) \mid G=b] & = \sum Var(Y\mid X=x,G=b) dP(x\mid G=b) \\
    & = Var(\varepsilon_b) > Var(\varepsilon_r) \\
    & = \sum Var(Y\mid X=x,G=r) dP(x\mid G=r) = \mathbb{E}[l(a_0(X),Y) \mid G=r]
\end{align*}
so the covariate vector is $r$-skewed.

\subsection{Characterization of the Feasible Set}

\begin{lemma} \label{lemm:ConvexPolygon} $\mathcal{E}_X$ is a closed and convex polygon. 
\end{lemma}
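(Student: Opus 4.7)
The plan is to parametrize $\mathcal{A}_X$ as a convex polytope, show that $a \mapsto e(a)$ is an affine map into $\mathbb{R}^2$, and then invoke the fact that the image of a compact convex polytope under an affine map is itself a compact convex polytope.

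Since $\mathcal{D} = \{0,1\}$, every algorithm $a\in\mathcal{A}_X$ is determined by the scalar function $\alpha_a : \mathcal{X}\to [0,1]$ with $\alpha_a(x) = a(x)(\{1\})$. The set of admissible $\alpha$'s is the hypercube $[0,1]^{\mathcal{X}}$, which is a compact convex polytope (with finitely many vertices, the deterministic algorithms). So I would first identify $\mathcal{A}_X$ with $[0,1]^{\mathcal{X}}$ via this bijection.

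Next I would compute $e_g(a)$ explicitly as a function of $\alpha_a$. For each $x \in \mathcal{X}$ and each $g\in\{r,b\}$, write $\ell_g(d,x) := \mathbb{E}[\ell(d,Y)\mid X=x, G=g]$ (well-defined whenever $\mathbb{P}(X=x\mid G=g)>0$; on a null event the value is irrelevant). Conditioning on $X$ and then on $G=g$ gives
\begin{equation*}
e_g(a) \;=\; \sum_{x\in\mathcal{X}} \mathbb{P}(X=x\mid G=g)\,\bigl[\alpha_a(x)\,\ell_g(1,x) + (1-\alpha_a(x))\,\ell_g(0,x)\bigr],
\end{equation*}
which is manifestly an affine function of the vector $(\alpha_a(x))_{x\in\mathcal{X}}\in [0,1]^{\mathcal{X}}$. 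Hence the map $\Phi: [0,1]^{\mathcal{X}} \to \mathbb{R}^2$ sending $\alpha \mapsto (e_r,e_b)$ is affine, and $\mathcal{E}_X = \Phi([0,1]^{\mathcal{X}})$.

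I would then conclude using the standard fact that the image of a compact convex polytope under an affine map is a compact convex polytope: it is compact (hence closed) because continuous images of compact sets are compact; convex because affine maps preserve convexity; and it is a polytope because it equals the convex hull of the images of the finitely many vertices of $[0,1]^{\mathcal{X}}$. Since $\mathcal{E}_X\subseteq\mathbb{R}^2$, such a polytope is exactly a closed convex polygon (possibly degenerate, i.e.\ a segment or a point), which is what is claimed. No step here is genuinely hard; the only thing to be a little careful about is the edge case $\mathbb{P}(X=x\mid G=g)=0$, which is handled by noting that the value of $\alpha_a(x)$ on $g$-null realizations does not affect $e_g(a)$, so affineness is unaffected.
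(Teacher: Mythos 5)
Your proof is correct and takes essentially the same approach as the paper: both parametrize an algorithm by the per-realization probability $a(x)$ of choosing $d=1$ and observe that the error pair is affine in this vector over the hypercube $[0,1]^{\mathcal{X}}$. The paper phrases the conclusion as a weighted Minkowski sum of line segments (one per realization $x$), which is exactly the affine image of the hypercube that you describe, so the two arguments coincide.
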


\begin{proof}Given a (randomized) algorithm $a$, we slightly abuse notation and let $a(x)$ denote the probability of choosing decision $d=1$ at covariate vector $x$. We further let $x_{y,g}$ denote the conditional probability that $Y = y$ and $G = g$ given $X = x$. Finally, let $p_x$ denote the  probability of $X = x$. Then the group errors can be written as follows: 
\begin{alignat*}{1}
e_g(a) & =\mathbb{E}\left[a\left(X\right)\ell\left(1,Y\right)+\left(1-a\left(X\right)\right)\ell\left(0,Y\right) :  G=g\right]\\
 & =\sum_{x} p_x \left(a\left(x\right)\sum_{y}\frac{x_{y,g}}{p_{g}}\ell\left(1,y\right)+\left(1-a\left(x\right)\right)\sum_{y}\frac{x_{y,g}}{p_{g}}\ell\left(0,y\right)\right),
\end{alignat*}
where $p_g$ is the prior probability that $G = g$. Note that this is linear in the algorithm $a\in [0,1]^\mathcal{X}$. Since $\Delta\mathcal{A}$ is closed and convex with a finite number of extreme points,
\[
\mathcal{E}_X = \{e(a) : a \in \Delta\mathcal{A}\}.
\]
is also closed and convex. Moreover, it is a polygon.
\end{proof}

\subsection{Proof of Theorem \ref{thm:FullDesignPareto}}

First, note that since $e>_{FA}e^{\prime}$
implies $e>_{PD}e^{\prime}$, $\mathcal{P}_{X}\subset\mathcal{F}_{X}$. Define the points 
\begin{alignat*}{1}
\bar{j} & :=\arg\max_{e\in\mathcal{E}_{X}}\left(e_{b}-e_{r}\right)\\
\underline{j} & :=\arg\min_{e\in\mathcal{E}_{X}}\left(e_{b}-e_{r}\right)
\end{alignat*}
where ties are broken in favor accuracy (either lower $e_{r}$ or
$e_{b}$). Note that $\bar{j}$ and $\underline{j}$ are well-defined
points in $\mathcal{E}_{X}$. Let $\mathcal{J}_{X}\subset\mathcal{E}_{X}$
denote the boundary of $\mathcal{E}_{X}$ between $\bar{j}$ and $\underline{j}$
(if $\mathcal{E}_{X}$ is a line segment, then $\mathcal{J}_{X}=\mathcal{E}_{X}$
unless $\mathcal{E}_{X}$ has slope $1$ in which case $\mathcal{J}_{X}=\left\{ \underline{j}\right\} =\left\{ \bar{j}\right\} $
is the left-most point of the line). Note that $\mathcal{P}_{X}\subset\mathcal{J}_{X}$
(see Figure \ref{fig:charproof}). 

\begin{figure}[h]
    \begin{center}
        \includegraphics[scale=0.65]{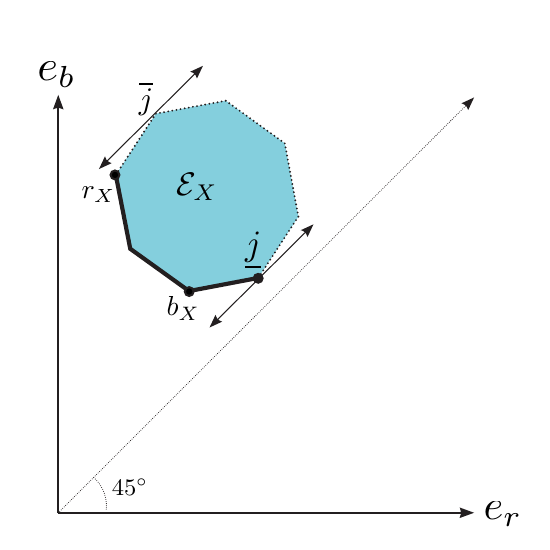}
        \caption{The points $\bar{j}$ and $\underline{j}$ in the feasible set $\mathcal{E}_{X}$ } \label{fig:charproof}
    \end{center} 
\end{figure}

We will first show that $\mathcal{F}_{X}\subset\mathcal{J}_{X}$.
Note that this is trivial if $\mathcal{E}_{X}$ is line segment so
suppose otherwise, and let $e\in\mathcal{F}_{X}$ but $e\not\in\mathcal{J}_{X}$.
This means that we can find $e^{\prime}=\left(e_{r}-\delta,e_{b}-\delta\right)\in\mathcal{E}_{X}$
for small $\delta>0$ and note that $e^{\prime}<e$ and $\left|e_{r}-e_{b}\right|=\left|e_{r}^{\prime}-e_{b}^{\prime}\right|$.
Thus, $e^{\prime}>_{FA}e$ so $e\not\in\mathcal{F}_{X}$ yielding
a contradiction.

Now, consider the group-balanced case where $e_{b}<e_{r}$ at $b_{X}$
and $e_{r}<e_{b}$ at $r_{X}$. We will show that $\mathcal{F}_{X}\subset\mathcal{P}_{X}$.
Let $e\in\mathcal{F}_{X}$ and since $\mathcal{F}_{X}\subset\mathcal{J}_{X}$,
$e\in\mathcal{J}_{X}$. First, suppose $e$ is on the boundary between
$\bar{j}$ and $r_{X}$ (not including $r_{X}$). In this case, we
can find small $\delta>0$ such that $\left(e_{r},e_{b}-\delta\right)\in\mathcal{E}_{X}$
but $\left(e_{r},e_{b}-\delta\right)>_{FA}e$ yielding a contradiction.
The case for $e$ is on the boundary between $b_{X}$ and $\underline{j}$
is symmetric. This implies $e\in\mathcal{P}_{X}$ as desired so $\mathcal{F}_{X}=\mathcal{P}_{X}$.
Note that this same argument applies to the group-balanced case where
$e_{r}=e_{b}$ at $r_{X}$ and $b_{X}$ so $r_{X}=b_{X}$. 

Next, consider the $r$-skewed case where $e_{r}<e_{b}$ at $r_{X}$
and $e_{r}\leq e_{b}$ at $b_{X}$. Let $e\in\mathcal{F}_{X}\subset\mathcal{J}_{X}$
and by the same argument above, we can rule out $e$ being on the
boundary between $\bar{j}$ and $r_{X}$ (not including $r_{X}$).
Moreover, we can also rule out $e$ being on the boundary between
$f_{X}$ and $\underline{j}$ (not including $f_{X}$) since $\left(e_{r}-\delta,e_{b}\right)\in\mathcal{E}_{X}$
but $\left(e_{r}-\delta,e_{b}\right)>_{FA}e$ for small $\delta$
(if $\mathcal{E}_{X}$ is a line segment, then just note that $f_{X}>_{FA}e$).
Thus, what remains to be shown is that all the boundary points between
$b_{X}$ and $f_{X}$ must be in $\mathcal{F}_{X}$. Let $e$ be any
such point. Since $e$ is on the boundary of $\mathcal{E}_{X}$ between
$b_{X}$ and $f_{X}$, we can find some line with slope between $0$
and $1$ such that all of $\mathcal{E}_{X}$ lies above this line.
If $e\not\in\mathcal{F}_{X}$, then there exists some $e^{\prime}\in\mathcal{E}_{X}$
such that $e^{\prime}>_{FA}e$. But this implies that $e^{\prime}$
must lie strictly beneath this line, yielding a contradiction. This
establishes that $\mathcal{F}_{X}$ is exactly the boundary between
$r_{X}$ and $f_{X}$ containing $\mathcal{P}_{X}$. The proof for
the $b$-skewed case is symmetric.

\subsection{Proof of Corollary \ref{corr:Tradeoff}}

First, suppose $X$ is group-balanced, so by Theorem \ref{thm:FullDesignPareto} $\mathcal{F}_X=\mathcal{P}_X$, so there cannot be two points $e,e^\prime \in \mathcal{F}_X$ such that $e>_{PD}e^\prime$. Now, suppose $X$ is $r$-skewed. Note that in this case, $b_{X}$
and $f_{X}$ must both be weakly above the 45-degree line. Let $e=b_{X}$
and $e^{\prime}=f_{X}$. By definition, we have $e_{b}\leq e_{b}^{\prime}$
and $e_{b}-e_{r}>e_{b}^{\prime}-e_{r}^{\prime}$ (where the strict
inequality follows from the fact that $b_{X}\not=f_{X}$). This implies
that
\[
e_{r}-e_{r}^{\prime}<e_{b}-e_{b}^{\prime}\leq0
\]
so $e_{r}<e_{r}^{\prime}$. Thus, $e>_{PD}e^{\prime}$ but $\left|e_{b}-e_{r}\right|>\left|e_{b}^{\prime}-e_{r}^{\prime}\right|$
as desired.

\subsection{Proof of Proposition \ref{prop:MinimalClass}} \label{app:Welfare}

We prove the following stronger result where the equivalence between (1) and (2) is the content of Proposition \ref{prop:MinimalClass}.
 
\begin{proposition} \label{prop:pareto_smallest}
The following are equivalent:
\begin{enumerate}
\item $e\in\mathcal{F}_X$
\item $e\in\mathcal{C}_{X}(\succeq)$ for some FA preference
$\succeq$
\item $\left\{ e\right\} =\mathcal{C}_{X}(\succeq)$ for some
FA preference $\succeq$ 
\item $e\in\mathcal{C}_{X}(\succeq)$ for some simple FA preference
$\succeq$
\end{enumerate}
\end{proposition}

This is stronger than Proposition \ref{prop:MinimalClass} in that it additionally shows that $\mathcal{F}_X$ is minimal in the sense that we cannot exclude any points from $\mathcal{F}_X$ without hurting some designer. This is because for every point $e\in\mathcal{F}_X$, there exists some FA preference $\succeq$ such that $e$ is the \emph{unique} optimal error pair given $\succeq$ within the feasible set $\mathcal{E}_X$. 

\begin{proof}
We will first show that (3) implies (2) implies (1) implies (3). Note
that (3) implies (2) is trivial. To see why (2) implies (1), suppose
$e\in\mathcal{C}_{X}(\succeq)$ for some FA preference
$\succeq$ but $e\not\in\mathcal{F}_X$. Thus, there exists
some $e^{\prime}>_{FA}e$ so $e^{\prime}\succ e$ yielding a contradiction.

We now prove that (1) implies (3). Fix some $e^{*}\in\mathcal{F}_X$
and let $\psi:\mathbb{R}\rightarrow\left(0,1\right)$ be a strictly decreasing
function. Define
\[
w\left(e\right)=\begin{cases}
1+\psi\left(e_{r}+e_{b}\right) & \text{if }e=e^{\ast}\text{ or }e>_{FA}e^{\ast}\\
\psi\left(e_{r}+e_{b}\right) & \text{otherwise}
\end{cases}
\]
and let $\succeq$ be the corresponding preference. We will show that
$\succeq$ is an FA preference. Suppose $e>_{FA}e^{\prime}$ so $e>_{PD}e'$ which implies $\psi\left(e_{r}+e_{b}\right)>\psi\left(e_{r}^{\prime}+e_{b}^{\prime}\right)$.
If both points FA-dominate $e^{\ast}$ or neither do, then $w\left(e\right)>w\left(e^{\prime}\right)$.
The only remaining case is when $e>_{FA}e^{\ast}$ but $e^{\prime}$
does not FA-dominate $e^{\ast}$, in which case 
\[
w\left(e\right)=1+\psi\left(e_{r}+e_{b}\right)>1>\psi\left(e_{r}^{\prime}+e_{b}^{\prime}\right)=w\left(e^{\prime}\right)
\]
Thus, $\succeq$ is an FA preference. Now, since $e^{*}\in\mathcal{F}_X$,
there exists no other $e\in\mathcal{E}_X$ such that $e>_{FA}e^{\ast}$.
That means that for all $e\in\mathcal{E}_X\backslash\left\{ e^{\ast}\right\} $,
$w\left(e^{\ast}\right)>w\left(e\right)$ so $\left\{ e^{\ast}\right\} =\mathcal{C}_{X}(\succeq)$.
This proves (3).

Finally, we show the equivalence of (1) and (4). Note that (4) implies
(2) which implies (1) from above. We now show that (1) implies (4).
Fix some $e^{*}\in\mathcal{F}_X$, so by Theorem \ref{thm:FullDesignPareto},
$e^{*}$ must either belong to the lower boundary from $r_{X}$ to
$b_{X}$ or the lower boundary from $b_{X}$ to $f_{X}$, where the
latter case only happens when $X$ is $r$-skewed (we omit the symmetric
situation when $X$ is $b$-skewed). If $e^{*}$ belongs to the boundary
from $r_{X}$ to $b_{X}$, then from the proof of Theorem \ref{thm:FullDesignPareto}
we know that $e^{*}$ belongs to an edge of this boundary that has
negative slope. Thus there exists a vector $(\gamma_{r},\gamma_{b})$
that is normal to this edge, such that $e^{*}$ maximizes $\gamma_{r}e_{r}+\gamma_{b}e_{b}$
among all feasible points. Since this edge has negative slope, it
is straightforward to see that $\gamma_{r},\gamma_{b}<0$. So $e$
maximizes the simple utility $\gamma_{r}e_{r}+\gamma_{b}e_{b}$ as
desired.

If instead $X$ is $r$-skewed and $e^{*}$ belongs to the boundary
from $b_{X}$ to $f_{X}$, then again $e^{*}$ belongs to an edge
of this boundary. But now this edge must have weakly positive slope
(since the edge starting from $b_{X}$ has weakly positive slope by
the definition of $b_{X}$, and since the boundary is convex). In
addition, this slope must be strictly smaller than $1$ because otherwise
$f_{X}$ would be farther away from the 45-degree line compared to
its adjacent vertex on this boundary. It follows that the outward
normal vector $(\beta_{r},\beta_{b})$ to the edge that $e^{*}$ belongs
to satisfies $\beta_{r}\geq0\geq-\beta_{r}>\beta_{b}$. The point
$e^{*}$ of interest maximizes $\beta_{r}e_{r}+\beta_{b}e_{b}$ among
all feasible points. Now let us choose any $\gamma_{f}$ to belong
to the interval $(\beta_{b},-\beta_{r})$, which is in particular
negative. Further define $\gamma_{r}=\beta_{r}+\gamma_{f}<0$ and
$\gamma_{b}=\beta_{b}-\gamma_{f}<0$. Then $\beta_{r}e_{r}+\beta_{b}e_{b}$
can be rewritten as $\gamma_{r}e_{r}+\gamma_{b}e_{b}+\gamma_{f}(e_{b}-e_{r})$.
If we consider the simple utility $\gamma_{r}e_{r}+\gamma_{b}e_{b}+\gamma_{f}\vert e_{b}-e_{r}\vert$,
then for any other feasible point $e^{**}$ it holds that 
\begin{align*}
\gamma_{r}e_{r}^{**}+\gamma_{b}e_{b}^{**}+\gamma_{f}\vert e_{b}^{**}-e_{r}^{**}\vert & \leq\gamma_{r}e_{r}^{**}+\gamma_{b}e_{b}^{**}+\gamma_{f}(e_{b}^{**}-e_{r}^{**})\\
 & =\beta_{r}e_{r}^{**}+\beta_{b}e_{b}^{**}\\
 & \leq\beta_{r}e_{r}^{*}+\beta_{b}e_{b}^{*}\\
 & =\gamma_{r}e_{r}^{*}+\gamma_{b}e_{b}^{*}+\gamma_{f}(e_{b}^{*}-e_{r}^{*})\\
 & =\gamma_{r}e_{r}^{*}+\gamma_{b}e_{b}^{*}+\gamma_{f}\vert e_{b}^{*}-e_{r}^{*}\vert,
\end{align*}
where the first inequality holds since $\gamma_{f}\leq0$ and the
last equality holds because $e^{*}\in\mathcal{F}_X$ must be weakly
above the 45-degree line. Hence the above inequality shows that $e^{*}$
maximizes the simple utility we have constructed, completing the proof.
\end{proof}

\subsection{Proof of Proposition \ref{prop:AlgorithmsGB}} 

Recall the proof of Lemma \ref{lemm:ConvexPolygon} where we showed
that the group errors can be written as
\begin{alignat*}{1}
e_{g}\left(a\right) & =\sum_{x}p_{x}\left(a\left(x\right)\sum_{y}\frac{x_{y,g}}{p_{g}}\ell\left(1,y\right)+\left(1-a\left(x\right)\right)\sum_{y}\frac{x_{y,g}}{p_{g}}\ell\left(0,y\right)\right)
\end{alignat*}
Letting
\[
c_{i}\left(x\right):=\left(\sum_{y}\frac{x_{y,r}}{p_{r}}\ell\left(i,y\right),\sum_{y}\frac{x_{y,b}}{p_{b}}\ell\left(i,y\right)\right)\in\mathbb{R}^{2}
\]
for $i\in\left\{ 0,1\right\} $, we can write the group error pair
as
\[
e\left(a\right)=\sum_{x}p_{x}\left(a\left(x\right)c_{1}\left(x\right)+\left(1-a\left(x\right)\right)c_{0}\left(x\right)\right)
\]
If we let $E_{x}\subset\mathbb{R}^{2}$
denote the line segment from $c_{0}\left(x\right)$ to $c_{1}\left(x\right)$,
then since $\mathcal{A}=\bar{\mathcal{A}}$, we have
\[
\mathcal{E}_{X}=\left\{ e\left(a\right)\text{ : }a\in\Delta\bar{\mathcal{A}}\right\} =\sum_{x}p_{x}E_{x}
\]
In other words, the feasible set is the Minkowski mixture of the line
segments $E_{x}$. Moreover, if we let $S\left(\mathcal{E},z\right)$
denote the support functional for any closed convex set $\mathcal{E}\subset\mathbb{R}^{2}$
and unit vector from the origin $z\in\mathbb{R}^{2}$, then by the
linearity of the support functional (see Theorem 1.7.5 of \cite{Schneider93}),
\begin{alignat}{1}
S\left(\mathcal{E}_{X},z\right) & =S\left(\sum_{x}p_{x}E_{x},z\right)=\sum_{x}p_{x}S\left(E_{x},z\right).\label{eq:support}
\end{alignat}

Consider the case where $X$ is group-balanced so $\mathcal{F}_{X}=\mathcal{P}_{X}$.
Order the (non-singleton) line segments $L_{1},L_{2},\dots$ in $\mathcal{F}_{X}$
as we trace out $\mathcal{F}_{X}=L_{1}\cup L_{2}\cup\cdots$ from
$r_{X}$ to $b_{X}$. Note that the slopes of $L_{i}$ are increasing
in $i$. 

First let $z_{0}$ denote the vector from
the origin to $\left(-1,0\right)$. Then by equation \eqref{eq:support},
$e(a_{r}^{\ast})\in S\left(\mathcal{E}_{X},z_0\right)=\sum_{x}p_{x}S\left(E_{x},z_0\right)$ so $e(a_{r}^{\ast})$ is a mixture of the left-most point of $E_{x}$
for all $x\in\mathcal{X}$. In other words, the $r$-optimal algorithm corresponds to minimizing group $r$'s error for each realization  $x \in \mathcal{X}$.

Next, consider the first segment $L_{1}$ and without loss,
assume it is non-singleton. Let $z_{1}\in\mathbb{R}^{2}$ be the unit
vector from the origin such that $L_{1}=S\left(\mathcal{E}_{X},z_{1}\right)$.
Then again by equation \eqref{eq:support}, there must exist some $x\in\mathcal{X}$ such that $S\left(E_{x},z_{1}\right)$
is not a singleton (as otherwise $L_1$ would be a singleton). Moreover, the slope of any such $E_{x}$
must be the same as the slope of $L_{1}$. We now argue that  $E_{x}$
must have the most negative slope among all $E_{x'}$, $x' \in \mathcal{X}$. To see why, suppose towards contradiction that some other $E_{x^{\prime}}$ has a strictly more negative slope. This implies that $S\left(E_{x'},z_{1}\right)$ is the right-most point of the segment $E_{x'}$ (recall that the vector $z_1$ is orthogonal to the segment $E_x$). Now
$e\left(a_{r}^{\ast}\right)\in L_{1}=S\left(\mathcal{E}_{X},z_{1}\right)$
so by equation \eqref{eq:support}, $a_{r}^{\ast}\left(x^{\prime}\right)$
must correspond to the right-most point of $E_{x^{\prime}}$. However, this contradicts with our observation above that $a_{r}^*$ has support only on the left-most points of $(E_x)_{x \in \mathcal{X}}$. This shows that $L_{1}$ corresponds to $E_{x_1}$ (i.e. the most negative slope) and thus is the segment
from $r_{X}=e\left(a_{r}^{\ast}\right)$ to $e\left(a_{1}\right)$.

We now prove the result by induction, paralleling the argument above. Consider two consecutive line segments $L_{i}$ and $L_{i+1}$
and let $e\left(a_{i}\right)$ be the shared point between
the two segments. Let $z_{i}$ and $z_{i+1}$ be the unit vectors from the
origin such that $L_{i}=S\left(\mathcal{E}_{X},z_{i}\right)$ and $L_{i+1}=S\left(\mathcal{E}_{X},z_{i+1}\right)$.
As before, there must exist $x\in\mathcal{X}$ such that the slope
of $E_{x}$ is the same as the slope of $L_{i+1}$. Moreover,
we will now show that the slope of $E_{x}$ must be the most negative slope out
of the remaining $E\left(x_{i+1}\right),E\left(x_{i+2}\right),\dots$.
Suppose otherwise so there exists some $E_{x^{\prime}}$ with an 
even more negative slope. By the induction argument, the slope of $E_{x^{\prime}}$
must be strictly between $E\left(x_{i}\right)$ and $E\left(x_{i+1}\right)$.
Now, $e\left(a_{i}\right)\in L_{i+1}=S\left(\mathcal{E}_{X},z_{i+1}\right)$
so by equation \eqref{eq:support}, $a_{i}\left(x^{\prime}\right)$
corresponds to the right-most point of $E_{x^{\prime}}$.
However, we also have $e\left(a_{i}\right)\in L_{i1}=S\left(\mathcal{E}_{X},z_{i}\right)$
by induction, so $a_{i}\left(x^{\prime}\right)$ corresponds
to the left-most point of $E_{x^{\prime}}$. This yields
a contradiction. By induction, $a_{1},a_{2},\dots$ are exactly
the algorithms that generate $\mathcal{F}_{X}=L_{1}\cup L_{2}\cup\cdots$.

Finally, consider the case where $X$ is $r$-skewed and let $\mathcal{F}_{X}=L_{1}\cup L_{2}\cup\cdots$ which starts at $r_X$ and ends at $f_X$. By the same argument as above,  $a_{1},a_{2},\dots$ are exactly
the algorithms that generate $\mathcal{F}_{X}=L_{1}\cup L_{2}\cup\cdots$ bearing in mind that the point $f_X$ may correspond to an interior point of some $L_i$. This concludes the proof.

\subsection{Proof of Proposition \ref{prop:XRevealsG}}

Recall that we showed in the proof of Lemma \ref{lemm:ConvexPolygon}
that group errors can be written as
\begin{alignat*}{1}
e_{g}\left(a\right) & =\sum_{x}p_{x}\left(a\left(x\right)\sum_{y}\frac{x_{y,g}}{p_{g}}\ell\left(1,y\right)+\left(1-a\left(x\right)\right)\sum_{y}\frac{x_{y,g}}{p_{g}}\ell\left(0,y\right)\right)
\end{alignat*}
Letting
\[
c_{i}\left(x\right):=\left(\sum_{y}\frac{x_{y,r}}{p_{r}}\ell\left(i,y\right),\sum_{y}\frac{x_{y,b}}{p_{b}}\ell\left(i,y\right)\right)\in\mathbb{R}^{2}
\]
for $i\in\left\{ 0,1\right\} $, we can write the group error pair
as
\[
e\left(a\right)=\sum_{x}p_{x}\left(a\left(x\right)c_{1}\left(x\right)+\left(1-a\left(x\right)\right)c_{0}\left(x\right)\right)
\]
Let $\mathcal{X}_{g}\subset\mathcal{X}$ be the realizations of the
covariate that reveal group identity $g$. We thus have
\begin{alignat*}{1}
\mathcal{E}_{X}= & \left\{ e\left(a\right)\text{ : }a\in\Delta\mathcal{A}\right\} \\
= & \left\{ \sum_{x\in\mathcal{X}_{r}}p_{x}\left(a_{r}\left(x\right)c_{1}\left(x\right)+\left(1-a_{r}\left(x\right)\right)c_{0}\left(x\right)\right)\right.\\
 & +\left.\sum_{x\in\mathcal{X}_{b}}p_{x}\left(a_{b}\left(x\right)c_{1}\left(x\right)+\left(1-a_{b}\left(x\right)\right)c_{0}\left(x\right)\right)\text{ : }a_{r}\in\Delta\mathcal{A}_{r},a_{b}\in\Delta\mathcal{A}_{b}\right\} 
\end{alignat*}
Note that this is the Minkowski addition of two sets. Since $x_{y,b}=0$
for all $x\in\mathcal{X}_{r}$, $c_{1}\left(x\right)$ and $c_{0}\left(x\right)$
are all points on the vertical axis for all $x\in\mathcal{X}_{r}$.
By symmetric reasoning, $c_{1}\left(x\right)$ and $c_{0}\left(x\right)$
are all points on the horizonal axis for all $x\in\mathcal{X}_{b}$.
Since $\mathcal{E}_{X}$ is the Minkowski addition of a set on the
vertical axis and a set on the horizonal axis, it must be a rectangle.
Moreover, $r_{X}=b_{X}$ must be its bottom-left vertex.

Finally, suppose without loss of generality that $r_X = b_X$ lies above the 45-degree line. If the rectangle $\mathcal{E}_X$ does not intersect the 45-degree line, then it is easy to see that $f_{X}$ must be the bottom-right vertex of $\mathcal{E}_X$. In this case the fairness-accuracy frontier is the entire bottom edge of the rectangle, which is a horizontal line segment. If instead the rectangle $\mathcal{E}_X$ intersects the 45-degree line, then $f_{X}$ is the intersection between the bottom edge of $\mathcal{E}_X$ and the 45-degree line. Again the fairness-accuracy frontier is the horizontal line segment from $r_X = b_X$ to $f_{X}$. This proves the result.

\subsection{Proof of Proposition \ref{prop:SpecialCase}}

Part (a):  We continue to follow the notation laid out in the
proof of Lemma \ref{lemm:ConvexPolygon}. Note that under strong independence,
\begin{align*}
\frac{x_{y,r}}{x_{y,b}} & =\frac{\mathbb{P}(Y=y,G=r\mid X=x)}{\mathbb{P}(Y=y,G=b\mid X=x)}\\
 & =\frac{\mathbb{P}(G=r\mid Y=y,X=x)}{\mathbb{P}(G=b\mid Y=y,X=x)}=\frac{p_{r}}{p_{b}}.
\end{align*}
Thus $\frac{x_{y,r}}{p_{r}}=\frac{x_{y,b}}{p_{b}}$ for all $x,y$.
It follows that the line segment $E(x)$, which connects the two points
$\left(\sum_{y}\frac{x_{y,r}}{p_{r}}\ell\left(1,y\right),\sum_{y}\frac{x_{y,b}}{p_{b}}\ell\left(1,y\right)\right)$
and $\left(\sum_{y}\frac{x_{y,r}}{p_{r}}\ell\left(0,y\right),\sum_{y}\frac{x_{y,b}}{p_{b}}\ell\left(0,y\right)\right)$,
lies on the 45-degree line. Therefore $\mathcal{E}_X=\sum_{x}E_{x}\cdot p_{x}$
is also on the 45-degree line.

Part (b):  We will show that $b_{X}=r_{X}$ under conditional
independence, and the result then follows from Theorem \ref{thm:FullDesignPareto}. Recall from the proof of Lemma \ref{lemm:ConvexPolygon}
that
\[
\mathcal{E}_X=\sum_{x\in\mathcal{X}}E_{x}p_{x}
\]
where
\begin{align*}
E_{x}= & \left\{ \lambda\left(\sum_{y}\frac{x_{y,r}}{p_{r}}\ell\left(1,y\right),\sum_{y}\frac{x_{y,b}}{p_{b}}\ell\left(1,y\right)\right)\right.\\
 & \left.+\left(1-\lambda\right)\left(\sum_{y}\frac{x_{y,r}}{p_{r}}\ell\left(0,y\right),\sum_{y}\frac{x_{y,b}}{p_{b}}\ell\left(0,y\right)\right)\text{ : }\lambda\in\left[0,1\right]\right\} 
\end{align*}
Under conditional independence, $x_{y,g}=x_{y}x_{g}$ (with $x_y := P(Y=y \mid X=x)$ and $x_g := P(G=g \mid X=x)$) so we have 
\[
E_{x}=\left\{ \left(\lambda\sum_{y}x_{y}\ell\left(1,y\right)+\left(1-\lambda\right)\sum_{y}x_{y}\ensuremath{\ell}\left(0,y\right)\right)\left(\frac{x_{r}}{p_{r}},\frac{x_{b}}{p_{b}}\right)\text{\text{ : }\ensuremath{\lambda\in\left[0,1\right]}}\right\} 
\]
 
This means that for each realization $x\in\mathcal{X}$, the outcome
that gives the lower error for group $r$ also gives the lower error
for group $b$. In other words, when $\sum_{y}x_{y}\ell\left(1,y\right)\leq\sum_{y}x_{y}\ensuremath{\ell}\left(0,y\right)$,
then the decision $d=1$ is optimal for both groups (and vice-versa if the inequality is reversed). Consider the following algorithm:
\[
a\left(x\right)=\begin{cases}
1 & \text{if }\sum_{y}x_{y}\ell\left(1,y\right)\leq\sum_{y}x_{y}\ensuremath{\ell}\left(0,y\right)\\
0 & \text{if }\sum_{y}x_{y}\ell\left(1,y\right)>\sum_{y}x_{y}\ensuremath{\ell}\left(0,y\right)
\end{cases}
\]
This algorithm will deliver the lowest error for both groups and
\[
(e_r\left(a\right),e_b\left(a\right))=r_{X}=b_{X}
\]
as desired.

\section{Proofs and Supporting Materials to Section \ref{sec:DesignInputs}}

\subsection{Proof of Lemma \ref{lemm:BayesDesign}}

We first characterize the input-design feasible set, and later study the input-design fairness-accuracy frontier. It is clear that regardless of what garbling the designer gives the agent, the agent's payoff will be weakly better than what can be achieved under no information. Thus any error pair that is implementable by input-design must belong to the halfspace $H$. Such an error pair must also belong to the feasible set $\mathcal{E}_X$, so we obtain the easy direction $\mathcal{E}^{*}_{X} \subseteq \mathcal{E}_X \cap H$ in the lemma. 

Conversely, we need to show that a feasible error pair $(e_r, e_b) \in \mathcal{E}_X$ that satisfies $\alpha_r e_r + \alpha_b e_b \leq e_0$ can be implemented by some garbling $T$. Consider a garbling $T$ that maps $X$ to $\Delta(\mathcal{D})$, with the interpretation that the realization of $T(x)$ is the recommended decision for the agent. If we abuse notation to let $a(x)$ denote the probability that the recommendation is $d = 1$ at covariate vector $x$, then this algorithm $a$ needs to satisfy the following obedience constraint for $d = 1$:\footnote{By a version of the revelation principle, such garblings together with the following obedience constraints are without loss for studying the feasible decisions, in a general setting.} 
\[
\sum_{y, g} \frac{\alpha_g}{p_g} \sum_{x} p_{x,y,g} \cdot a(x) \cdot \ell(1,y) \leq \sum_{y, g} \frac{\alpha_g}{p_g} \sum_{x} p_{x,y,g} \cdot a(x) \cdot \ell(0,y).
\]
The above is just equation \eqref{eq:choiceDM} adapted to the current setting with the observation that given the recommendation $T = 1$, the conditional probability of $Y = y$ and $G = g$ is proportional to the recommendation probability $\sum_{x} p_{x,y,g} \cdot a(x)$, where we use $p_{x, y, g}$ as a shorthand for $\mathbb{P}(X = x, Y = y, G = g)$. 

Let us rewrite the above displayed inequality as 
\[
\sum_{x,y,g} p_{x,y,g}  \frac{\alpha_g}{p_g} \cdot a(x) \ell(1,y) \leq \sum_{x,y,g} p_{x,y,g}  \frac{\alpha_g}{p_g} \cdot a(x) \ell(0,y).
\]
If we add $p_{x,y,g}  \frac{\alpha_g}{p_g}(1-a(x)) \ell(0,y)$ to each summand above, we obtain
\begin{equation}\label{eq:obed_a=1}
\sum_{x,y,g} p_{x,y,g}  \frac{\alpha_g}{p_g} \cdot \left(a(x)\ell(1,y) + (1-a(x))\ell(0,y)\right) \leq \sum_{x,y,g} p_{x,y,g} \frac{\alpha_g}{p_g} \cdot \ell(0,y).
\end{equation}
Now, the LHS above can be rewritten as $\sum_{x,y,g} p_{x,y,g}\frac{\alpha_g}{p_g} \cdot \mathbb{E}_{D \sim a(x)} [\ell(D,y) \mid X = x, Y = y, G = g]$, which is also equal to $\sum_{g} \alpha_g \cdot \mathbb{E}_{D \sim a(x)} [\ell(D,Y) \mid G = g]$. This is precisely the agent's expected loss when following the designer's recommended decisions. 

On the other hand, the RHS in \eqref{eq:obed_a=1} can be seen to be the agent's expected loss when taking the decision $d = 0$ regardless of the designer's recommendation. Thus, we deduce that the obedience constraint for the recommendation $d=1$ is equivalent to \eqref{eq:obed_a=1}, which simply says that the agent's payoff under the designer's recommendation should be weakly better than the constant decision $d = 0$ ignoring the recommendation. Symmetrically, the other obedience constraint for the recommendation $d = 0$ is equivalent to the agent's payoff being better than the constant decision $d = 1$. Put together, these obedience constraints thus reduce to the requirement that the designer's recommendation gives the agent a payoff that exceeds what can be achieved with no information. 

For any error pair $(e_r, e_b)$ that is feasible under unconstrained design, we can construct a garbling $T$ that implements it by recommending the desired decision. If $(e_r, e_b)$ belongs to the halfspace $H$, then by the previous analysis we know that obedience is satisfied. Thus $(e_r, e_b)$ is implementable under input-design, showing that $\mathcal{E}_X \cap H = \mathcal{E}^{*}_{X}$ as desired.

Finally we turn to the fairness-accuracy frontier and argue that $\mathcal{F}^{*}_{X} = \mathcal{F}_X \cap H$. In one direction, if an error pair is FA-undominated in $\mathcal{E}_X$ and implementable under input design, then it is also FA-undominated in the smaller set $\mathcal{E}^{*}_{X}$. This proves $\mathcal{F}_X \cap H \subseteq \mathcal{F}^{*}_{X}$. In the opposite direction, suppose for contradiction that a certain point $(e_r, e_b) \in \mathcal{F}^{*}_{X}$ does not belong to $\mathcal{F}_X \cap H$. Since $\mathcal{F}^{*}_{X} \subseteq \mathcal{E}^{*}_{X} \subseteq H$, we know that $(e_r, e_b)$ must not belong to $\mathcal{F}_X$. Thus by definition of $\mathcal{F}_X$, $(e_r, e_b)$ is FA-dominated by some other error pair $(\widehat{e_r}, \widehat{e_b}) \in \mathcal{E}_X$. In particular, we must have $\widehat{e_r} \leq e_r$ and $\widehat{e_b} \leq e_b$, which implies $\alpha_r \widehat{e_r} + \alpha_b \widehat{e_b} \leq \alpha_r e_r + \alpha_b e_b \leq e_0$ (the first inequality uses $\alpha_r, \alpha_b > 0$ and the second uses $(e_r, e_b) \in \mathcal{F}^{*}_{X} \subseteq\mathcal{E}^{*}_{X}$). It follows that the FA-dominant point $(\widehat{e_r}, \widehat{e_b})$ also belongs to $H$ and thus $\mathcal{E}^{*}_{X}$. But this contradicts the assumption that $(e_r, e_b)$ is FA-undominated in $\mathcal{E}^{*}_{X}$. Such a contradiction completes the proof.

\subsection{Supplementary Material to Section \ref{sec:SimpleExample}} \label{app:ExclusionOptimal}

In this section, we compute the input-design feasible set and fairness-accuracy frontier for the example in Section \ref{sec:SimpleExample}. Since $X$ is a null signal, garblings of $(X, X')$ are the same as garblings of $X'$. Without loss, we can restrict attention to garblings of $X'$ that take two values, $d=1$ and $d=0$, which correspond to the designer's decisions for the agent. Any such garbling can be identified with a pair $(\alpha,\beta)$, where $\alpha$ is the probability with which $X'=1$ is mapped into $d=1$, and $\beta$ is the probability with which $X'=0$ is mapped into $d=1$. It is easy to check that the agent's obedience constraint reduces to the simple inequality $\alpha \geq \beta$, which intuitively requires the agent to choose $d=1$ more often when $X' = 1$.

For any pair $(\alpha, \beta)$, the two groups' errors can be calculated as 
\[
e_r(\alpha,\beta) = \frac{1}{2}(1-\alpha) + \frac{1}{2}\beta = 0.5 - 0.5(\alpha-\beta),
\] 
\[
e_b(\alpha,\beta) = \frac{1}{2}\cdot 0.6(1-\alpha) + \frac{1}{2} \cdot 0.4(1-\beta) + \frac{1}{2}\cdot 0.4\alpha + \frac{1}{2} \cdot 0.6\beta = 0.5 - 0.1(\alpha-\beta).
\]
So as $\alpha - \beta$ ranges from $0$ to $1$, the implementable group errors constitute the line segment connecting $(0, 0.4)$ with $(0.5, 0.5)$. This entire line segment is also the fairness-accuracy frontier $\mathcal{F}^{*}_{X,X'}$, as illustrated in Figure \ref{fig:example} in the main text.

For an Egalitarian designer, sending the null signal $X$ leads to the point $(0.5, 0.5)$ and yields a payoff of 0. In contrast, we say that the designer ``makes use of $X'$ over $X$" if the garbling $T$ is \emph{not} independent of $X'$ conditional on $X$ (in this example the conditioning is irrelevant since $X$ is null). Whenever $T$ is not independent of $X'$, then for some realizations of $T$ the agent believes $X' = 1$ is more likely, which makes $d=1$ strictly optimal. Thus, whenever the designer makes use of $X'$ in the garbling, the agent is strictly better off compared to the null signal, and the resulting error pair must be distinct from $(0.5, 0.5)$. But given the shape of the implementable set, this means that the designer is strictly worse off when any information about $X'$ is provided to the agent.

\subsection{Proof of Proposition \ref{thm:FullVersusBayes}} 

We now deduce Proposition \ref{thm:FullVersusBayes} from Lemma \ref{lemm:BayesDesign}. If $X$ is group-balanced, then by Theorem \ref{thm:FullDesignPareto} we know that $\mathcal{F}_X$ is the part of the boundary of $\mathcal{E}_X$ that connects $r_X$ to $b_X$ from below. Clearly, $\mathcal{F}^{*}_{X} = \mathcal{F}_X$ can only hold if $r_X, b_X \in \mathcal{F}^{*}_{X} \subseteq H$, so we focus on the ``if'' direction of the result. Suppose $r_X, b_X \in H$, then we claim that the entire lower boundary of $\mathcal{E}_X$ from $r_X$ to $b_X$ belongs to $H$. Indeed, let $m_X$ be the point that has the same $e_r$ as $r_X$ and the same $e_b$ as $b_X$. Geometrically, $m_X$ is such that the line segments $m_Xr_X$ and $m_Xb_X$ are parallel to the axes. Because $r_X, b_X$ have respectively minimal group errors in the feasible set $\mathcal{E}_X$, and because we are considering the lower boundary, any point on this lower boundary $\mathcal{F}_X$ must belong to the triangle with vertices $r_X, b_X$ and $m_X$. Since $r_X, b_X, m_X$ all belong to the halfspace $H$ ($m_X \in H$ because the agent's payoff weights $\alpha_r, \alpha_b$ are non-negative), we deduce that $\mathcal{F}_X \subseteq H$. Hence whenever $r_X, b_X \in H$, we have by Lemma \ref{lemm:BayesDesign} that $\mathcal{F}^{*}_{X} = \mathcal{F}_X \cap H = \mathcal{F}_X$. This argument proves Proposition \ref{thm:FullVersusBayes} in the group-balanced case. 

Suppose instead that $X$ is $r$-skewed (a symmetric argument applies to the $b$-skewed case). To generalize the above argument, we need to show that whenever $r_X, f_{X}$ belong to $H$, then so does the entire lower boundary connecting these points. To see this, note that by the definition of $b_X$ and $f_{X}$, the lower boundary connecting these two points consists of positively sloped edges.\footnote{If we start from $b_X$ and traverse the lower boundary to the right until $f_{X}$, then the first edge of this boundary must be positively sloped because $b_X$ has minimum $e_b$. The final edge of this boundary must also be positively sloped, since otherwise the starting vertex of this edge would be closer to the 45-degree line than $f_{X}$. It follows by convexity that the entire boundary from $b_X$ to $f_{X}$ has positive slopes.} So across all points on this part of the lower boundary, $f_{X}$ maximizes $\alpha_r e_r + \alpha_b e_b$. Thus the assumption $f_{X} \in H$ implies that the lower boundary from $b_X$ to $f_{X}$ belongs to $H$. In particular $b_X \in H$, which together with $r_X \in H$ implies that the lower boundary from $r_X$ to $b_X$ also belongs to $H$ (by the same argument as in the group-balanced case before). Hence the entire lower boundary from $r_X$ to $f_{X}$ belongs to $H$, as we desire to show.

\subsection{Proof of Proposition \ref{prop:ExcludeG}}

We first present a simple lemma which conveniently restates the property of FA-dominance for frontiers:

\begin{lemma}\label{lemm:UniformWorsen}
$\mathcal{F}^*_{X,X'} >_{FA} \mathcal{F}^*_X$ if and only if $\mathcal{F}^{*}_{X}$ does not intersect with $\mathcal{F}^{*}_{X,X'}$.
\end{lemma}

The proof of this lemma is straightforward: If there exists a point in $\mathcal{F}^{*}_{X}$ that also belongs to $\mathcal{F}^{*}_{X,X'}$, then this point is not FA-dominated by any point in $\mathcal{F}^{*}_{X,X'}$. On the other hand, suppose no point in $\mathcal{F}^{*}_{X}$ belongs to $\mathcal{F}^{*}_{X,X'}$. Note that any point in $\mathcal{F}^{*}_{X}$ is implementable via a garbling of $X$ and thus implementable via a garbling of $X, X'$. Thus any such point belongs to $\mathcal{E}^{*}_{X,X'}$, and since it is not FA-optimal in this set, it must be FA-dominated by some FA-optimal point in this (compact) set, as we desire to show.

Below we use Lemma \ref{lemm:UniformWorsen} to deduce Proposition \ref{prop:ExcludeG}. The key observation is that whether or not $G$ is excluded does not affect the minimal (or maximal) feasible error for either group. This is because if we want to minimize the error of a particular group $g$ using an algorithm that depends on $X$, then we essentially condition on $G = g$ anyways. 

With this observation, suppose $X$ is strictly group-balanced. Then $r_X$ lies strictly above the 45-degree line and $b_X$ lies strictly below. Since we assume $r_X, b_X \in H$, Proposition \ref{thm:FullVersusBayes} tells us that the input-design fairness-accuracy frontier $\mathcal{F}^{*}_{X}$ is the same as the unconstrained fairness-accuracy frontier $\mathcal{F}_X$, and by Theorem \ref{thm:FullDesignPareto} this frontier is the lower boundary of the feasible set $\mathcal{E}_X$ connecting $r_X$ to $b_X$. By Lemma \ref{lemm:UniformWorsen}, we just need to show that in this case the lower boundary of $\mathcal{E}_X$ from $r_X$ to $b_X$ does not intersect with the input-design fairness-accuracy frontier $\mathcal{F}^{*}_{X,G}$ given $(X, G)$. To characterize the latter frontier, let $m_X = r_{X, G} = b_{X, G}$ denote the error pair that has the same $e_r$ as $r_X$ and the same $e_b$ as $b_X$. Without loss of generality assume $m_X$ lies weakly above the 45-degree line. Then from Proposition \ref{prop:XRevealsG} we know that the unconstrained fairness-accuracy frontier $\mathcal{F}_{X,G}$ is the horizontal line segment from $m_X$ to $f_{X,G}$. This point $f_{X, G}$ is the intersection between the line segment $m_Xb_X$ and the 45-degree line (here we use the fact that $m_X$ lies above the 45-degree line and $b_X$ lies below). As $b_X \in H$, the points $m_X$ and $f_{X, G}$ also belong to $H$ because they have equal $e_b$ and smaller $e_r$ compared to $b_X$. Hence the input-design fairness-accuracy frontier $\mathcal{F}^{*}_{X,G}$ is also the line segment from $m_X$ to $f_{X, G}$. To see that this horizontal line segment does not intersect the boundary of $\mathcal{E}_X$ from $r_X$ to $b_X$, just note that $b_X$ is the only point on that boundary with the same (minimal) $e_b$ as any point on the horizontal line segment. But $b_X$ does not belong to that line segment because it is strictly below the 45-degree line. This proves the result when $X$ is strictly group-balanced. 

Now suppose $X$ is not strictly group-balanced. Then $r_X$ and $b_X$ lie weakly on the same side of the 45-degree line, and without loss of generality let us assume they lie weakly above. It is still the case that the unconstrained fairness-accuracy frontier $\mathcal{F}_{X,G}$ is the horizontal line segment from $m_X$ to $f_{X, G}$. But in the current setting $f_{X, G}$ must be weakly closer to the 45-degree line than $b_X$, which means that $b_X$ now lies in between $m_X$ and $f_{X, G}$. In other words, $b_X \in \mathcal{F}_X$ and $b_X \in \mathcal{F}_{X,G}$. But by assumption, $b_X$ also belongs to $H$. So Lemma \ref{lemm:BayesDesign} tells us that $b_X$ belongs to the input-design fairness-accuracy frontiers $\mathcal{F}^{*}_{X}$ and  $\mathcal{F}^{*}_{X,G}$. This shows that the two frontiers $\mathcal{F}^{*}_{X}$ and $\mathcal{F}^{*}_{X,G}$ intersect, which completes the proof by Lemma \ref{lemm:UniformWorsen}. 

\subsection{Proof of Proposition \ref{prop:XGSimple}}

By Proposition \ref{prop:XRevealsG}, the fairness-accuracy frontier $\mathcal{F}_{X,G}$ is a line segment whose left endpoint is $r_{X,G}=b_{X,G}$ and right endpoint is $f_{X,G}$, with $e_b$ being constant on this frontier. 
We will first identify the optimal points for the designer in the unconstrained problem (i.e., if the designer were given full control over the algorithm, as in our Section \ref{sec:model}), and then show that the garblings indicated in the result implement these points.

Suppose first that $(X,G)$ is group-balanced. Then $\mathcal{F}_{X,G}$ is the singleton $r_{X,G}=b_{X,G}=f_{X,G}$, which is optimal for all FA preferences. So clearly this point is the designer's preferred point given any simple preference.

Now suppose $(X,G)$ is $r$-skewed. Then the frontier $\mathcal{F}_{X,G}$ consists entirely of points $(e_r,e_b)$ satisfying $e_b \geq e_r$. So maximizing the designer's payoff $-\gamma_re_r -\gamma_b e_b -\gamma_f \vert e_r-e_b\vert$ over error pairs on the FA frontier is equivalent to solving
\[\max_{(e_r,e_b) \in \mathcal{F}_{X,G}} (\gamma_f - \gamma_r) e_r - (\gamma_b + \gamma_f) e_b.\]
Since $e_b$ is constant on $\mathcal{F}_{X,G}$, the designer simply maximizes $(\gamma_f - \gamma_r) e_r$. If $\gamma_f < \gamma_r$, then the designer minimizes $e_r$ and thus $r_{X,G}=b_{X,G}$ is the uniquely optimal point for the designer. If $\gamma_f=\gamma_r$, then all points on $\mathcal{F}_{X,G}$ are optimal. Finally if $\gamma_f > \gamma_r$, then the designer's weight on $e_r$ is strictly positive, so that $f_{X,G}$ is the uniquely optimal point for the designer.

It remains to determine how the designer can implement these points with garblings. Again start with the group-balanced case. Sending the agent the fully revealing garbling means that the agent's feasible set is $\mathcal{E}_{X,G}$. Since by assumption the agent's preferences can be written as $-\alpha_r e_r - \alpha_b e_b$ for $\alpha_r,\alpha_b >0$, the agent optimally chooses $r_{X,G}=b_{X,G}$ as desired by the designer. This yields part (a) of the result. Moreover, when $(X,G)$ is group-balanced,  
\[\beta= \max\left\{\frac{\overline{e}_r - e_b^R}{\overline{e}_r - e_r^R}, ~~0 \right\}  =1\] since $e_r^R = e_b^R$ at $r_{X,G} = b_{X,G} = f_{X,G}$. Thus the $r$-shaded garbling maps each $(x,g)$ to the message $a_g^*(x)$ with probability 1, which the agent optimally obeys. Thus again the agent optimally chooses $r_{X,G}$.

Now consider the $r$-skewed case. By the same logic as for the group-balanced case, sending the agent the fully revealing signal leads the agent to choose $r_{X,G}$, which is optimal for the designer when $\gamma_f \leq \gamma_r$. This yields part (a) of the result. Part (b) of the result follows by noticing that the $r$-shaded garbling is precisely the garbling constructed in the proof of Lemma \ref{lemm:BayesDesign}, which recommends an action to the agent and leads to the error pair $(e_b^R, \min\{e_b^R, \overline{e}_r\})$ assuming that the agent follows the recommendation. By Proposition \ref{prop:XRevealsG} this error pair is the point $f_{X,G}$, which we assume belongs to $H$. So by the proof of Lemma \ref{lemm:BayesDesign}, the agent optimally follows the recommendation and chooses $f_{X,G}$, which is optimal for designers with $\gamma_f \geq \gamma_r$. This completes the proof.

\subsection{Proof of Proposition \ref{prop:ExcludeX'overG}}

Let
$
    \underline{e}_g = \min \{e_g \mid e \in \mathcal{E}_{X,G}\}$ and $
    \overline{e}_g = \max \{e_g \mid e \in \mathcal{E}_{X,G}\}$
be the minimal and maximal feasible errors for group $g$ given covariate vector $(X,G)$, and define
$\underline{e}^*_g  = \min \{e_g \mid e \in \mathcal{E}_{X,G,X'}\}$ and $
    \overline{e}^*_g = \max \{e_g \mid e \in \mathcal{E}_{X,G,X'}\}$
to be the corresponding quantities given $(X,G,X')$. 
The following lemma says that additional access to $X'$ reduces the minimal feasible error for group $g$ relative to $(X,G)$ if and only if $X'$ is decision-relevant over $X$ for group $g$.

\begin{lemma} \label{lemm:DecisionRelevant} $\underline{e}^*_g < \underline{e}_g$ if $X'$ is decision-relevant over $X$ for group $g$, and $\underline{e}^*_g = \underline{e}_g$ if it is not.
\end{lemma}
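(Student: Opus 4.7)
The plan is to write both quantities as weighted sums of minimized conditional losses and then compare them termwise. Let
\[ m_g(x) := \min_{d\in\mathcal{D}} \mathbb{E}[\ell(d,Y)\mid X=x, G=g], \qquad m^*_g(x,x') := \min_{d\in\mathcal{D}} \mathbb{E}[\ell(d,Y)\mid X=x, X'=x', G=g]. \]
The error-minimizing algorithm for group $g$ under input $(X,G)$ (respectively $(X,G,X')$) just picks a pointwise minimizer at each realization, so $\underline{e}_g = \mathbb{E}[m_g(X)\mid G=g]$ and $\underline{e}^*_g = \mathbb{E}[m^*_g(X,X')\mid G=g]$. By the tower property, $m_g(x)$ is the minimum over $d$ of a convex combination (over $x'$) of the numbers $\mathbb{E}[\ell(d,Y)\mid X=x, X'=x', G=g]$, which is at least the same convex combination of the pointwise minima $m^*_g(x,x')$. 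Summing over $x$ with weights $\mathbb{P}(X=x\mid G=g)$ yields the familiar value-of-information inequality $\underline{e}_g \geq \underline{e}^*_g$.

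Next I would characterize when the $x$-wise inequality $m_g(x) \geq \mathbb{E}[m^*_g(x,X')\mid X=x, G=g]$ is strict. For a fixed $x$ of positive probability, equality holds if and only if there is a single $d^*\in\{0,1\}$ that lies in $\argmin_{d} \mathbb{E}[\ell(d,Y)\mid X=x, X'=x', G=g]$ for every $x'$ of positive probability under $X'\mid X=x, G=g$. The ``if'' direction is immediate: such a $d^*$ attains the pointwise minimum at each such $x'$, hence also attains the averaged minimum defining $m_g(x)$. For the ``only if'' direction I would use that $|\mathcal{D}|=2$: every argmin is $\{0\}$, $\{1\}$, or $\{0,1\}$, so if the collection of argmins has no common element then some argmin is $\{0\}$ and another is $\{1\}$; in that case whichever $d$ one picks fails to minimize at one of the two $x'$-values, and the averaged inequality is strict.

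Combining these two steps, the failure of $x$-wise equality at some positive-probability realization $x$ is precisely the statement that there exist $(x,x')$ and $(x,\tilde x')$ of positive probability conditional on $G=g$ with $\argmin_d \mathbb{E}[\ell(d,Y)\mid X=x,X'=x',G=g]=\{1\}$ and $\argmin_d \mathbb{E}[\ell(d,Y)\mid X=x,X'=\tilde x',G=g]=\{0\}$, which matches the definition of decision-relevance of $X'$ over $X$ for group $g$ verbatim. When decision-relevance holds the $x$-wise inequality is strict at some positive-probability $x$, so the weighted sum strictly decreases and $\underline{e}^*_g < \underline{e}_g$. When it fails the $x$-wise inequality is an equality everywhere, so $\underline{e}^*_g = \underline{e}_g$.

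The only delicate step is the ``only if'' direction of the $x$-wise equality characterization, which genuinely uses the binary structure $|\mathcal{D}|=2$: with more decisions, argmins could pairwise intersect without sharing a global common element, and ``no common element'' would no longer reduce to ``one argmin is $\{0\}$ and another is $\{1\}$.'' Here the binary structure makes these two conditions equivalent and aligns the equality/inequality dichotomy exactly with the stated definition of decision-relevance.
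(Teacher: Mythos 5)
Your proof is correct and follows essentially the same route as the paper's: both reduce $\underline{e}_g$ and $\underline{e}^*_g$ to pointwise-optimal conditional losses, apply the ``min of an average versus average of mins'' comparison, and observe that strictness at some positive-probability $x$ is exactly the decision-relevance condition (your use of $|\mathcal{D}|=2$ in the converse mirrors the paper's reliance on the unique argmins $\{0\}$ and $\{1\}$). If anything, your treatment of the equality case is slightly more explicit than the paper's, which simply asserts that the pointwise inequality binds everywhere when decision-relevance fails.
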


\begin{proof} Let $a_g: \mathcal{X} \rightarrow \{0,1\}$ be any strategy mapping each realization of $X$ into an  optimal outcome for group $g$, i.e., 
\[
a_g(x) \in \argmin_{d\in \{0,1\}} \mathbb{E}\left[\ell(d,Y) \mid G=g,X=x)\right] \quad \forall x \in \mathcal{X}.
\]
Likewise let $a^*_g: \mathcal{X} \times \mathcal{X}' \rightarrow \{0,1\}$ satisfy 
\[
a^*_g(x,x') \in \argmin_{d\in \{0,1\}} \mathbb{E}\left[\ell(d,Y) \mid G=g, X=x, X'=x')\right] \quad \forall x\in \mathcal{X}, ~\forall x'\in \mathcal{X}'.
\]
By optimality of $a^*_g$, for all  $x\in \mathcal{X}$ and $x'\in \mathcal{X}'$,
\begin{align} \label{eq:compare}
    \mathbb{E}\left[ \ell(a^*_g(x,x'), Y) \right. & \left. \mid G=g, X=x, X'=x' \right]  \leq \mathbb{E}\left[\ell(a_g(x),Y) \mid G=g,X=x,X=x'\right] 
\end{align}
Suppose $X'$ is decision-relevant over $X$ for group $g$. Then there exist $x\in \mathcal{X}$ and $x',\tilde{x}'\in \mathcal{X}'$ such that the optimal assignment for group $g$ is uniquely equal to 1 at $(x,x')$ and 0 at $(x,\tilde{x}')$, where both $(x,x')$ and $(x,\tilde{x}')$ have positive probability conditional on $G=g$. But then (\ref{eq:compare}) must hold strictly at either $(x,x')$ or $(x,\tilde{x}')$. By taking the expectation of \eqref{eq:compare} conditional on $G = g$, we obtain
\begin{align*}
    \underline{e}^*_g & = \mathbb{E}\left[ \ell(a^*_g(X,X'), Y) \mid G=g \right] < \mathbb{E}\left[\ell(a_g(X),Y) \mid G=g\right] = \underline{e}_g.
\end{align*}
If $X'$ is not decision-relevant over $X$ for group $g$, then (\ref{eq:compare}) holds with equality at every $x,x'$, and the equivalence $\underline{e}^*_g=\underline{e}_g$ follows.
\end{proof}

We now use Lemma \ref{lemm:UniformWorsen} and \ref{lemm:DecisionRelevant} to prove Proposition \ref{prop:ExcludeX'overG}. First suppose $(X,G)$ is $r$-skewed, in which case $r_X = b_X$ lies strictly above the 45-degree line. By Proposition \ref{prop:XRevealsG}, the unconstrained fairness-accuracy frontier $\mathcal{F}_{X,G}$ is then the horizontal line segment from $r_{X,G} = b_{X,G}$ to $f_{X,G}$. 

If $X'$ is not decision-relevant over $X$ for group $b$, then from Lemma \ref{lemm:DecisionRelevant} we know that the minimal feasible error for group $b$ is the same given $(X,G, X')$ as given $(X,G)$. By assumption that $(X,G)$ is $r$-skewed,  group $b$'s minimal error given $(X,G)$ exceeds group $r$'s minimal error given $(X,G)$. Since group $b$'s minimal error is the same given $(X,G)$ and $(X,G, X')$, while group $r$'s minimal error is weakly smaller given $(X,G,X')$ compared to $(X,G)$, it must be that group $b$ minimal error given $(X,G, X')$ also exceeds the group $r$ minimal error given $(X,G, X')$. In other words, $r_{X,G, X'} = b_{X, G,X'}$ also lies strictly above the 45-degree line, and the fairness-accuracy frontier $\mathcal{F}_{X,G, X'}$ is the horizontal line segment from $r_{X,G, X'} = b_{X,G, X'}$ to $f_{X, G,X'}$. Crucially, this line segment shares the same $e_b$ as the line segment from $r_{X,G} = b_{X,G}$ to $f_{X,G}$. In addition, as $r_{X,G, X'}$ must have weakly smaller $e_r$ than $r_{X,G}$, and $f_{X,G, X'}$ must be weakly closer to the 45-degree line than $f_{X,G}$, we deduce that the unconstrained fairness-accuracy frontier $\mathcal{F}_{X,G,X'}$ is a horizontal line segment that is a superset of the line segment $\mathcal{F}_{X,G}$. Thus, in particular, $r_{X,G} = b_{X,G}$ belongs to both of these frontiers. Lemma \ref{lemm:BayesDesign} thus imply that $r_{X,G} = b_{X,G}$ also belongs to the input-design fairness-accuracy frontiers $\mathcal{F}^{*}_{X,G}$ and $\mathcal{F}^{*}_{X,G,X'}$ ($r_{X,G} = b_{X,G}$ belongs to $H$ because this point can be implemented by giving $(X,G)$ to the agent, who will then minimize both groups' errors given this information). The result then follows from Lemma \ref{lemm:UniformWorsen}.

If $X'$ is decision-relevant over $X$ for group $b$, then Lemma \ref{lemm:DecisionRelevant} tells us that $\underline{e}^*_b < \underline{e}_b$ with strict inequality. There are two cases to consider here. One case involves $\underline{e}^*_b > \underline{e}^*_r$, so that $(X, G,X')$  is $r$-skewed just as $(X,G)$ is. Then the unconstrained fairness-accuracy frontier $\mathcal{F}_{X,G,X'}$ is again a horizontal line segment, but with $e_b$ equal to $\underline{e}^*_b$. Since $\underline{e}^*_b < \underline{e}_b$, this frontier is parallel but lower than the fairness-accuracy frontier $\mathcal{F}_{X,G}$. Thus $\mathcal{F}_{X,G}$ does not intersect $\mathcal{F}_{X,G,X'}$. As their subsets, the input-design fairness-accuracy frontiers $\mathcal{F}^{*}_{X,G}$ and $\mathcal{F}^{*}_{X,G,X'}$ also do not intersect. Thus the result follows from Lemma \ref{lemm:UniformWorsen}. In the remaining case we have $\underline{e}^*_b \leq \underline{e}^*_r$, so that $(X,G, X')$  is $b$-skewed. Then the unconstrained fairness-accuracy frontier $\mathcal{F}_{X,G, X'}$ is now a \emph{vertical} line segment with $e_r = \underline{e}^*_r$. The points on this frontier have varying $e_b$, but any of the $e_b$ does not exceed $\underline{e}^*_r$ because these points are below the 45-degree line. Because $\underline{e}^*_r \leq \underline{e}_r < \underline{e}_b$, we thus know that any point on the frontier $\mathcal{F}_{X,G, X'}$ has strictly smaller $e_b$ compared to any point on $\mathcal{F}_{X,G}$. Once again these two unconstrained frontiers do not intersect, and nor do the input-design frontiers. This proves Proposition \ref{prop:ExcludeX'overG} when $(X,G)$ is $r$-skewed. 

A symmetric argument applies when $(X,G)$ is $b$-skewed, so below we focus on the case where $(X,G)$ is group-balanced. That is, $r_{X,G} = b_{X,G}$ lies on the 45-degree line. In this case the fairness-accuracy frontiers $\mathcal{F}_{X,G}$ and $\mathcal{F}^{*}_{X,G}$ are both this singleton point. If $X'$ is not decision-relevant over $X$ for group $b$, then Lemma \ref{lemm:DecisionRelevant} tells us that $\underline{e}^*_b = \underline{e}_b = \underline{e}_r \geq \underline{e}^*_r$. When equality holds the fairness-accuracy frontiers $\mathcal{F}_{X,G,X'}$ and $\mathcal{F}^{*}_{X,G,X'}$ are also the singleton point $r_{X,G} = b_{X,G}$, and the result trivially holds. If we instead have strict inequality $\underline{e}^*_b = \underline{e}_b > \underline{e}^*_r$, then $(X,G, X')$ is $r$-skewed and the unconstrained fairness-accuracy frontier $\mathcal{F}_{X,G,X'}$ is a horizontal line segment with one of the endpoints being $f_{X,G, X'} = r_{X,G} = b_{X,G}$. Thus $r_{X,G} = b_{X,G}$ belongs also to the input-design fairness-accuracy frontier $\mathcal{F}^{*}_{X,G,X'}$, showing that $\mathcal{F}^{*}_{X,G}$ and $\mathcal{F}^{*}_{X,G,X'}$ intersect. The result again follows from Lemma \ref{lemm:UniformWorsen}.

Conversely, suppose $X'$ is decision-relevant over $X$ for both groups. Then by Proposition \ref{prop:XRevealsG}, the unconstrained frontier $\mathcal{F}_{X,X'}$ is either a horizontal line segment with $e_b = \underline{e}^*_b < \underline{e}_b = \underline{e}_b$, or a vertical line segment with $e_r = \underline{e}^*_r < \underline{e}_r = \underline{e}_b$. Either way the point $r_X = b_X$ does not belong to this frontier, showing that $\mathcal{F}_X$ does not intersect with $\mathcal{F}_{X,X'}$. Hence $\mathcal{F}^{*}_{X}$ and $\mathcal{F}^{*}_{X,X'}$ also do not intersect, and Lemma \ref{lemm:UniformWorsen} concludes the proof. This completes the proof of Proposition \ref{prop:ExcludeX'overG}. 

\subsection{Proof of Proposition \ref{prop:SimpleXXG}}

Omitted since it is nearly identical to the proof of Proposition \ref{prop:XGSimple}.

\section{Supplementary Material to Section \ref{sec:Empirical}}

\subsection{Bootstrap Algorithm}
\label{app:bootstrap}
Let $(Y_i, X_i, G_i)_{i=1}^{\ell}$ be the test data and $\hat P$ be its empirical distribution.
Fix a sample split and any algorithm obtained from the training data.
To test the null hypothesis \eqref{eq:null1} against its alternative, we first compute the test statistic $\hat T \coloneqq \hat e_b^b - \hat e_w^b$, where $\hat e_g^b \coloneqq \frac{1}{n_g} \sum_{i \colon G_i = g} \ell(a(X_i), Y_i)$ is the sample estimate of $e_g^b$ on the test data.
Let $(Y_i^*, X_i^*, G_i^*)_{i=1}^{\ell}$ be a bootstrap sample drawn i.i.d.~from the test data and denote the bootstrap analogue of $\hat T$ by $\hat T^*$. 
The critical value for the test is based on the quantile of the bootstrap distribution of the test statistics. Specifically, let $\Psi$ be the (conditional) cumulative distribution function of $\hat{T}^{*} - \hat{T}$ given $\hat{P}$. In practice, these are not known exactly, but can be approximated via Monte-Carlo by taking $B$ independent bootstrap samples from $\hat{P}$ and computing the empirical distribution
\[
\hat \Psi(x) \coloneqq 
\frac{1}{B}\sum_{b=1}^B \mathbbm{1}\{\hat{T}^*_{b} - \hat{T} \le x\},\]
where $\hat{T}^*_{b}$ denotes the bootstrap test statistic computed on the $b$th sample. The $p$-value is given by $1 - \hat \Psi(\hat T)$.

To test the null hypothesis \eqref{eq:null2} against its alternative, let
$\hat T^W$ and $\hat T^M$ be the empirical analogues of $|e_w^w - e_m^w| - \delta$ and $|e_w^m - e_m^w| - \delta$, respectively. The bootstrap analogues $\hat T^{W*}$ and $\hat T^{M*}$ and the cdfs $\hat \Psi^W$ and $\hat \Psi^M$ are defined in the same way as before. The $p$-value is then given by
$\max \{\hat \Psi^W(\hat T^W), \hat \Psi^M(\hat T^M)\}$.

\subsection{Estimation of Feasible Sets via Linear Programming}
\label{app:LP}

First, let us examine the case where $\alpha_r$ and $\alpha_b$ are nonnegative.
When we compute the group errors, we substitute the classification loss
$
\mathbbm{1} \left\{ \mathbbm{1}\{\beta^\top x \geq 0\} \neq y \right\}
=
1 - \mathbbm{1} \left\{ (2y - 1) \beta^\top x > 0 \right\}
$
(for $\beta^\top x \neq 0$) with the hinge surrogate loss
$
\max \left\{1 -  (2y-1)\beta^\top x, 0 \right\}
$ (see also Figure~\ref{fig:surrogate_loss}.)
This allows us to formulate the following alternative LP with linear constraints (without integer constraints):
\[
\mathrm{(LP)} \left[
    \begin{matrix}
        \displaystyle{\min_{\beta, \hat e_r, \hat e_b,  w}} & \alpha_r \hat e_r + \alpha_b \hat e_b \\
        \mathrm{s.t.} 
        & \displaystyle{\hat e_r = \frac{1}{n_r}} \sum_{i \colon G_i = r} w_i + \lambda \|\beta\|_{1} \\[8pt]
        & \displaystyle{\hat e_b = \frac{1}{n_b}} \sum_{i \colon G_i = b} w_i + \lambda \|\beta\|_{1} \\[8pt]
        & w_i \geq 1 - (2 Y_i - 1)X_i^\top \beta  & (\text{for all $i$})  \\
        & w_i \geq 0 & (\text{for all $i$})\\
        & \beta \in \mathbb{R}^{\dim{\mathcal{X}}},\ \hat e_r \geq 0,\  \hat e_b \geq 0 
    \end{matrix}
    \right.
\]
Note that we also allow for the option of adding $L_1$ regularization into the definition of the sample error rates $e_b$ and $e_r$ in the definition of LP. Such regularization could be important to avoid over-fitting to the training data when $X$ is high-dimensional. We set $\lambda \coloneqq 0.001$ for the \citet{ObermeyerMullainathan} dataset, and $\lambda \coloneqq 0$ for the \citet{strack2014impact} dataset.
\begin{figure}
    \centering
    \includegraphics[height=5cm]{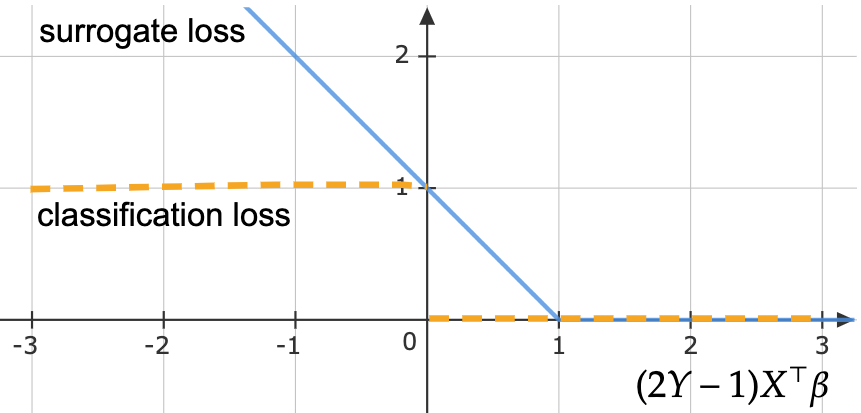}
    \caption{Classification and surrogate loss}
    \label{fig:surrogate_loss}
\end{figure}

When either $\alpha_r$ or $\alpha_b$ is strictly negative, we need a slight adjustment to the above LP. Suppose that $\alpha_r < 0$ and $\alpha_b \geq 0$ (other cases can be considered analogously.) The LP defined above becomes unbounded and has no optimal solution:
for any $M \in \mathbb{R}$, there exists a feasible solution $(\beta, \hat e_r, \hat e_b, w)$ such that $\alpha_r \hat e_r + \alpha_b \hat e_b < M$. To see this is possible, observe that we can make the value of the objective function arbitrarily small by increasing $w_i$ for some $i$ with $G_i=r$, which increases $\hat e_r$ (note that $w_i$ is not bounded from above).

To address the issue, we perform a change of variables before defining LP:
let $\tilde \alpha_r \coloneqq - \alpha_r > 0$ and $\tilde e_r \coloneqq 1 - \hat e_r$. Then, we have
$
\alpha_r \hat e_r + \alpha_b \hat e_b
=
\tilde \alpha_r \tilde e_r + \alpha_b \hat e_b + \tilde \alpha_r.
$
Since $\tilde \alpha_r$ is a constant, the original optimization problem
\[
\min\left\{ \alpha_r \hat e_r(a) + \alpha_b \hat e_b(a) \colon
\forall g, \ \hat e_g(a) = \frac{1}{n_g} \sum_{i \colon G_i=g} \ell(a(X_i), Y_i), \ a \in \mathcal{A}
\right\},
\]
is equivalent to the following:
\footnote{
The two problems are equivalent in the sense that the solutions of the two optimization problem have a clear one-to-one mapping: $1 - \hat e_r = \tilde e_r$.
}
\[
\min\left\{ \tilde \alpha_r \tilde e_r(a) + \alpha_b \hat e_b(a) \colon
\forall g, \ \hat e_g(a) = \frac{1}{n_g} \sum_{i \colon G_i=g} \ell(a(X_i), Y_i),\ \tilde e_r(a) = 1 - \hat e_r(a), \ a \in \mathcal{A}
\right\},
\]
As before, since we consider the class of linear algorithms, the latter optimization problem can be written as an MILP, which is the same as the previous MILP except for the contraint regarding $\hat e_r$: the constraint regarding $\hat e_r$ in the previous MILP is replaced by the following:
\begin{align*}
    \tilde e_r &= 1 - \hat e_r
    =
    \frac{1}{n_r} \sum_{i \colon G_i=r} \left(1 - \mathbbm{1}\left\{ \hat Y_i \neq Y_i \right\} \right)
    =
     \frac{1}{n_r} \sum_{i \colon G_i=r} \mathbbm{1}\left\{ \hat Y_i = Y_i \right\} 
     =
      \frac{1}{n_r} \sum_{i \colon G_i=r} \mathbbm{1}\left\{\hat Y_i \neq \tilde Y_i \right\},
\end{align*}
where $\tilde Y_i \coloneqq \mathbbm{1}\{Y_i = 0\}$ (the label is flipped.) This means that the new MILP has exactly the same form as the previous MILP if we flip the labels of group $r$. Then, the new MILP can be reduced to the LP by the same procedure as before.

In general, we can derive the LP by the following procedure:
\begin{enumerate}
    \item Formulate the LP as in the original case where $\alpha_b>0$ and $\alpha_r>0$;
    \item For group $g$ with $\alpha_g < 0$, flip the labels of group-$g$ observations, i.e., use $\tilde Y_i \coloneqq \mathbbm{1}\{Y_i = 0\}$ for group $g$ as if it were the true outcome variable;
    \item Replace $\alpha_g$ in the objective function with $|\alpha_g|$ for each $g$.
\end{enumerate}

\newpage
\setcounter{page}{1}
\setcounter{section}{15}
\setcounter{subsection}{0}

\clearpage
\noindent \begin{center}
{\large{}Online appendix to the paper}
\par\end{center}{\large \par}

\noindent \begin{center}
{\LARGE{} Algorithmic Design: A Fairness-Accuracy Frontier}
\par\end{center}{\LARGE \par}

\medskip{}

\begin{center}
Annie Liang \quad Jay Lu \quad Xiaosheng Mu \quad Kyohei Okumura
\par\end{center}{\large \par}

\noindent \begin{center}
\today
\par\end{center}

\subsection{Different Loss Functions} \label{app:DifferentLoss}

In this section, we generalize Theorem \ref{thm:FullDesignPareto} to cover cases where fairness and accuracy are evaluated using different loss functions.

Assume the set of covariate vectors $\mathcal{X}$ is finite, and let $a: \mathcal{X} \rightarrow \Delta(D)$ describe a generic algorithm and $\mathcal{A}$ denote the
set of all algorithms. As in the main text, there is a loss function $\ell: \mathcal{Y} \times \mathcal{D} \rightarrow \mathbb{R}$ such that each group $g$'s error rate under algorithm $a$ is $e_g = \mathbb{E}_{D \sim a(X)} \left[\ell(D,Y) \mid G=g\right]$. Different from the main text, the 
 \emph{unfairness} of algorithm $a\in\mathcal{A}$ is measured by
$
\left|h\left(a\right)\right|
$
where $h:\mathcal{A}\rightarrow\mathbb{R}_{+}$ is any linear function. This includes as a special case
\begin{equation} \label{eq:ExampleDifferentLoss}
h\left(a\right)=\mathbb{E}_{D\text{\ensuremath{\sim}}a\left(X\right)}\left[\tilde{\ell}\left(D,Y\right)\left|G=r\right.\right]-\mathbb{E}_{D\text{\ensuremath{\sim}}a\left(X\right)}\left[\tilde{\ell}\left(D,Y\right)\left|G=b\right.\right]
\end{equation}
where $\widetilde{\ell}: \mathcal{Y} \times \mathcal{D} \rightarrow \mathbb{R}$ is a ``fairness" loss function. Our previous approach is returned when $h$ takes the formulation in (\ref{eq:ExampleDifferentLoss}) and $\widetilde{\ell}$ is identical to $\ell$. 

For each pair of error rates $e\in\mathcal{E}_X$, we
define 
\[
d\left(e\right):=\min_{e\left(a\right)=e}\left|h\left(a\right)\right|
\]
to be the minimal unfairness that can be achieved using an algorithm that yields error pair $e$. This is well defined since $\left|h\left(\cdot\right)\right|$ is
continuous and the set of algorithms $\{a: e\left(a\right)=e\}$
is compact. 

We now extend the definitions of FA-dominance and the FA frontier.
\begin{definition} Let $>_{FA}$ be the strict order on $\mathcal{E}_X$ where 
$e>_{FA}e^{\prime}$ if $e<e'$
and $d\left(e\right)\leq d\left(e^{\prime}\right)$.
\end{definition}

\begin{definition}
The FA frontier is $\mathcal{F}_X := \{e\in \mathcal{E}_X : \textrm{no } e'\in \mathcal{E}_X \textrm{ such that } e' >_{FA} e \}.$
\end{definition}

\noindent When $h(a)$ has the formulation (\ref{eq:ExampleDifferentLoss}) and the accuracy and fairness loss functions $\tilde{\ell}=\ell$ coincide, then  $d\left(e\right)=\left|e_r-e_b\right|$,
and so these definitions reduce to Definitions \ref{def:Pareto}, \ref{def:FApref} and \ref{def:Frontier}.

Define 
\[\underline{\delta}=\min_{e\in\mathcal{E}}d\left(e\right)\]
to be the minimal level of unfairness that is achievable by a feasible algorithm. For any $\delta\geq \underline{\delta}$, 
\[
\mathcal{E}_{\delta}:=\left\{ e\in\mathcal{E}_X\text{ : }d\left(e\right)\leq\delta\right\} 
\]
is the set of errors achievable by an algorithm whose unfairness is weakly less than $\delta$.

\begin{definition} For each $\delta \geq \underline{\delta}$, define 
$r_{X}\left(\delta\right) = \argmin_{e \in \mathcal{E}_\delta} e_r$  and $b_{X}\left(\delta\right) = \argmin_{e \in \mathcal{E}_\delta} e_b$ to be the group-optimal points within each set $\mathcal{E}_{\delta}$, where we break ties by choosing the point that minimizes the other group's error. Further define 
$R_{X}  =\left(r_{X}\left(\delta\right)\right)_{\delta \geq \underline{\delta}}$ and
$B_{X} =\left(b_{X}\left(\delta\right)\right)_{\delta \geq \underline{\delta}}$ 
to be the set of group-optimal points as we vary over the level of unfairness. 
\end{definition}

\begin{definition} For any convex set $E \subset \mathbb{R}\times \mathbb{R}$, let $\mathcal{P}\left(E\right)$ denote the usual Pareto frontier of $E$, i.e., all points $e \in E$ where no other $e'\in E$ is weakly smaller in each entry and strictly smaller in at least one. 
\end{definition}

\begin{definition} The fairness-optimal set is $F_{X}:=\mathcal{P}\left(\mathcal{E}_{\underline{\delta}}\right)$.
\end{definition}

We now characterize the FA frontier for this general case.

\begin{theorem} \label{thm:gen_frontier}
$\mathcal{F}_X$ is the closed set bounded by
$R_{X}$, $B_{X}$, $\mathcal{P}_{X}$
and $F_{X}$. 
\end{theorem}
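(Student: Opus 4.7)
The plan is to characterize the FA-frontier as the union of ordinary Pareto frontiers of the sublevel sets $\mathcal{E}_\delta$, then identify this union geometrically with the stated region. First I would show that $d: \mathcal{E}_X \to \mathbb{R}_+$ is convex: if $a, a'$ achieve $e, e'$ with $|h(a)| = d(e)$ and $|h(a')| = d(e')$, then $\lambda a + (1-\lambda)a'$ achieves $\lambda e + (1-\lambda) e'$ and $|h(\lambda a + (1-\lambda) a')| \leq \lambda d(e) + (1-\lambda) d(e')$ by linearity of $h$ together with the triangle inequality. Combined with lower semicontinuity of $d$ (from compactness of $\mathcal{A}_X$ and continuity of $|h|$), each $\mathcal{E}_\delta$ is closed and convex, and the family $\{\mathcal{E}_\delta\}_{\delta \geq \underline{\delta}}$ expands monotonically to $\mathcal{E}_X$ as $\delta$ grows.

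Next I would establish the key equivalence: $e \in \mathcal{F}_X$ if and only if $e \in \mathcal{P}(\mathcal{E}_{d(e)})$. Any FA-dominator $e'$ of $e$ satisfies $d(e') \leq d(e)$, so $e' \in \mathcal{E}_{d(e)}$; since $d$ is a function of $(e_r, e_b)$, the pair $(e'_r, e'_b) \neq (e_r, e_b)$, and hence one of $e'_r \leq e_r$ or $e'_b \leq e_b$ is strict, so $e'$ Pareto-dominates $e$ within $\mathcal{E}_{d(e)}$. Conversely, any Pareto-dominator of $e$ in $\mathcal{E}_{d(e)}$ is automatically an FA-dominator. Since a point that is Pareto-optimal in a larger $\mathcal{E}_\delta$ remains so in the smaller $\mathcal{E}_{d(e)}$, this yields
\[
\mathcal{F}_X = \bigcup_{\delta \geq \underline{\delta}} \mathcal{P}(\mathcal{E}_\delta).
\]

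It then remains to identify this union with the bounded region. Each $\mathcal{P}(\mathcal{E}_\delta)$ is the lower-left boundary curve of the closed convex set $\mathcal{E}_\delta$, connecting its leftmost point $r_X(\delta)$ to its bottommost point $b_X(\delta)$. At $\delta = \underline{\delta}$ this curve is $F_X$; for $\delta$ sufficiently large it coincides with $\mathcal{P}(\mathcal{E}_X)$. As $\delta$ varies from $\underline{\delta}$ to $\infty$, the endpoints trace $R_X$ and $B_X$, and monotonicity of $\{\mathcal{E}_\delta\}$ ensures these Pareto frontiers sweep precisely the closed region bounded by $R_X, B_X, \mathcal{P}(\mathcal{E}_X)$, and $F_X$.

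\textbf{The main obstacle} is this last geometric identification. For inclusion of the bounded region in $\mathcal{F}_X$, given $e$ in the region I would take $\delta = d(e)$ and verify Pareto-undomination of $e$ in $\mathcal{E}_{d(e)}$ using the position of $e$ relative to the four boundary curves together with monotonicity of $\{\mathcal{E}_\delta\}$. For the reverse inclusion, points outside the region must be excluded from every $\mathcal{P}(\mathcal{E}_\delta)$: points above-right of $F_X$ are Pareto-dominated within $\mathcal{E}_{\underline{\delta}}$ itself, and points beyond $R_X$ or $B_X$ violate the extremal definitions of $r_X(\cdot), b_X(\cdot)$ via monotonicity. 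A subtle technical issue is the tie-breaking convention in defining $r_X(\delta), b_X(\delta)$, which matters when $\mathcal{P}(\mathcal{E}_\delta)$ contains horizontal or vertical flats; this must be handled so that $R_X$ and $B_X$ form genuine boundary curves without self-intersection or gaps.
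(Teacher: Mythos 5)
Your proposal is correct and follows essentially the same route as the paper: convexity (plus semicontinuity) of $d$, the identity $\mathcal{F}_X=\bigcup_{\delta\ge\underline{\delta}}\mathcal{P}(\mathcal{E}_{\delta})$ proved exactly as you do via the observation that an FA-dominator of $e$ inside $\mathcal{E}_{d(e)}$ must be a Pareto-dominator and vice versa, and then the sweep of the sublevel-set Pareto frontiers between $F_X$ and $\mathcal{P}(\mathcal{E}_X)$. The final geometric identification that you flag as the main obstacle is the one place the paper adds an ingredient you only gesture at---an explicit proof that $\delta\mapsto r_X(\delta)$ and $\delta\mapsto b_X(\delta)$ are continuous (handling the tie-breaking by writing $r_X(\delta)$ as a lexicographic minimum over the polytope $\mathcal{A}_\delta$)---after which the paper's conclusion is no more detailed than your sketch.
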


The property of group-balance generalizes as follows.

\begin{definition}[Generalized Group-Balance] Say that $X$ is \emph{generalized group-balanced} if $F_{X} \subseteq \mathcal{P}_X$.
\end{definition}

That is, $X$ is generalized group-balanced if the fairness-optimal set belongs to the usual Pareto frontier. This reduces to the condition  in the main text when $h$ takes the form given in (\ref{eq:ExampleDifferentLoss}) and $\widetilde{\ell} = \ell$. Several of our previous results extend under this generalization of group-balance. For example, group-balance again identifies when the fairness-accuracy frontier is equivalent to the usual Pareto frontier.

\begin{proposition} \label{prop:gen_gbalance}
If $X$ is generalized group-balanced, then $\mathcal{F}_{X}=\mathcal{P}_X$.
\end{proposition}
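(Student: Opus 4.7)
My plan is to establish both inclusions $\mathcal{P}(\mathcal{E}_X)\subseteq\mathcal{F}_X$ and $\mathcal{F}_X\subseteq\mathcal{P}(\mathcal{E}_X)$. The first is immediate from the definition of FA-dominance: if $e\in\mathcal{P}(\mathcal{E}_X)$, no other feasible error pair $e'$ satisfies $e'_r\leq e_r$ and $e'_b\leq e_b$ with a strict inequality, and since $d$ is a function of the error pair alone, no strict improvement in $d$ can occur without changing $e$. Hence $e\in\mathcal{F}_X$.

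For the reverse inclusion, I would apply Theorem \ref{thm:gen_frontier}, which states that $\mathcal{F}_X$ is the closed set bounded by the four curves $R_X$, $B_X$, $\mathcal{P}(\mathcal{E}_X)$, and $F_X$. Generalized group-balance gives $F_X\subseteq\mathcal{P}(\mathcal{E}_X)$ by assumption, so it suffices to show $R_X\subseteq\mathcal{P}(\mathcal{E}_X)$ and, symmetrically, $B_X\subseteq\mathcal{P}(\mathcal{E}_X)$; together, this would collapse the bounding curves onto $\mathcal{P}(\mathcal{E}_X)$ and force $\mathcal{F}_X\subseteq\mathcal{P}(\mathcal{E}_X)$.

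To show $R_X\subseteq\mathcal{P}(\mathcal{E}_X)$, I would first observe that $d$ is convex on $\mathcal{E}_X$. Since $h$ is linear on the convex polytope $\mathcal{A}_X$, the image $\{(e(a),h(a)):a\in\mathcal{A}_X\}\subseteq\mathbb{R}^3$ is convex, and the slice functions $h^-(e):=\min\{h(a):e(a)=e\}$ and $h^+(e):=\max\{h(a):e(a)=e\}$ are respectively convex and concave in $e$; this yields $d(e)=\max\{0,\,h^-(e),\,-h^+(e)\}$ as a pointwise maximum of convex functions. Consequently each $\mathcal{E}_\delta$ is convex. Now fix $\delta\geq\underline{\delta}$ and set $e=r_X(\delta)$. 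If $e\notin\mathcal{P}(\mathcal{E}_X)$, there exists $e'\in\mathcal{E}_X$ with $e'\leq e$ coordinatewise and strict in at least one coordinate, and necessarily $d(e')>\delta$ (otherwise $e'\in\mathcal{E}_\delta$ would contradict the tie-broken minimization defining $r_X(\delta)$). Let $e^*:=r_X(\underline{\delta})\in F_X\cap\mathcal{P}(\mathcal{E}_X)$, which is the fairness-optimal point with smallest $r$-coordinate and lies on $\mathcal{P}(\mathcal{E}_X)$ by generalized group-balance; monotonicity of the map $\delta\mapsto\mathcal{E}_\delta$ (larger $\delta$ relaxes the constraint) yields $e_r\leq e^*_r$. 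Form the convex combination $e_\lambda=\lambda e'+(1-\lambda)e^*$: convexity of $d$ gives $d(e_\lambda)\leq\lambda d(e')+(1-\lambda)\underline{\delta}\leq\delta$ whenever $\lambda\leq(\delta-\underline{\delta})/(d(e')-\underline{\delta})$, placing $e_\lambda\in\mathcal{E}_\delta$. A careful choice of $\lambda$ in this admissible range then yields either $(e_\lambda)_r<e_r$, contradicting minimality of $e_r$ over $\mathcal{E}_\delta$, or $(e_\lambda)_r=e_r$ with $(e_\lambda)_b<e_b$, contradicting the tie-breaking rule.

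The main obstacle will be this final step: verifying that the interval of admissible $\lambda$ (set by the convexity bound on $d(e_\lambda)$) indeed contains a value at which $e_\lambda$ strictly beats $e$ in the lexicographic tie-broken order. This requires case analysis on whether the Pareto improvement $e'$ strictly reduces the $r$-coordinate or only the $b$-coordinate, together with careful tracking of the relative positions of $e'$, $e$, and $e^*$—in particular, exploiting the inequality $e_r\leq e^*_r$ to control how the combination shifts the $r$-coordinate. The argument for $B_X\subseteq\mathcal{P}(\mathcal{E}_X)$ is symmetric, using $b_X(\underline{\delta})$ in place of $r_X(\underline{\delta})$.
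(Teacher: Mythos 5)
Your first inclusion $\mathcal{P}(\mathcal{E}_X)\subseteq\mathcal{F}_X$ is correct, and reducing the reverse inclusion via Theorem \ref{thm:gen_frontier} to showing $R_X, B_X\subseteq\mathcal{P}(\mathcal{E}_X)$ is a reasonable strategy; the target claim is even true. But your interpolation argument for $R_X\subseteq\mathcal{P}(\mathcal{E}_X)$ has a genuine gap, exactly at the step you flag as the main obstacle, and I do not see how to close it along the lines you sketch. The two constraints you place on $\lambda$ pull in opposite directions and their intersection can be empty. The convexity bound $d(e_\lambda)\le\lambda d(e')+(1-\lambda)\underline{\delta}$ only certifies $e_\lambda\in\mathcal{E}_\delta$ for $\lambda\le(\delta-\underline{\delta})/(d(e')-\underline{\delta})$, an upper bound that can be arbitrarily small since nothing controls $d(e')$ from above. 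Meanwhile the anchor $e^*=r_X(\underline{\delta})$ satisfies $e^*_r\ge e_r$ --- this is forced, since $e^*\in\mathcal{E}_{\underline{\delta}}\subseteq\mathcal{E}_\delta$ and $e=r_X(\delta)$ minimizes the $r$-coordinate over $\mathcal{E}_\delta$ --- and typically strictly so when $\delta>\underline{\delta}$. Hence $(e_\lambda)_r=\lambda e'_r+(1-\lambda)e^*_r<e_r$ requires $\lambda>(e^*_r-e_r)/(e^*_r-e'_r)$, a strictly positive lower bound. For instance, with $\underline{\delta}=0$, $\delta=1$, $d(e')=100$, $e'_r=4$, $e_r=5$, $e^*_r=10$, your admissible range is $\lambda\le 1/100$ while the contradiction requires $\lambda>5/6$. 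Mixing toward $e^*$ moves the $r$-coordinate the wrong way, and convexity of $d$ only lets you mix a vanishing amount toward $e'$. The tie-breaking branch suffers the same problem in the $b$-coordinate, since $e^*_b\ge e_b$ whenever $e^*_r=e_r$.

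The paper closes this step by a different mechanism: supporting hyperplanes in the weight space rather than interpolation in the error space. Every point of $\mathcal{P}(\mathcal{E}_\delta)$ minimizes some weighted sum $\lambda e_r+(1-\lambda)e_b$ over $\mathcal{A}_\delta$. The paper shows that if $\mathcal{P}(\mathcal{E}_{\delta_1})\subseteq\mathcal{P}(\mathcal{E}_X)$ and $\delta_1\le\delta_2$, then the $\mathcal{A}_{\delta_2}$-minimizer at weight $\lambda$ is sandwiched in objective value between the $\mathcal{A}_{\delta_1}$-minimizer (which by hypothesis is an unconstrained minimizer at some weight $\lambda_1$) and the unconstrained minimizer at weight $\lambda$; a continuity/intermediate-value argument in the weight then produces some $\lambda_2$ between $\lambda_1$ and $\lambda$ at which the $\mathcal{A}_{\delta_2}$-minimizer is itself an unconstrained minimizer, hence lies on $\mathcal{P}(\mathcal{E}_X)$. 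Taking $\delta_1=\underline{\delta}$, where generalized group-balance supplies the hypothesis, gives $\mathcal{P}(\mathcal{E}_\delta)\subseteq\mathcal{P}(\mathcal{E}_X)$ for all $\delta$, and combined with $\mathcal{F}_X=\bigcup_{\delta}\mathcal{P}(\mathcal{E}_\delta)$ this yields the proposition. To repair your route you would need some device of this kind --- a way of certifying membership in $\mathcal{E}_\delta$ for mixtures placing non-vanishing weight on the Pareto-improving point --- and the convexity bound on $d$ alone does not provide one.
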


The related Corollary \ref{corr:Tradeoff} also extends.

\begin{corollary}  $X$ fails generalized group-balance if and only if there are points  $e,e'\in \mathcal{F}_X$ such that $e<e'$ but $d(e)>d(e')$.
\end{corollary}

In some cases, generalized group-balance reduces further. One such case is when $X \in \{0,1\}$ is binary, and $h$ follows the formulation in (\ref{eq:ExampleDifferentLoss}) where the fairness loss function $\ell(d,y) = \mathbbm{1}(d=0)$ is an indicator for whether the decision is equal to 0 (e.g., not getting hired); in this case, fairness is measured as the absolute difference in the conditional probability of being assigned $d=0$ given membership in either group. Let the accuracy loss function $\ell(d,y) = \mathbbm{1}(d\neq y)$ be the standard misclassification loss. 
For each $g \in \{r,b\}$ define
\[a_{g}^x := \mathbbm{1}\left[\mathbb{P}(Y=1 \mid X=x,G=g)\geq 1/2\right]\]
to be the optimal action for group $g$ given signal realization $x$, breaking ties in favor of $d=1$. Then generalized group-balance reduces to the following easily checkable condition.

\begin{claim} \label{claim:BinaryX} 
$X$ fails generalized group-balance if and only if $a_{r0}=a_{b0}$ and $a_{r1} = a_{b1}$ and these values are distinct---that is, the optimal action is the same for both groups given either covariate realization, and this common optimal action differs across covariate realizations. 
\end{claim}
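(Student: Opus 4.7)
The plan is to parameterize algorithms for binary $X$ and read off the fairness-optimal set $F_X$ and the feasible set $\mathcal{E}_X$ in closed form. Parameterizing each algorithm by $(a_0,a_1)\in[0,1]^2$ with $a_x=\mathbb{P}(D=1\mid X=x)$, a direct computation of $h(a)=\mathbb{P}(D=0\mid G=r)-\mathbb{P}(D=0\mid G=b)$ yields
\[
h(a)=q\,(a_0-a_1),\qquad q:=\mathbb{P}(X=1\mid G=r)-\mathbb{P}(X=1\mid G=b),
\]
so $|h(a)|=|q|\cdot|a_0-a_1|$. When $q=0$ every algorithm is fair and the claim is vacuous, so assume $q\neq 0$; the fair algorithms are then precisely the constant ones $(a_0=a_1=\alpha)$, and since the error map is affine their image $L$ in $(e_r,e_b)$-space is a line segment, giving $\mathcal{E}_{\underline{\delta}}=L$ and $F_X=\mathcal{P}(L)$. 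By Lemma \ref{lemm:ConvexPolygon}, $\mathcal{E}_X$ is the Minkowski sum of the two weighted segments $E(0),E(1)$, hence a (possibly degenerate) parallelogram whose four vertices correspond to the deterministic algorithms; in particular $r_X$ and $b_X$ are the vertices $(a_{r0},a_{r1})$ and $(a_{b0},a_{b1})$ in this parameterization.

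For the ``$\Leftarrow$'' direction, suppose $a_{r0}=a_{b0}$, $a_{r1}=a_{b1}$, and $a_{r0}\neq a_{r1}$. Then $a^{\star}(x):=a_{rx}=a_{bx}$ is a non-constant deterministic algorithm that is pointwise-optimal for both groups at every realization of $X$, so $r_X=b_X=e(a^\star)$. Any constant algorithm $\alpha$ uses a pointwise-suboptimal action at exactly one of $X=0,X=1$ (because $a^\star_0\neq a^\star_1$), and by the common-optimum hypothesis this suboptimal choice is simultaneously worse for \emph{both} groups, so $e_g(\alpha)>e_g(a^\star)$ for $g\in\{r,b\}$. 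Hence $e(a^\star)$ strictly Pareto-dominates every point of $L$, so $F_X\subseteq L$ is disjoint from $\mathcal{P}(\mathcal{E}_X)$ and generalized group-balance fails.

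For the ``$\Rightarrow$'' direction I split into two cases. If all four $a_{gx}$ are equal to a common constant $c$, then the constant algorithm $\alpha=c$ is jointly optimal and equals $r_X=b_X$; the error map sends the interval of constants to a line segment along which both $\partial e_g/\partial\alpha$ have the same sign, so $L$ has positive slope in $(e_r,e_b)$ and $\mathcal{P}(L)=\{r_X=b_X\}\subseteq\mathcal{P}(\mathcal{E}_X)$. If instead the two groups disagree at some realization of $X$, say $a_{r0}\neq a_{b0}$ (the case $a_{r1}\neq a_{b1}$ is symmetric), the parallelogram geometry forces $r_X$ and $b_X$ to be adjacent vertices that share the edge $\{(a_0,a_{r1}):a_0\in[0,1]\}$, which is the lower boundary of $\mathcal{E}_X$; one then identifies the constant algorithm sitting on this edge and verifies that it captures $\mathcal{P}(L)$, yielding $F_X\subseteq\mathcal{P}(\mathcal{E}_X)$.

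The main obstacle is the final sub-case of the ``$\Rightarrow$'' direction, where the fair line $L$ and the Pareto edge of the parallelogram generically have different slopes and share at most one common point. Handling it requires a careful sign analysis relating $q$ to the marginal probabilities $\mathbb{P}(Y=1\mid G)$ for each group in order to pin down which endpoint(s) of $L$ are Pareto-optimal among constants, and then leveraging the parallelogram structure to check that those endpoints coincide with points on the lower boundary of $\mathcal{E}_X$.
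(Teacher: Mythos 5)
Your overall route is the same as the paper's: parameterize algorithms by $(a_0,a_1)$, observe that the feasible set is the parallelogram with the four deterministic vertices, identify $F_X$ with $\mathcal{P}(L)$ where $L$ is the image of the constant algorithms (the segment joining $e(0,0)$ to $e(1,1)$), and case-split on the configuration of $r_X$ and $b_X$. Your ``$\Leftarrow$'' direction and the ``all four actions agree'' sub-case of ``$\Rightarrow$'' are sound and match the paper (your strict-Pareto-domination argument for ``$\Leftarrow$'' is, if anything, more robust than the paper's purely pictorial one, though like the paper it silently assumes away ties $\mathbb{P}(Y=1\mid X=x,G=g)=1/2$ and the case $q=0$, which is not vacuous: there $F_X=\mathcal{P}(\mathcal{E}_X)$ automatically, so the ``if'' direction can fail). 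The genuine gap is the remaining sub-case of ``$\Rightarrow$'', which you explicitly defer. Two problems there. First, your case split is incomplete: if the groups disagree at \emph{both} realizations, then $r_X=(a_{r0},a_{r1})$ and $b_X=(1-a_{r0},1-a_{r1})$ are \emph{opposite} corners of the parallelogram, not adjacent ones, so the shared edge $\{(a_0,a_{r1}):a_0\in[0,1]\}$ does not exist and the Pareto frontier is a two-edge path through a third vertex. Second, the paper closes this case not with a sign analysis of $q$ but with a combinatorial observation: since $\{e(0,0),e(1,1)\}$ and $\{e(0,1),e(1,0)\}$ are the two pairs of opposite corners, the boundary path from $r_X$ to $b_X$ must contain at least one endpoint of $L$, and the paper asserts $F_X$ is exactly that intersection point.

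You should also know that the step you defer is not merely fiddly---it fails outright in one configuration that the paper's own proof (and figures) overlooks. If one group's pointwise-optimal action is constantly $0$ and the other's is constantly $1$, then $r_X=e(0,0)$ and $b_X=e(1,1)$ are themselves the two endpoints of $L$, so $L$ is the negatively sloped diagonal of the parallelogram, $\mathcal{P}(L)=L=F_X$ is that entire diagonal, while $\mathcal{P}(\mathcal{E}_X)$ is the two-edge path through $e(0,1)$ or $e(1,0)$ lying strictly below it; hence $F_X\not\subseteq\mathcal{P}(\mathcal{E}_X)$ even though $a_{r0}\neq a_{b0}$, violating the ``only if'' direction of the claim. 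A concrete instance: $p_r=p_b=1/2$, $\mathbb{P}(X=1\mid G=r)=0.5$, $\mathbb{P}(X=1\mid G=b)=0.6$, $\mathbb{P}(Y=1\mid X=0,G=r)=0.1$, $\mathbb{P}(Y=1\mid X=1,G=r)=0.4$, $\mathbb{P}(Y=1\mid X=0,G=b)=0.6$, $\mathbb{P}(Y=1\mid X=1,G=b)=0.9$; here the midpoint $(0.5,0.5)$ of $L$ lies in $F_X$ but is strictly dominated by $e(0,1)=(0.35,0.30)$. So the ``careful sign analysis'' you propose cannot succeed as stated; either this configuration must be excluded from the claim or the claim's condition must be weakened, and any completed proof has to confront it explicitly.
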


The proof of Claim \ref{claim:BinaryX} and all other results mentioned in this section are contained below.

\subsubsection{Proofs of Theorem \ref{thm:gen_frontier} and Proposition \ref{prop:gen_gbalance}.} To save on notation we suppress dependence on $X$ in what follows, using $\mathcal{F}$ for the fairness-accuracy frontier, $\mathcal{E}$ for the feasible set and $\mathcal{A}$ for the set of algorithms. We first show that the fairness-accuracy  frontier is the union of the Pareto frontiers of the unfairness sublevel sets.

\begin{lemma} \label{lem:dcont}
$d\left(\cdot\right)$ is continuous and convex.
\end{lemma}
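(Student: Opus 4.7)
\emph{Plan.} For convexity, I will argue directly from the linearity of $e(\cdot)$ and $h(\cdot)$ in $a$. Fix $e,e'\in\mathcal{E}$ and $\lambda\in[0,1]$. Since $\{a\in\mathcal{A} : e(a)=e\}$ is a closed (hence compact) subset of the compact set $\mathcal{A}$ and $|h|$ is continuous, minimizers $a,a'$ with $d(e)=|h(a)|$ and $d(e')=|h(a')|$ exist. Let $a_\lambda=\lambda a+(1-\lambda)a'$, which lies in $\mathcal{A}$ by convexity. By linearity, $e(a_\lambda)=\lambda e+(1-\lambda)e'$ and $h(a_\lambda)=\lambda h(a)+(1-\lambda)h(a')$, so by the triangle inequality
\[
d\bigl(\lambda e+(1-\lambda)e'\bigr)\leq |h(a_\lambda)|\leq \lambda|h(a)|+(1-\lambda)|h(a')|=\lambda d(e)+(1-\lambda)d(e'),
\]
which gives convexity.

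For continuity, I would represent $d$ using the extreme values of $h$ along each fiber. Define
\[
\underline{h}(e):=\min\{h(a):a\in\mathcal{A},\,e(a)=e\},\qquad \overline{h}(e):=\max\{h(a):a\in\mathcal{A},\,e(a)=e\}.
\]
Since $h$ is linear and the constraints $a\in\mathcal{A}$ and $e(a)=e$ are (jointly) linear in $a$, both $\underline{h}$ and $\overline{h}$ are value functions of parametric linear programs with right-hand side $e$. Standard parametric LP theory then gives that $\underline{h}$ is convex and piecewise linear, and $\overline{h}$ is concave and piecewise linear, on the feasible set $\mathcal{E}$ (which is precisely the projection of $\mathcal{A}$ under $e(\cdot)$). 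In particular, both are continuous on all of $\mathcal{E}$, including its boundary. Since the fiber $\{h(a):e(a)=e\}$ is the closed interval $[\underline{h}(e),\overline{h}(e)]$, a direct case split on the sign of the endpoints yields
\[
d(e)=\max\bigl\{\underline{h}(e),\,-\overline{h}(e),\,0\bigr\},
\]
and as a maximum of continuous (and convex) functions, $d$ is continuous (and convex, reproving the first part).

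The main obstacle is the continuity claim at boundary points of $\mathcal{E}$, where convexity alone does not suffice. The parametric LP viewpoint handles this: because $\mathcal{A}$ is a polytope, the graph $\{(e,h(a),a):e(a)=e,\,a\in\mathcal{A}\}$ is polyhedral, and its projection onto the $e$--coordinate yields piecewise-linear (hence continuous) value functions $\underline{h}$ and $\overline{h}$ on $\mathcal{E}$. If desired, an alternative route is to verify lower hemi-continuity of the fiber correspondence $e\mapsto\{a:e(a)=e\}$ via a Hoffman-type linear regularity bound and then invoke Berge's maximum theorem, but the polyhedral representation above is the most direct.
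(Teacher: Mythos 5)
Your proof is correct. The convexity half is essentially identical to the paper's: both pick minimizers on the two fibers, form the convex combination of algorithms, and use linearity of $e(\cdot)$ and $h(\cdot)$ plus the triangle inequality. The continuity half is where you genuinely diverge. The paper proves that the fiber correspondence $\varphi(e)=\{a\in\mathcal{A}:e(a)=e\}$ is compact-valued and continuous (upper hemicontinuity is routine; lower hemicontinuity is argued by projecting a fixed $a\in\varphi(e)$ onto each $\varphi(e^k)$ and deriving a contradiction from a uniform separation) and then invokes Berge's maximum theorem. You instead exploit the polyhedral structure: writing $d(e)=\max\{\underline{h}(e),-\overline{h}(e),0\}$, where $\underline{h}$ and $\overline{h}$ are the optimal values of parametric linear programs over the polytope $\{(a,e):a\in\mathcal{A},\,e(a)=e\}$, and using that polyhedral value functions are piecewise linear and continuous on the entire (closed) feasible region $\mathcal{E}$, not just its relative interior. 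Your identification of the fiber image as the interval $[\underline{h}(e),\overline{h}(e)]$ and the case split on the signs of its endpoints are both sound. What your route buys is a sharper conclusion ($d$ is in fact piecewise linear, and convexity drops out a second time for free) and it sidesteps the lower-hemicontinuity verification, which is the most delicate step in the paper's argument (the paper's assertion that $\varphi(e^k)$ is a ``linear subspace'' is loose---it is an affine slice of the cube---though uniqueness of the nearest point only needs convexity, and the final separation step is left somewhat implicit). What the paper's route buys is independence from the polyhedral representation: the Berge argument would survive if $\mathcal{A}$ were merely compact and convex and $h$ merely continuous, whereas your parametric-LP theory genuinely uses linearity of $h$ and finiteness of $\mathcal{X}$. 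Both of those hypotheses hold in the paper's setting, so your proof is a valid, and arguably tighter, substitute.
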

\begin{proof}
We first show convexity. Consider $e_{1},e_{2}\in\mathcal{E}_X$
and let $a_{i}$ be the algorithm that minimizes unfairness among all algorithms yielding error pair $e_{i}$; that
is, $d\left(e_{i}\right)=\left|h\left(a_{i}\right)\right|$ and $e\left(a_{i}\right)=e_{i}$
for $i\in\left\{ 1,2\right\} $. Since $e\left(\cdot\right)$ and
$h\left(\cdot\right)$ are linear,
\begin{alignat*}{1}
d\left(\lambda e_{1}+\left(1-\lambda\right)e_{2}\right) & =d\left(\lambda e\left(a_{1}\right)+\left(1-\lambda\right)e\left(a_{2}\right)\right)=d\left(e\left(\lambda a_{1}+\left(1-\lambda\right)a_{2}\right)\right)\\
 & \leq\left|h\left(\lambda a_{1}+\left(1-\lambda\right)a_{2}\right)\right|=\left|\lambda h\left(a_{1}\right)+\left(1-\lambda\right)h\left(a_{2}\right)\right|\\
 & \leq\lambda\left|h\left(a_{1}\right)\right|+\left(1-\lambda\right)\left|h\left(a_{2}\right)\right|\\
 & =\lambda d\left(e_{1}\right)+\left(1-\lambda\right)d\left(e_{2}\right)
\end{alignat*}
as desired. 

We now show continuity. Consider the correspondence $\varphi:\mathcal{E}\rightrightarrows\mathcal{A}$
where
\[
\varphi\left(e\right):=\left\{ a\in\mathcal{A}\text{ : }e\left(a\right)=e\right\} 
\]
Note this is compact-valued. We will show $\varphi$ is continuous.
To show upper hemicontinuity, consider sequences $e^{k}\rightarrow e$ and $a^{k}\rightarrow a$
where each $a^{k}\in\varphi\left(e^{k}\right)$. Then by definition of $\varphi$, each  $e\left(a^{k}\right)=e^{k}$, 
which further implies $e\left(a\right)=e$ as $e\left(\cdot\right)$ is continuous.
Thus, $a\in\varphi\left(e\right)$ proving upper hemicontinuity. 

We now show lower hemicontinuity. Consider $e^{k}\rightarrow e$ and
some $a\in\varphi\left(e\right)$. For each $e^{k}$, let $a^{k}\in\varphi\left(e^{k}\right)$
be the closest point in $\varphi\left(e^{k}\right)$ to $a$. Since $\varphi\left(e^{k}\right)$ is a linear subspace, $a^{k}$
is unique and well-defined. We will show that $\left|a^{k}-a\right|\rightarrow0$.
Suppose otherwise, in which case we can find some $n$ and $\varepsilon>0$ such
that for all $k>n$, $\left|a^{\prime}-a\right|\geq\varepsilon$ for
all $a^{\prime}\in\varphi\left(e^{k}\right)$. But that means $\varphi\left(e\right)$
is also strictly separated from $a$ yielding a contradiction. This
proves $\varphi$ is also lower hemicontinuous and thus continuous.
Since $h\left(\cdot\right)$ is continuous, by the maximum theorem,
$d\left(\cdot\right)$ is continuous. 
\end{proof}

\begin{lemma}
$\mathcal{E}_{\delta}$ is closed and convex.
\end{lemma}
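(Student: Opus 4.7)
The plan is to deduce both closedness and convexity as immediate consequences of the preceding lemma (continuity and convexity of $d$) together with Lemma \ref{lemm:ConvexPolygon}, which already gives that the ambient feasible set $\mathcal{E}_X$ is closed and convex. The key observation is the tautological representation
\[
\mathcal{E}_{\delta} \;=\; \mathcal{E}_X \;\cap\; d^{-1}\bigl((-\infty,\delta]\bigr),
\]
so the statement reduces to showing that each of the two intersected sets is closed and convex.

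First I would handle closedness. Since $d:\mathcal{E}_X\to\mathbb{R}_+$ is continuous by the previous lemma, the sublevel set $d^{-1}((-\infty,\delta])$ is closed in $\mathcal{E}_X$. Combined with closedness of $\mathcal{E}_X$ itself (Lemma \ref{lemm:ConvexPolygon}), this makes $\mathcal{E}_\delta$ a closed subset of $\mathbb{R}^2$.

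Next I would handle convexity. Given $e_1,e_2 \in \mathcal{E}_\delta$ and $\lambda \in [0,1]$, the point $\lambda e_1 + (1-\lambda)e_2$ lies in $\mathcal{E}_X$ by convexity of $\mathcal{E}_X$, and satisfies
\[
d\bigl(\lambda e_1 + (1-\lambda)e_2\bigr) \;\leq\; \lambda\, d(e_1) + (1-\lambda)\, d(e_2) \;\leq\; \delta
\]
by the convexity half of the previous lemma. Hence $\lambda e_1 + (1-\lambda)e_2 \in \mathcal{E}_\delta$, completing the proof.

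There is no real obstacle here; the content has already been done in Lemma \ref{lem:dcont}, and this lemma is essentially a bookkeeping corollary packaging those two properties into statements about the sublevel sets that the later arguments (characterizing the group-optimal curves $R_X,B_X$ and the fairness-optimal set $F_X$) will use.
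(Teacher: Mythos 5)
Your proof is correct and follows exactly the paper's route: the paper dispatches this lemma in one line as ``immediate from the fact that $d(\cdot)$ is convex and continuous,'' and your write-up simply spells out that same deduction (sublevel sets of a continuous function on the closed set $\mathcal{E}_X$ are closed, and convexity of $d$ plus convexity of $\mathcal{E}_X$ gives convexity of the sublevel set). No gaps.
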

\begin{proof}
Immediate from the fact that $d\left(\cdot\right)$ is convex and continuous (Lemma \ref{lem:dcont}).
\end{proof}
\begin{lemma}
$\mathcal{F}=\bigcup_{\delta\geq0}\mathcal{P}\left(\mathcal{E}_{\delta}\right)$
\end{lemma}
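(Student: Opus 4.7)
The plan is to establish both inclusions directly from the definitions, using the fact (crucial but easy to overlook) that $d(\cdot)$ is a well-defined function on $\mathcal{E}$, so two distinct entries of the triple $(e_r,e_b,d(e))$ determine the third as soon as $(e_r,e_b)$ is fixed.

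For the inclusion $\mathcal{F}\subseteq\bigcup_{\delta\ge 0}\mathcal{P}(\mathcal{E}_{\delta})$, take any $e\in\mathcal{F}$ and set $\delta:=d(e)$, so that $e\in\mathcal{E}_{\delta}$. I would argue by contradiction: if $e\notin\mathcal{P}(\mathcal{E}_{\delta})$, then there exists $e'\in\mathcal{E}_{\delta}$ with $e'_r\le e_r$ and $e'_b\le e_b$, at least one strict. Since $e'\in\mathcal{E}_{\delta}$ we also have $d(e')\le\delta=d(e)$, and therefore $e'>_{FA}e$, contradicting $e\in\mathcal{F}$.

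For the reverse inclusion $\bigcup_{\delta\ge 0}\mathcal{P}(\mathcal{E}_{\delta})\subseteq\mathcal{F}$, take any $e\in\mathcal{P}(\mathcal{E}_{\delta})$ for some $\delta\ge\underline{\delta}$, and suppose for contradiction that some $e'\in\mathcal{E}$ satisfies $e'>_{FA}e$. Then $d(e')\le d(e)\le\delta$, so $e'\in\mathcal{E}_{\delta}$, and also $e'_r\le e_r$, $e'_b\le e_b$. The subtle case is when both coordinate inequalities are equalities, so that the strictness in $>_{FA}$ comes only from $d(e')<d(e)$. But because $d$ is a function of the error pair $(e_r,e_b)$, the equality $(e'_r,e'_b)=(e_r,e_b)$ forces $d(e')=d(e)$, ruling this case out. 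Hence at least one of the coordinate inequalities is strict, so $e'$ Pareto-dominates $e$ inside $\mathcal{E}_{\delta}$, contradicting $e\in\mathcal{P}(\mathcal{E}_{\delta})$.

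The only place any real care is required is the last step of the second inclusion, namely ensuring that FA-strictness cannot hide inside the $d$-coordinate alone; once one observes that $d$ is single-valued on $\mathcal{E}$, the argument reduces to a short chase through the definitions and nothing further is needed.
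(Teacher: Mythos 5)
Your proof is correct and follows essentially the same route as the paper's: both directions are definition-chases by contradiction, with the forward inclusion using $\delta=d(e)$ and the reverse inclusion reducing FA-dominance within $\mathcal{E}_\delta$ to Pareto dominance. The one point you flag — that strictness cannot live solely in the $d$-coordinate because $d$ is single-valued on error pairs — is exactly the (implicit) step the paper uses when it concludes $e'=e$ yields a contradiction, and your version makes it more explicit.
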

\begin{proof}
First, suppose $e\in\mathcal{P}\left(\mathcal{E}_{\delta}\right)$
for some $\delta\geq0$. Suppose $e\not\in\mathcal{F}$ so there exists
some $e^{\prime}\in\mathcal{E}$ that FA-dominates $e$. Thus, $d\left(e^{\prime}\right)\leq d\left(e\right)$
so $e^{\prime}\in\mathcal{E}_{\delta}$. Note that if $e_{g}^{\prime}<e_{g}$
for some group $g$, then this contradicts $e\in\mathcal{P}\left(\mathcal{E}_{\delta}\right)$.
Thus, it must be that $e_{g}^{\prime}=e_{g}$ for both groups $g$
so $e^{\prime}=e$ yielding a contradiction.

Now, let $e\in\mathcal{F}$ and consider $\delta=d\left(e\right)$.
Clearly, $e\in\mathcal{E}_{\delta}$. Note that if $e\not\in\mathcal{P}\left(\mathcal{E}_{\delta}\right)$,
then there exists another $e^{\prime}\in\mathcal{E}_{\delta}$ that
Pareto dominates $e$. But since $d\left(e^{\prime}\right)=d\left(e\right)$,
$e^{\prime}$ also FA-dominates $e$ yielding a contradiction.
\end{proof}

\noindent \textbf{Completion of the proof of Theorem \ref{thm:gen_frontier}}. We will first prove $r_{X}\left(\cdot\right)$ is continuous. First,
let
\begin{alignat*}{1}
\mathcal{A}_{\delta} & :=\left\{ a\in\mathcal{A}\text{ : }\left|h\left(a\right)\right|\leq\delta\right\} 
\end{alignat*}
and note that
\[
\mathcal{A}_{\delta}=\left\{ a\in\mathcal{A}\text{ : }-\delta\leq h\left(a\right)\leq\delta\right\} 
\]
Since $h\left(\cdot\right)$ is linear, this is just a polytope in
$A=\left[0,1\right]^{\mathcal{X}}$. 

Fix some $\delta$ and define
\begin{alignat*}{1}
e_{r}^{\ast} & :=\min_{a\in\mathcal{A}_{\delta}}e_{r}\left(a\right)\\
e_{b}^{\ast} & :=\min_{a^{\prime}\in\arg\min_{a\in\mathcal{A}_{\delta}}e_{r}\left(a\right)}e_{b}\left(a^{\prime}\right)
\end{alignat*}
We will show that $e^{\ast}=r_{X}\left(\delta\right)$. First, let
$a^{\ast}$ be the corresponding algorithm for $e^{\ast}$ so $\left|h\left(a^{\ast}\right)\right|\leq\delta$.
This implies that $d\left(e^{\ast}\right)\leq\delta$ so $e^{\ast}\in\mathcal{E}_{\delta}$.
Thus, if we let $e=r_{X}\left(\delta\right)$, then $e_{r}\leq e_{r}^{\ast}$.
Suppose the inequality is strict. That means we can find some algorithm
$a^{\ast\ast}$ such that $e_{r}\left(a^{\ast\ast}\right)<e_{r}\left(a^{\ast}\right)$
and $\left|h\left(a^{\ast\ast}\right)\right|\leq\delta$. But that
implies $a^{\ast\ast}\in\mathcal{A}_{\delta}$ contradicting the definition
of $e_{r}^{\ast}$ so it must be $e_{r}=e_{r}^{\ast}$. This implies
that $e_{b}\leq e_{b}^{\ast}$. Suppose the inequaility is strict,
so again we can find some algorithm $a^{\ast\ast}$ such that $e_{r}\left(a^{\ast\ast}\right)=e_{r}\left(a^{\ast}\right)$,
$e_{b}\left(a^{\ast\ast}\right)<e_{b}\left(a^{\ast}\right)$ and $\left|h\left(a^{\ast\ast}\right)\right|\leq\delta$.
This contradicts the definition of $e_{b}^{\ast}$ so it must be that
$e=e^{\ast}$. We can thus write
\[
r_{X}\left(\delta\right)=\left(\min_{a\in\mathcal{A}_{\delta}}e_{r}\left(a\right),\min_{a^{\prime}\in\arg\min_{a\in\mathcal{A}_{\delta}}e_{r}\left(a\right)}e_{b}\left(a^{\prime}\right)\right)
\]
Continuity follows from the fact that $e_{r}\left(\cdot\right)$,
$e_{b}\left(\cdot\right)$ and $h\left(\cdot\right)$ are all linear.
That $b_{X}\left(\delta\right)$ is continuous follows symmetrically.
Since $\mathcal{F}=\bigcup_{\delta\geq0}\mathcal{P}\left(\mathcal{E}_{\delta}\right)$
and $\mathcal{P}\left(\mathcal{E}_{\delta}\right)$ is characterized
by $r_{X}\left(\delta\right)$ and $b_{X}\left(\delta\right)$, the
result follows.\\

\noindent \textbf{Completion of the proof of Proposition \ref{prop:gen_gbalance}}.
Recall 
\[
\mathcal{A}_{\delta}=\left\{ a\in A\text{ : }-\delta\leq h\left(a\right)\leq\delta\right\} 
\]
Now, for $\lambda\in\left(0,1\right)$, define
\[
e_{\lambda}:=\lambda e_{r}+\left(1-\lambda\right)e_{b}
\]
and

\[
\mathcal{A}_{\delta}^{\ast}\left(\lambda\right):=\arg\min_{a\in\mathcal{A}_{\delta}}e_{\lambda}\left(a\right)
\]
Define $\mathcal{A}_{\delta}^{\ast}\left(0\right)$ and $\mathcal{A}_{\delta}^{\ast}\left(1\right)$
similarily but with tie-breaking. For large enough $\delta$ where
$\mathcal{A}_{\delta}=\mathcal{A}$, we can just let $\mathcal{A}^{\ast}=\mathcal{A}_{\delta}^{\ast}$.
It is straightforward to show that
\[
\mathcal{P}\left(\mathcal{E}_{\delta}\right)=\bigcup_{\lambda\in\left[0,1\right]}\left\{ e\left(a\right)\text{ : }a\in\mathcal{A}_{\delta}^{\ast}\left(\lambda\right)\right\} 
\]

We will now prove that if $\mathcal{P}\left(\mathcal{E}_{\delta_{1}}\right)\subset\mathcal{P}\left(\mathcal{E}\right)$
and $\delta_{1}\leq\delta_{2}$, then $\mathcal{P}\left(\mathcal{E}_{\delta_{2}}\right)\subset\mathcal{P}\left(\mathcal{E}\right)$.
Consider some $e\in\mathcal{P}\left(\mathcal{E}_{\delta_{2}}\right)$
so we can find some $\lambda\in\left[0,1\right]$ such that $a_{2}\in\mathcal{A}_{\delta_{2}}^{\ast}\left(\lambda\right)$
and $e=e\left(a_{2}\right)$. Let $a_{1}\in\mathcal{A}_{\delta_{1}}^{\ast}\left(\lambda\right)$
and $\bar{a}\in\mathcal{A}^{\ast}\left(\lambda\right)$. Since $\mathcal{A}_{\delta}$
is increasing in $\delta$, it must be that
\[
e_{\lambda}\left(a_{1}\right)\leq e_{\lambda}\left(a_{2}\right)\leq e_{\lambda}\left(\bar{a}\right)
\]
Now, since $e\left(a_{1}\right)\in\mathcal{P}\left(\mathcal{E}_{\delta_{1}}\right)\subset\mathcal{P}\left(\mathcal{E}\right)$,
there must exist some $a_{1}^{\prime}\in\mathcal{A}^{\ast}\left(\lambda_{1}\right)$
for some $\lambda_{1}\in\left[0,1\right]$ such that $e\left(a_{1}^{\prime}\right)=e\left(a_{1}\right)$.
That implies that
\[
e_{\lambda_{1}}\left(a_{1}\right)=e_{\lambda_{1}}\left(a_{1}^{\prime}\right)\leq e_{\lambda_{1}}\left(a\right)
\]
for all $a\in\mathcal{A}$ so $a_{1}\in\mathcal{A}^{\ast}\left(\lambda_{1}\right)$.
Note that $a_{1}\in\mathcal{A}^{\ast}\left(\lambda_{1}\right)$ and
$\bar{a}\in\mathcal{A}^{\ast}\left(\lambda\right)$ are on the boundary
of $\mathcal{A}$. Since $a_{2}$ is also on the boundary of $\mathcal{A}$,
by continuity, we can find some $\lambda_{2}$ between $\lambda_{1}$
and $\lambda$ such that $e\left(a_{2}\right)\in\mathcal{A}_{\delta_{2}}^{\ast}\left(\lambda_{2}\right)$.
This implies $e\left(a_{2}\right)\in\mathcal{P}\left(\mathcal{E}\right)$
as desired. 

\subsection{Proof of Claim \ref{claim:BinaryX}}

Since $X$ is binary-valued, each algorithm can be identified with a pair $(p_0,p_1)$ denoting the respective probabilities with which $X=0$ and $X=1$ are mapped into $d=1$. From the proof of Lemma \ref{lemm:ConvexPolygon}, we know that the feasible set is a polygon whose vertices are the error rates derived from the deterministic algorithms $(0,0)$, $(0,1)$, $(1,0)$, and $(1,1)$. Moreover, 
\begin{align*}
    \vert \mathbb{E}(d=1 \mid G=r) - \mathbb{E}(d=1 \mid G=b) \vert & = \vert (\alpha_r p_0 + (1-\alpha_r) p_1) - (\alpha_b p_0 + (1-\alpha_b) p_1 \vert \\
    & = \vert (\alpha_r - \alpha_b)(p_0 - p_1)) \vert
\end{align*}
where $\alpha_g := \mathbb{P}(X=0 \mid G=g)$. So unfairness is minimized (and achieves the value zero) by setting $p_0=p_1$. Thus $F_{X}$ is the Pareto set of the line from the $(1,1)$ vertex to the $(0,0)$ vertex of the polygon.
\begin{figure}[h]
    \centering
    \includegraphics[scale=0.65]{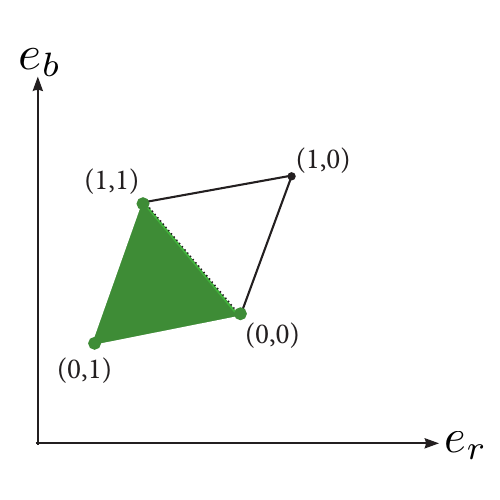}
    \caption{\footnotesize{$X$ fails generalized group balance. The fairness-accuracy set is the shaded green area. In this example, $r_X=b_X=(0,1)$ while $F_{X}$ is the line from $(0,0)$ to $(1,1)$.}} \label{fig:FnotinP}
\end{figure}

Suppose $a_{r0}=a_{b0}$ and $a_{r1} = a_{b1}$ with distinct values. Then the deterministic algorithm $(p_0,p_1)=(a_{r0},a_{r1}) \in \{(0,1),(1,0)\}$ maximizes accuracy for both groups. The corresponding vertex is simultaneously $r_X$ and $b_X$, so it is also the Pareto set $\mathcal{P}(\mathcal{E})$. But this point does not intersect the line from $(0,0)$ to $(1,1)$, so $F_{X}$ does not belong to $\mathcal{P}(\mathcal{E})$ (see Figure \ref{fig:FnotinP} for an example). We thus have the claim in one direction.
 
 \begin{figure}[h]
    \centering
    \includegraphics[scale=0.65]{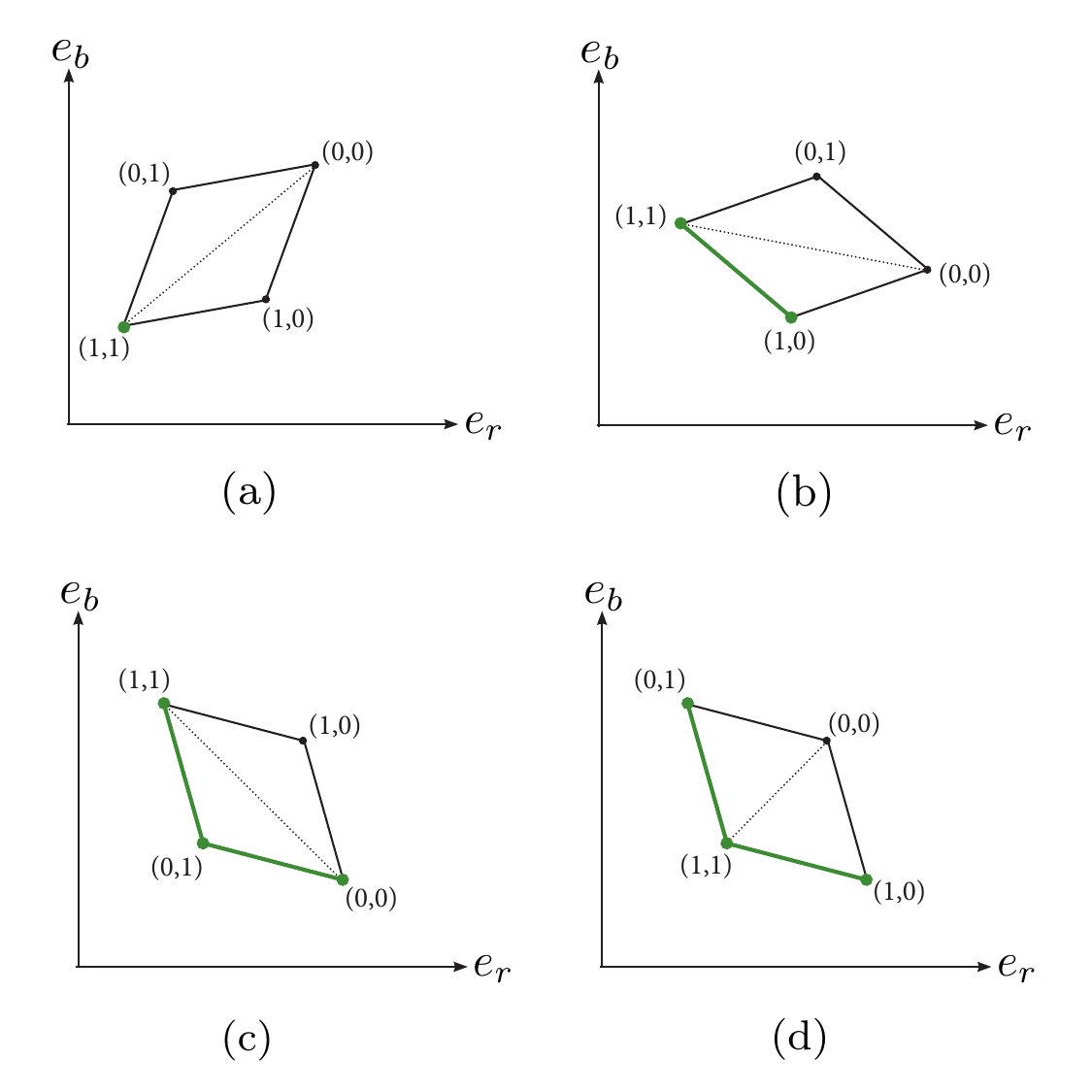}
    \caption{\footnotesize{$X$ satisfies generalized group balance. The fairness-accuracy set is the shaded green area. In all cases, $F_x=\{f_x\}$ is a singleton. In Panel (a), $r_X=b_X=f_{X}=(1,1)$. In Panels (b) and (c), $r_X=f_{X}=(1,1)$ while $b_X=(0,0)$. In Panel (d), $r_X=(0,1)$, $f_{X}=(1,1)$, and $b_X=(1,0)$.}} \label{fig:FinP}
\end{figure}
In the other direction, suppose first that  $a_{r0}=a_{b0} = a_{r1}=a_{b0}$. In this case, $(p_0,p_1)=(a_{r0},a_{r1}) \in \{(0,0),(1,1)\}$ is simultaneously $r_X$, $b_X$, and $F_{X}$, as depicted in Panel (a) of Figure \ref{fig:FinP}. Clearly $F_{X} \in \mathcal{P}(\mathcal{E})$. 

In all remaining cases, $r_X$ is different from $b_X$, so the Pareto set $\mathcal{P}(\mathcal{E}_X)$ includes at least one non-degenerate line segment.  This line segment must  include  at least one of the vertices $(0,0)$ and $(1,1)$. See Panels (b)-(d) of Figure \ref{fig:FinP} for the possible configurations. So the Pareto set intersects the line connecting $(1,1)$ and $(0,0)$, and $F_{X}$ is precisely this point of intersection. Thus $F_{X} \in \mathcal{P}(\mathcal{E})$, completing the argument.

\subsection{General Fairness Criteria} \label{sec:generalphi}

In this section, we consider the general case where fairness is evaluated using $\left|\phi\left(e_{r}\right)-\phi\left(e_{b}\right)\right|$ for some strictly increasing continuous function $\phi$. For instance, if $\phi$ is $\log$, then this reduces to using the ratio of error rates as a measure of fairness. The characterization of the fairness-accuracy frontier remains the same except the fairness optimal point $f_{X}$ may now be different. Whether it expands or contracts depends on the curvature of $\phi$ as the following proposition demonstrates.\footnote{We assume that the accuracy and fairness loss functions are the same but can generalize the results in this section via the same methodology as in Section \ref{app:DifferentLoss}.}

\begin{proposition}
Let $\mathcal{F}^{\prime}_{X}$ denote the fairness-accuracy frontier where
fairness is evaluated using 
\[
\left|\phi\left(e_{r}\right)-\phi\left(e_{b}\right)\right|
\]
for strictly increasing $\phi:\mathbb{R}\rightarrow\mathbb{R}$. Then
\begin{enumerate}
\item $\mathcal{F}_X=\mathcal{F}^{\prime}_{X}$ if
$X$ is group-balanced
\item $\mathcal{F}_X\subseteq\mathcal{F}^{\prime}_{X}$
if $X$ is group-skewed and $\phi$ in concave 
\item $\mathcal{F}_X\supseteq\mathcal{F}^{\prime}_{X}$
if $X$ is group-skewed and $\phi$ in convex
\end{enumerate}
\end{proposition}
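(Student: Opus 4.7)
The plan is to first observe that whether $X$ is group-balanced or $g$-skewed depends only on the signs of $e_r - e_b$ at the group-optimal points $r_X$ and $b_X$, a property invariant under any strictly increasing $\phi$. By the footnote to Theorem \ref{thm:FullDesignPareto}, that theorem's characterization carries over verbatim for any such $\phi$ with the fairness-optimal point suitably redefined as $f'_X = \argmin_{e \in \mathcal{E}_X}|\phi(e_r)-\phi(e_b)|$ (tie-broken in the same way). Hence both $\mathcal{F}_X$ and $\mathcal{F}'_X$ are lower boundaries of $\mathcal{E}_X$ between endpoints of the same qualitative form, and the three claims reduce to comparing these endpoints.

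Part 1 is then immediate: in the group-balanced case both frontiers equal the lower boundary from $r_X$ to $b_X$, a description that does not involve the fairness measure. For Parts 2 and 3, I would work in the $r$-skewed case (the $b$-skewed case being symmetric). If the feasible set intersects the $45^\circ$ line, then both $|e_r - e_b|$ and $|\phi(e_r)-\phi(e_b)|$ attain their common minimum value $0$ exactly on this intersection, so the tie-breaking rule picks the same $f_X = f'_X$ and the frontiers coincide. In the remaining subcase every feasible point satisfies $e_r < e_b$, and it suffices to compare the two FA-dominance relations pointwise on $\mathcal{E}_X$: if $\phi$-FA-dominance implies original FA-dominance (which I will show for concave $\phi$) then $\mathcal{F}_X \subseteq \mathcal{F}'_X$, and if the converse implication holds (for convex $\phi$) then $\mathcal{F}'_X \subseteq \mathcal{F}_X$.

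The key lemma I plan to establish is the following: for $e, e' \in \mathcal{E}_X$ with $e_r \leq e'_r$, $e_b \leq e'_b$, $e_r < e_b$, and $e'_r < e'_b$, concavity of $\phi$ yields
\[\phi(e_b) - \phi(e_r) \leq \phi(e'_b) - \phi(e'_r) \;\Longrightarrow\; e_b - e_r \leq e'_b - e'_r,\]
while convexity yields the implication $e_b - e_r \leq e'_b - e'_r \Rightarrow \phi(e_b) - \phi(e_r) \leq \phi(e'_b) - \phi(e'_r)$. To prove this I set $a = e'_r - e_r \geq 0$, $b = e'_b - e_b \geq 0$, and introduce the intermediate point $\tilde{e} = (e_r + \min(a,b),\, e_b + \min(a,b))$ which matches $e'$ in exactly one coordinate. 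I would then chain two estimates: one uses the monotonicity of $x \mapsto \phi(x+c) - \phi(x)$ in $x$ (decreasing for concave $\phi$, increasing for convex $\phi$, with $c = e_b - e_r$) to relate $\phi(\tilde{e}_b) - \phi(\tilde{e}_r)$ to $\phi(e_b) - \phi(e_r)$; the other uses strict monotonicity of $\phi$ on the remaining coordinate to relate $\phi(\tilde{e}_b) - \phi(\tilde{e}_r)$ to $\phi(e'_b) - \phi(e'_r)$. The main bookkeeping obstacle will be correctly handling the ``at least one strict'' clause in the definition of FA-dominance; I expect to dispatch this by noting that strictness in either coordinate inequality $e_r \leq e'_r$ or $e_b \leq e'_b$ carries over directly between the two dominance notions, while strictness only in the fairness-difference inequality propagates via strict monotonicity of $\phi$ in the above chain.
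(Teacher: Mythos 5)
Your proposal is correct, but for the group-skewed case it takes a genuinely different route from the paper's. The paper compares the two fairness-optimal endpoints directly: writing $e=f_{X}$ and $e'=f'_{X}$, it shows by contradiction (using concavity of $\phi$) that $e'_r \geq e_r$, so that the lower boundary from $g_X$ to $f'_X$ contains the one from $g_X$ to $f_X$, and then invokes the Theorem \ref{thm:FullDesignPareto} characterization of both frontiers. You instead establish an inclusion between the two dominance \emph{relations} on $\mathcal{E}_X$ --- for concave $\phi$, $\phi$-FA-dominance implies FA-dominance whenever all feasible points lie strictly on one side of the 45-degree line --- and read off the frontier inclusion directly from the definition of an FA-undominated set, handling the straddling case separately (there both fairness measures vanish exactly on the 45-degree line, so $f_X = f'_X$ and the frontiers coincide). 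Your key lemma is sound: with $a = e'_r - e_r$ and $b = e'_b - e_b$, the contrapositive $a > b \Rightarrow \phi(e_b)-\phi(e_r) > \phi(e'_b)-\phi(e'_r)$ follows from strict monotonicity of $\phi$ applied to the $r$-coordinate together with the (weak) monotonicity of $x \mapsto \phi(x+c)-\phi(x)$, exactly as you sketch; the convex case is symmetric. The strictness bookkeeping you worry about is in fact vacuous: if both coordinate inequalities hold with equality then both fairness comparisons are equalities too, so the strict inequality in either dominance notion must come from a coordinate, and those transfer verbatim. Your route is more elementary for parts 2 and 3, since the relation inclusion yields the set inclusion without any appeal to the geometry of the frontier there, whereas the paper's route pinpoints exactly where the two frontiers differ (the boundary segment between $f_X$ and $f'_X$); both arguments rely on the footnoted extension of Theorem \ref{thm:FullDesignPareto} to general $\phi$ for part 1 and for the degenerate subcase, which the paper's own proof also takes as given.
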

\begin{proof}Let $\mathcal{E}_X$ and $\mathcal{E}^{\prime}_{X}$
denote the feasible sets where fairness is defined using $\left|e_{r}-e_{b}\right|$
and $\left|\phi\left(e_{r}\right)-\phi\left(e_{b}\right)\right|$
respectively. Let $f_{X}$ and $f_{X}^{\prime}$ denote the corresponding
fairness optimal points. First, note that if $X$
is group-balanced, then by the same argument as Theorem 1, $\mathcal{F}_X=\mathcal{F}^{\prime}_{X}$
is the lower boundary from $r_{X}=r_{X}^{\prime}$ to $b_{X}=b_{X}^{\prime}$.

Now, suppose $X$ is $r$-skewed without loss. Let $e$ and $e^{\prime}$
correspond to $f_{X}$ and $f_{X}^{\prime}$ so
\begin{alignat*}{1}
e_{b}-e_{r} & \leq e_{b}^{\prime}-e_{r}^{\prime}\\
\phi\left(e_{b}^{\prime}\right)-\phi\left(e_{r}^{\prime}\right) & \leq\phi\left(e_{b}\right)-\phi\left(e_{r}\right)
\end{alignat*}
First, suppose $\phi$ is concave. We will show that $e_{r}^{\prime}\geq e_{r}$.
Suppose by contradiction that $e_{r}^{\prime}<e_{r}$ so $\phi\left(e_{r}^{\prime}\right)<\phi\left(e_{r}\right)$.
Thus,
\[
\phi\left(e_{b}^{\prime}\right)-\phi\left(e_{b}\right)\leq\phi\left(e_{r}^{\prime}\right)-\phi\left(e_{r}\right)<0
\]
so $e_{b}^{\prime}<e_{b}$. Thus, we have $e_{r}^{\prime}\leq e_{b}^{\prime}<e_{b}$.
Note that
\[
e_{b}^{\prime}=\lambda e_{b}+\left(1-\lambda\right)e_{r}^{\prime}
\]
where
\[
\lambda:=\frac{e_{b}^{\prime}-e_{r}^{\prime}}{e_{b}-e_{r}^{\prime}}
\]
We thus have
\begin{alignat*}{1}
\phi\left(e_{b}\right)-\phi\left(e_{r}\right)+\phi\left(e_{r}^{\prime}\right) & \geq\phi\left(e_{b}^{\prime}\right)=\phi\left(\lambda e_{b}+\left(1-\lambda\right)e_{r}^{\prime}\right)\\
 & \geq\lambda\phi\left(e_{b}\right)+\left(1-\lambda\right)\phi\left(e_{r}^{\prime}\right)\\
\left(1-\lambda\right)\left(\phi\left(e_{b}\right)-\phi\left(e_{r}^{\prime}\right)\right) & \geq\phi\left(e_{r}\right)-\phi\left(e_{r}^{\prime}\right)\\
\left(e_{b}-e_{b}^{\prime}\right)\frac{\phi\left(e_{b}\right)-\phi\left(e_{r}^{\prime}\right)}{e_{b}-e_{r}^{\prime}} & \geq\phi\left(e_{r}\right)-\phi\left(e_{r}^{\prime}\right)
\end{alignat*}
where the second inequality follows from the fact that $\phi$ is
concave. Since $e_{r}-e_{r}^{\prime}\geq e_{b}-e_{b}^{\prime}$, this
implies
\[
\frac{\phi\left(e_{b}\right)-\phi\left(e_{r}^{\prime}\right)}{e_{b}-e_{r}^{\prime}}\geq\frac{\phi\left(e_{r}\right)-\phi\left(e_{r}^{\prime}\right)}{e_{r}-e_{r}^{\prime}}
\]
Since $X$ is $r$-skewed, $e_b \geq e_{r}> e_{r}^{\prime}$. Since $\phi$ is concave, the above inequality
must be satisfied with equality. This means that
\[
\left(e_{b}-e_{b}^{\prime}\right)\frac{\phi\left(e_{b}\right)-\phi\left(e_{r}^{\prime}\right)}{e_{b}-e_{r}^{\prime}}\geq \phi\left(e_{r}\right)-\phi\left(e_{r}^{\prime}\right) =\left(e_{r}-e_{r}^{\prime}\right)\frac{\phi\left(e_{b}\right)-\phi\left(e_{r}^{\prime}\right)}{e_{b}-e_{r}^{\prime}}
\]
so $e_{b}-e_{b}^{\prime}=e_{r}-e_{r}^{\prime}$ or $e_{b}-e_{r}=e_{b}^{\prime}-e_{r}^{\prime}$.
But $e$ corresponds to $f_{X}$ and since $e^{\prime}$ achieves the same fairness as $e$, it must be that $e_{r}\leq e_{r}^{\prime}$. This contradicts our assumption that $e_{r}^{\prime}< e_{r}$. Thus, $e_{r}^{\prime}\geq e_{r}$ and  by
the same argument characterizing the FA frontier as in Theorem 1, $\mathcal{F}_X\subseteq\mathcal{F}^{\prime}_{X}$.
The case for when $\phi$ is convex is symmetric.
\end{proof}

\subsection{Adversarial Agents } \label{app:ExtendBD}

We now consider the problem outlined in Section \ref{sec:DesignInputs}, when one of the weights $\alpha_r,\alpha_b$ is negative.\footnote{It is straightforward also to consider the case where both weights are negative, but we do not consider this setting to be practically relevant.} Without loss, let $\alpha_r > 0 > \alpha_b$, reflecting an adversarial agent who prefers for group $b$'s error to be higher.
 The first half of Lemma \ref{lemm:BayesDesign} extends fully.

\begin{lemma} \label{lemm:feasiblesetH-Adv}For every covariate vector $X$, $\mathcal{E}^{*}_{X} = \mathcal{E}_X \cap H$.
\end{lemma}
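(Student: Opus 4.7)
The plan is to reuse the proof of Lemma \ref{lemm:BayesDesign} essentially verbatim, verifying that none of the algebraic steps relied on the nonnegativity of $\alpha_b$. The statement only concerns the feasible set (not the frontier), so no order-related argument is involved.

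For the inclusion $\mathcal{E}^{*}_{X} \subseteq \mathcal{E}_X \cap H$, I would first note that any garbling $T$ followed by an algorithm on $\mathcal{T}$ induces an algorithm on $\mathcal{X}$, so every error pair implementable under input design lies in $\mathcal{E}_X$. Next, the agent optimally responding to $T$ can always mimic the best no-information decision, so the agent's weighted error $\alpha_r e_r + \alpha_b e_b$ at any implemented pair is at most $e_0$; this is the definition of $H$. This step makes no use of the signs of $\alpha_r, \alpha_b$.

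For the reverse inclusion, I would mirror the construction in Lemma \ref{lemm:BayesDesign}: given $e = e(a) \in \mathcal{E}_X \cap H$, define the garbling $T: \mathcal{X} \to \Delta(\mathcal{D})$ by $T(x) = a(x)$, interpreting realizations of $T$ as recommended decisions. It then suffices to verify that $a$ satisfies the obedience constraint \eqref{eq:choiceDM}. As in the earlier proof, the obedience constraint for the recommendation $d=1$ rearranges into the statement that the agent's payoff $\alpha_r e_r(a) + \alpha_b e_b(a)$ is weakly better than the payoff from the constant decision $d=0$; symmetrically, the constraint for $d=0$ compares against constant $d=1$. Together they are equivalent to $\alpha_r e_r + \alpha_b e_b \leq e_0$, which holds by hypothesis $e \in H$.

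The only step that might look suspect in the adversarial case is the linear rewriting that turns the two pointwise obedience inequalities into the single half-space condition. But this rewriting is purely linear in the $\alpha_g$'s and does not require them to share a sign, so it applies unchanged when $\alpha_b < 0$. Hence no genuinely new argument is needed, and the main obstacle is simply bookkeeping to confirm that the sign of $\alpha_b$ never enters a direction-sensitive inequality.
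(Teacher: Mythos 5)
Your proposal is correct and matches the paper's (implicit) argument: the paper simply asserts that the feasible-set half of Lemma \ref{lemm:BayesDesign} "extends fully," and your verification that neither the mimicking argument for $\mathcal{E}^{*}_{X} \subseteq \mathcal{E}_X \cap H$ nor the obedience-constraint rewriting for the converse ever uses the sign of $\alpha_b$ is precisely the check that assertion rests on. No gap.
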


\noindent But the analogous equivalence for the FA frontier does not extend. Instead, similar to the development of $r_X$, $b_X$, and $f_{X}$,  define \[\displaystyle g^*_X := \argmin_{e \in \mathcal{E}^{*}_{X}} e_g\]
to be the feasible point in $\mathcal{E}^{*}_{X}$ that minimizes group $g$'s error (breaking ties by minimizing the other group's error), and define
\[\displaystyle f^*_X := \argmin_{e \in \mathcal{E}^{*}_{X}} \vert e_r - e_b \vert\]
to be the point that minimizes the absolute difference between group errors (breaking ties by minimizing either group's error).

\begin{definition} Covariate vector $X$ is:
\begin{itemize}
    \item \emph{input-design $r$-skewed} if $e_r< e_b$ at $r^*_{X}$ and $e_r\leq e_b$ at $b^*_{X}$
    \item \emph{input-design $b$-skewed} if $e_b< e_r$ at $b^*_{X}$ and $e_b\leq e_r$ at $r^*_{X}$
    \item \emph{input-design group-balanced} otherwise
\end{itemize}
\end{definition}

\noindent The proof for Theorem \ref{thm:FullDesignPareto} applies for any compact and convex feasible set, and so directly implies:

\begin{theorem} \label{thm:AdversarialPF} The input-design fairness-accuracy (FA) frontier $\mathcal{F}^{*}_{X}$ is the lower boundary of the input-design feasible set $\mathcal{E}^{*}_{X}$ between
\begin{itemize}
\item[(a)] $r^*_{X}$ and $b^*_{X}$ if $X$ is input-design group-balanced
\item[(b)] $g^*_X$ and $f^*_X$ if $X$ is input-design $g$-skewed
\end{itemize}
\end{theorem}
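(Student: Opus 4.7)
The plan is to reduce this theorem to Theorem \ref{thm:FullDesignPareto} by observing that the proof of Theorem \ref{thm:FullDesignPareto} only uses compactness and convexity of the feasible set together with the defining optimality properties of $r_X$, $b_X$, and $f_X$. If I can verify that $\mathcal{E}^{*}_{X}$ enjoys the analogous structural properties with the starred points, then the earlier argument transfers verbatim.

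First I would check that $\mathcal{E}^{*}_{X}$ is compact and convex. By Lemma \ref{lemm:feasiblesetH-Adv}, $\mathcal{E}^{*}_{X} = \mathcal{E}_X \cap H$, where $\mathcal{E}_X$ is a closed convex polygon by Lemma \ref{lemm:ConvexPolygon} and $H = \{(e_r,e_b) : \alpha_r e_r + \alpha_b e_b \leq e_0\}$ is a closed halfspace (convexity of $H$ does not depend on the sign of $\alpha_b$). Hence $\mathcal{E}^{*}_{X}$ is the intersection of a compact convex set with a closed halfspace, so it is itself compact and convex. Continuity of $e_r$, $e_b$, and $|e_r - e_b|$ then ensures that $r^*_X$, $b^*_X$, and $f^*_X$ are attained, with the stated lexicographic tie-breaks making them unique. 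The definitions of input-design group-balance and input-design skew are literal transcriptions of the unconstrained definitions with the starred points in place of the unstarred ones.

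Second, I would replay the three-part argument in the proof of Theorem \ref{thm:FullDesignPareto}, with $\mathcal{E}_X$ replaced by $\mathcal{E}^{*}_{X}$: (i) any interior point $e$ of $\mathcal{E}^{*}_{X}$ is FA-dominated by $e - (\varepsilon,\varepsilon)$ for small $\varepsilon>0$, so $\mathcal{F}^{*}_{X}$ lies on the boundary; (ii) in the input-design group-balanced case, $r^*_X$ and $b^*_X$ sit on opposite (closed) sides of the 45-degree line, and the same supporting-line argument shows that the ``upper'' boundary arc between them is FA-dominated (via a point on the same $45^\circ$-parallel line) while each point of the ``lower'' arc uniquely minimizes some strictly positive linear combination $\alpha e_r + \beta e_b$, ruling out FA-dominance; (iii) in the input-design $g$-skewed case, $r^*_X$ and $b^*_X$ lie on the same side of the 45-degree line, so $f^*_X$ is distinct from $g^*_X$ and the frontier picks up a positively sloped arc from the usual Pareto endpoint on that side to $f^*_X$, with boundary points between them FA-undominated because any strictly ``more fair'' feasible point must lie even further from that endpoint by convexity.

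The main obstacle I anticipate is essentially cosmetic: the set $\mathcal{E}^{*}_{X}$ can acquire a new flat edge along the hyperplane $\alpha_r e_r + \alpha_b e_b = e_0$ that was not present in $\mathcal{E}_X$, so I want to be sure the proof of Theorem \ref{thm:FullDesignPareto} nowhere invokes the polygonal origin of $\mathcal{E}_X$ beyond closedness and convexity. A careful re-reading confirms this: every supporting-line/$45^\circ$-parallel construction in that proof uses only convexity of the feasible set and the minimality/tie-breaking properties of $r^*_X$, $b^*_X$, $f^*_X$, which continue to hold here. Consequently the two cases in Theorem \ref{thm:AdversarialPF} follow directly.
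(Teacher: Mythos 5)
Your proposal is correct and matches the paper's own treatment: the paper proves this theorem by the single observation that the proof of Theorem \ref{thm:FullDesignPareto} applies to any compact convex feasible set, and since $\mathcal{E}^{*}_{X}=\mathcal{E}_X\cap H$ is the intersection of a compact convex polygon with a closed halfspace, the argument transfers with the starred points. Your additional checks (compactness/convexity of the intersection, attainment of $r^*_X$, $b^*_X$, $f^*_X$, and that the original proof never uses the polygonal structure beyond convexity) are exactly the verifications the paper leaves implicit.
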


We can use this characterization to extend our result from Section \ref{sec:GNew}.
 
 \begin{definition} $X$ is \emph{strictly input-design-group-balanced} if $e_r<e_b$ at $r^*_{X}$ and $e_b<e_r$ at $b^*_{X}$.
\end{definition}

 \begin{proposition}
     Suppose $\alpha_r > 0 > \alpha_b$ and $X$ is strictly input-design group-balanced. Then excluding $G$ over $X$ uniformly worsens the frontier.
 \end{proposition}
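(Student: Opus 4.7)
The plan is to follow the template of the proof of Proposition \ref{prop:ExcludeG}, adapting it to the adversarial setting. By Lemma \ref{lemm:UniformWorsen}—whose proof uses only the inclusion $\mathcal{E}^{*}_X\subseteq\mathcal{E}^{*}_{X,G}$ together with compactness, both still valid here—it suffices to show $\mathcal{F}^{*}_X\cap\mathcal{F}^{*}_{X,G}=\emptyset$. By Theorem \ref{thm:AdversarialPF} and the strict input-design group-balance hypothesis, $\mathcal{F}^{*}_X$ is the lower boundary of $\mathcal{E}^{*}_X=\mathcal{E}_X\cap H$ connecting $r^{*}_X$ (strictly above the $45^\circ$ line) to $b^{*}_X$ (strictly below), a monotonically decreasing curve that crosses the $45^\circ$ line. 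As in Proposition \ref{prop:ExcludeG}, the marginal ranges $[\underline{e}_r,\overline{e}_r]$ and $[\underline{e}_b,\overline{e}_b]$ are unchanged when $G$ is added to $X$, since a group-$g$-optimal algorithm can always be implemented using $X$ alone. Hence $\mathcal{E}_{X,G}$ equals the rectangle $R=[\underline{e}_r,\overline{e}_r]\times[\underline{e}_b,\overline{e}_b]$, and $\mathcal{E}^{*}_{X,G}=R\cap H$, which strictly contains $\mathcal{E}^{*}_X=\mathcal{E}_X\cap H$ under the strict input-design group-balance assumption.

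The core of the plan is to show, for each $p\in\mathcal{F}^{*}_X$, that there is a point $p'\in(R\setminus\mathcal{E}_X)\cap H$ with $p'>_{FA}p$; then $p\notin\mathcal{F}^{*}_{X,G}$. Since $p$ lies on the lower boundary of $\mathcal{E}_X\cap H$, small moves ``below'' or ``left of'' $p$ stay inside $R$ but exit $\mathcal{E}_X$; the requirement is that they remain in $H$ and FA-dominate $p$. Concretely, if $p$ is strictly above the $45^\circ$ line and strictly interior to $H$, I take $p'=(p_r,p_b-\epsilon)$ for small $\epsilon>0$: reducing $e_b$ while staying above the $45^\circ$ line shrinks the fairness gap, giving $p'>_{FA}p$. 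If $p$ is strictly below the $45^\circ$ line and interior to $H$, I use the symmetric move $p'=(p_r-\epsilon,p_b)$. If $p$ lies on the $H$-boundary $\alpha_re_r+\alpha_be_b=e_0$, then since this boundary has positive slope $\alpha_r/|\alpha_b|$, I move along it in the direction of decreasing both coordinates; strict input-design group-balance ensures enough slack between $\mathcal{E}_X$ and $R$ so that the move lands in $R\setminus\mathcal{E}_X$ rather than back into $\mathcal{E}_X$.

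The main obstacle is the sub-case in which $p$ lies on the $H$-boundary, which is absent from the cooperative analogue in Proposition \ref{prop:ExcludeG}. Unlike the cooperative case where $H$ comfortably contains $r^{*}_X$ and $b^{*}_X$ in its interior, here the $H$-boundary can coincide with pieces of $\mathcal{F}^{*}_X$, and moving purely along the $H$-line does not automatically shrink the absolute fairness gap $|e_r-e_b|$ (because the $H$-line's slope $\alpha_r/|\alpha_b|$ need not equal $1$). The delicate step is therefore to combine the along-boundary move with a small inward perturbation into the interior of $H$, chosen so that the net displacement both (i) preserves $H$-feasibility and (ii) yields a strict reduction in the fairness gap. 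The construction differs depending on which side of the $45^\circ$ line $p$ lies, and strict input-design group-balance (as opposed to mere input-design group-balance) is essential to guarantee that, on both sides of the $45^\circ$ line, there is enough interior room in $R\cap H$ near $\mathcal{F}^{*}_X$ for such a perturbation to exist.
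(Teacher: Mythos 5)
Your overall strategy matches the paper's: reduce to showing that each point of $\mathcal{F}^{*}_{X}$ is FA-dominated inside $\mathcal{E}^{*}_{X,G}=R\cap H$, using the facts that the marginal error ranges are unchanged by adding $G$ and that $\mathcal{E}_{X,G}$ is the rectangle $R$. The two coordinate-wise perturbations you give (drop $e_b$ above the $45$-degree line, drop $e_r$ below it) are also the right moves at the endpoints $r^{*}_{X}$ and $b^{*}_{X}$. But as written the argument has two holes. First, a frontier point lying exactly on the $45$-degree line is covered by neither of your ``strictly above''/``strictly below'' cases, and for such a point you cannot reduce only one coordinate (that would create a positive gap); you must move diagonally to $(p_r-\epsilon,p_b-\epsilon)$, and since $\alpha_r+\alpha_b$ may be negative this move can push the agent's objective $\alpha_r e_r+\alpha_b e_b$ upward, so membership in $H$ requires an argument. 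Second, the case you yourself identify as the crux---$p$ on the boundary of $H$---is only sketched: ``move along the $H$-line plus a small inward perturbation'' is not a construction, and you never verify that the net displacement lands in $R$, stays in $H$, and strictly shrinks $\vert e_r-e_b\vert$.

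Both holes close with one observation the paper uses and you do not: compare to $b^{*}_{X}$. Along the (strictly decreasing) frontier, every $p\neq b^{*}_{X}$ has strictly smaller $e_r$ and strictly larger $e_b$ than $b^{*}_{X}$; since $\alpha_r>0>\alpha_b$, both of these differences strictly decrease $\alpha_r e_r+\alpha_b e_b$, so $(p_r-\epsilon,p_b-\epsilon)$ lies in $H$ for small $\epsilon$ simply because $b^{*}_{X}\in H$---no interiority of $p$ in $H$ is needed, and this simultaneously handles the $45$-degree-line points. Moreover, your worry that $\partial H$ may coincide with pieces of $\mathcal{F}^{*}_{X}$ cannot materialize: $\partial H$ has slope $-\alpha_r/\alpha_b>0$ while the frontier is strictly decreasing, so they meet in at most one point, and since $\mathcal{E}^{*}_{X}$ lies above both supporting lines at such a point, that point minimizes $e_b$ and hence equals $b^{*}_{X}$ (strict input-design group-balance rules out $r^{*}_{X}$, which would force $r^{*}_{X}=b^{*}_{X}$). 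At $b^{*}_{X}$ the move $(p_r-\epsilon,p_b)$ strictly decreases the agent's objective because $\alpha_r>0$, so it stays in $H$ even starting from $\partial H$. With these two facts your ``delicate step'' dissolves and the proof goes through; without them, the proposal as stated is incomplete.
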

 
This result says that, perhaps surprisingly, even if the agent choosing the algorithm has adversarial motives against one of the groups, the designer may still prefer to send information about group identity. The notion of group-balanced covariate vectors, suitably adapted to the input design setting, again serves as a sufficient condition for uniform worsening of the frontier when excluding $G$.  
\begin{proof}
By assumption that $X$ is strictly input-design group-balanced, the input-design FA frontier given $X$ is the lower boundary of $\mathcal{E}^{*}_{X}$ from $r^*_{X}$ to $b^*_{X}$, which consists of negatively sloped edges. We will show that every point on this frontier is FA-dominated by some point in $\mathcal{E}^{*}_{X,G}$. 

If this point $(e_r, e_b)$ is distinct from $b^*_{X}$ and $r^*_{X}$, then we claim that for sufficiently small positive $\epsilon$, the point $(e_r - \epsilon, e_b - \epsilon)$ belongs to $\mathcal{E}^{*}_{X,G}$. Indeed, $(e_r - \epsilon, e_b - \epsilon)$ belongs to the unconstrained feasible set $\mathcal{E}_{X,G}$ because this feasible set is a rectangle, and $e_r - \epsilon$, $e_b - \epsilon$ are within the minimal and maximal group errors achievable given $X$. Moreover, $(e_r, e_b)$ must have smaller group-$r$ error and larger group-$b$ error compared to $b^*_{X}$, which means the same is true for $(e_r - \epsilon, e_b - \epsilon)$. Since $\alpha_r >  0 > \alpha_b$, the point $(e_r - \epsilon, e_b - \epsilon)$ must belong to $H$
 given that $b^*_{X}$ does. 
Hence when $(e_r, e_b)$ differs from $b^*_{X}$ and $r^*_{X}$, it is FA-dominated by $(e_r - \epsilon, e_b - \epsilon) \in \mathcal{E}^{*}_{X,G}$. 

Suppose now that $(e_r, e_b) = b^*_{X}$. Then by similar argument it is FA-dominated by $(e_r - \epsilon, e_b) \in \mathcal{E}^{*}_{X,G}$. Finally if $(e_r, e_b) = r^*_{X}$, then it is FA-dominated by $(e_r, e_b - \epsilon) \in \mathcal{E}^{*}_{X,G}$. In all these cases the FA frontier uniformly worsens when excluding $G$, completing the proof.
\end{proof}

\subsection{Fairness Criteria in the Literature} \label{sec:FairCriteria}

We review here certain fairness criteria that have appeared in the literature, and explain how these criteria can be accommodated within our framework.

\subsubsection{Statistical Parity.} This criterion seeks equality in decisions, namely that the proportion of either group receiving the two decisions is the same \citep{DworkHardtPitassiReingoldZemel}. Formally, an algorithm $a$ satisfies statistical parity if
\[\mathbb{E}(a(X)=1 \mid G=r) - \mathbb{E}(a(X)=1 \mid G=b) =0\]
The loss function 
\[\ell(d,y) = \left\{\begin{array}{cc}
1 & \mbox{ if } d=1 \\
0 & \mbox{ otherwise} \end{array} \right.\]
returns a relaxed version of this criterion, since 
\[e_g(a)= 
 \mathbb{E}\left[\ell(a(X),Y) \mid G=g\right] =  \mathbb{E}\left[a(X)=1 \mid G=g\right]
\]
 so $\vert e_r(a) - e_b(a) \vert$ is the absolute difference in the probability that a group-$r$ individual and a group-$b$ individual receive the decision $d=1$.

\subsubsection{False Positives.} Another common fairness criterion is equality of false positives across two groups \citep{ProPublica,chouldechova,KMR}. For example, among borrowers who would not have defaulted on their loan if approved, prediction of default should be equal across the two groups. Formally, an algorithm $a$ satisfies equality of false positive rates if
\[\mathbb{P}(a(X)=1 \mid Y=0,  G=r) - \mathbb{P}(a(X)=1 \mid Y=0, G=b)=0\]
To see how we can accommodate this, consider the group-dependent loss function
\[
\ell_{g}\left(d,y\right)=\frac{\mathbf{1}\left\{ d=1,y=0\right\} }{\mathbb{P}\left(\left.Y=0\right|G=g\right)},
\]
In this case, we obtain
\begin{alignat*}{1}
e_{g}\left(a\right) & =\frac{\mathbb{P}\left(\left.a\left(X\right)=1,Y=0\right|G=g\right)}{\mathbb{P}\left(\left.Y=0\right|G=g\right)}\\
 & =\frac{\mathbb{P}\left(a\left(X\right)=1,Y=0,G=g\right)}{\mathbb{P}\left(Y=0,G=g\right)}\\
 & =\mathbb{P}\left(\left.a\left(X\right)=1\right|Y=0,G=g\right)
\end{alignat*}
is the false-positive rate for group $g$, 
and so $\vert e_r(a) - e_b(a) \vert$ is the absolute difference in false positive rates. A fairness criterion based on the difference in false negative rates can be accommodated similarly.

An alternative (unconditional) false positive rate would be equalizing
\[\mathbb{P}(a(X)=1, Y=0 \mid  G=r) - \mathbb{P}(a(X)=1,  Y=0 \mid  G=b)=0\]
This is achieved by setting the loss function 
\[\ell(d,y) = \left\{\begin{array}{cc}
1 & \mbox{ if } (d,y)=(1,0)\\
0 & \mbox{ otherwise} \end{array} \right.\]
since in this case
\[e_g(a)= 
 \mathbb{E}\left[\ell(a(X),Y) \mid G=g\right] =  \mathbb{P}\left[a(X)=1,Y=0 \mid G=g\right].
\]

\subsubsection{Equalized Odds.} \label{app:EqualizedOdds} Another popular fairness criterion asks for equalized odds \citep{HardtPriceSrebro}, which an algorithm $a$ satisfies if
\begin{equation} \label{eq:EqOdds}
\mathbb{E}_Y[ \mathbb{E}_X[a(X) \mid G=r, Y] -  \mathbb{E}_X[a(X) \mid G=b, Y]] =0
\end{equation}
The inner difference compares the average decision for group-$r$ and group-$b$ individuals who share the same type $Y$, and the outer expectation averages over those values of $Y$.
 
 The group-dependent loss function 
\[\ell(d,y,g) = \left\{ \begin{array}{cc}
    \frac{P(Y=y)}{P(Y=y \mid G=g)} & \mbox{ if } d=1 \\
    0 & \mbox{otherwise} 
\end{array}\right.\]
returns a relaxed version of this criterion, since
\begin{align*}
\mathbb{E}[\ell(d,y,g)  \mid G=r] &= 
P(Y=0 \mid G=r) \times \mathbb{E}\left[ \frac{P(Y=0)}{P(Y=0 \mid G=r) } \times \mathbbm{1}(d=1) \mid G=r, Y=0 \right] \\
& \quad + P(Y=1 \mid G=r) \times \mathbb{E}\left[ \frac{P(Y=1)}{P(Y=1 \mid G=r) } \times \mathbbm{1}(d=1) \mid G=r, Y=1 \right] \\
& = P(Y=0) \times \mathbb{E}[\mathbbm{1}(d=1) \mid G=r, Y=0] \\
& \hspace{20mm} + P(Y=1) \times \mathbb{E}[\mathbbm{1}(d=1) \mid G=r, Y=1]
\end{align*}
so $\vert \mathbb{E}[\ell(a(X),Y,G) \mid G=r]  - \mathbb{E}[\ell(a(X),Y,G) \mid G=b] \vert$ is exactly the LHS of (\ref{eq:EqOdds}). As discussed in footnote \ref{fn:GroupDependence}, all of our results hold also for this group-dependent loss function.

\end{document}